\DeclareMathOperator*{\argmin}{argmin}
\newtheorem{definition}{Definition}
\newtheorem{theorem}{Theorem}
\newtheorem{lemma}[theorem]{Lemma}
\newtheorem{proposition}[theorem]{Proposition}
\newtheorem{corollary}[theorem]{Corollary}
\begin{document}

\title{Minimizing the Age of Information through Queues}

\author{\large Ahmed M. Bedewy, Yin Sun, \emph{Member, IEEE}, and Ness B. Shroff, \emph{Fellow, IEEE}
\thanks{This paper was presented in part at IEEE ISIT 2016 \cite{age_optimality_multi_server}.}
\thanks {This work has been supported in part by ONR grants N00014-17-1-2417 and N00014-15-1-2166, Army Research Office grants W911NF-14-1-0368 and MURI W911NF-12-1-0385,  National Science Foundation grants CNS-1446582, CNS-1421576, CNS-1518829, and CCF-1813050, and a grant from the Defense Thrust Reduction Agency HDTRA1-14-1-0058.}
\thanks{A. M. Bedewy is with the  Department  of  ECE,  The  Ohio  State  University, Columbus, OH 43210 USA (e-mail:  bedewy.2@osu.edu).}
\thanks{Y.  Sun  is  with  the  Department  of  ECE,  Auburn  University,  Auburn,  AL 36849 USA (e-mail:  yzs0078@auburn.edu).}
\thanks{N. B.  Shroff  is  with  the  Department  of  ECE and  the  Department  of  CSE, The Ohio State University,  Columbus, OH 43210 USA  (e-mail:  shroff.11@osu.edu).}

}
\maketitle
\begin{abstract}

In this paper, we investigate scheduling policies that minimize the age of information in single-hop queueing systems. We propose a Last-Generated, First-Serve (LGFS) scheduling policy, in which the packet with the earliest generation time is processed with the highest priority. If the service times are \emph{i.i.d.} exponentially distributed, the preemptive LGFS policy is proven to be age-optimal in a stochastic ordering sense. If the service times are \emph{i.i.d.} and satisfy a New-Better-than-Used (NBU) distributional property, the non-preemptive LGFS policy is shown to be within a constant gap from the optimum age performance. These age-optimality results are quite general: (i) They hold for arbitrary packet generation times and arrival times (including out-of-order packet arrivals), (ii) They hold for multi-server packet scheduling with the possibility of replicating a packet over multiple servers, (iii) They hold for minimizing not only the time-average age and mean peak age, but also for minimizing the age stochastic process and any non-decreasing functional of the age stochastic process. If the packet generation time is equal to packet arrival time, the LGFS policies reduce to the Last-Come, First-Serve (LCFS) policies. Hence, the age optimality results of LCFS-type policies are also established.

\end{abstract}
\section{Introduction}\label{Int}
The ubiquity of mobile devices and applications has greatly boosted the demand for real-time information updates, such as news, weather reports, email notifications, stock quotes, social updates, mobile ads, etc. Also, timely status updates are crucial in networked monitoring and control systems. These include, but are not limited to, sensor networks used to measure temperature or other physical phenomena, and surrounding monitoring in autonomous driving.

A common need in these real-time applications is to keep the destination (i.e., information consumer) updated with the freshest information. To identify the timeliness of the updates, a metric called the \emph{age-of-information}, or simply \emph{age}, was defined in, e.g., \cite{adelberg1995applying,cho2000synchronizing,golab2009scheduling,KaulYatesGruteser-Infocom2012}. At time $t$, if the packet with the largest generation time at the destination was generated at time $U(t)$, the age $\Delta(t)$ is defined as
\begin{equation}
\Delta(t)=t-U(t). 
\end{equation}
 Hence, age is the time elapsed since the freshest received packet was generated.


In recent years, a variety of approaches have been investigated to reduce the age. In \cite{KaulYatesGruteser-Infocom2012,2012ISIT-YatesKaul,2015ISITHuangModiano},  it was found in First-Come, First-Serve (FCFS) queueing systems that the time-average age first decreases with the update frequency and then increases with the update frequency. The optimal update frequency was obtained to minimize the age in FCFS systems. In \cite{CostaCodreanuEphremides_TIT,CostaCodreanuEphremides2014ISIT,Icc2015Pappas}, it was shown that the age can be further improved by discarding old packets waiting in the queue when a new sample arrives. Characterizing the age in Last-Come, First-Serve (LCFS) queueing systems with gamma distributed service times was considered in \cite{Gamma_dist}. However, these studies cannot tell us (i) which queueing discipline can minimize the age and (ii) under what conditions the minimum age is achievable.


\begin{figure}
\includegraphics[scale=0.35]{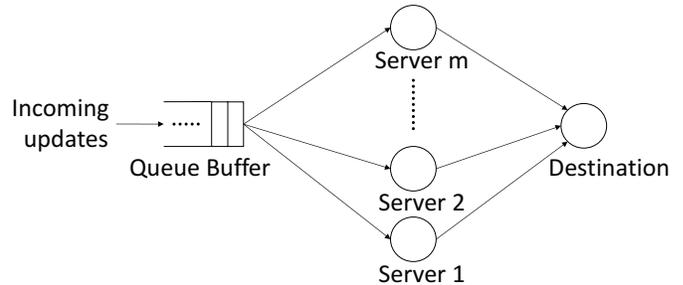}
\centering
\caption{System model.}\label{Fig:sysMod}
\vspace{-0.3cm}
\end{figure}


In this paper, we answer these two questions for an information-update system illustrated in Fig. \ref{Fig:sysMod}, where a sequence of update packets arrive at a queue with $m$ servers and a buffer size $B$. Each server can be used to model a channel in multi-channel communication systems \cite{Shroff_multichannel}, or a computer in parallel computing systems \cite{kumar1994introduction}. The service times of the update packets are \emph{i.i.d.} across servers and the packets assigned to the same server. Let $s_i$ be the generation time of the update packet $i$ at an external source, and $a_i$ be the arrival time of the update packet $i$ at the queue. Out-of-order packet arrivals are allowed, such that the packets may arrive in an order different from their generation times, e.g., $s_i < s_j$ but $a_j < a_i$. Packet replication \cite{Chen_queue_coding,yin_data_retrieve,sun2016delay} is considered in this study. In particular, multiple replicas of a packet can be assigned to different servers, at possibly different service starting time epochs. The first completed replica is considered as the valid execution of the packet; after that, the remaining replicas of this packet are cancelled immediately to release the servers. Suppose that a packet can be replicated on at most $r$ servers ($r\leq m$), where $r$ is called the maximum replication degree. If $r = 1$, this reduces to the case where replication is not allowed at all.  We propose a Last-Generated, First-Serve (LGFS) scheduling policy, in which the packet with the earliest generation time is served with the highest priority. The following are the key contributions of this paper:
\begin{itemize}
\item If the packet service times are \emph{i.i.d.} exponentially distributed, then for \emph{arbitrary} system parameters (including \emph{arbitrary} packet generation times $s_i$, packet arrival times $a_i$, number of servers $m$, maximum replication degree $r$, and buffer size $B$), we prove that the preemptive LGFS with replication (prmp-LGFS-R) policy minimizes the age stochastic process  and any non-decreasing functional of the age stochastic process among all policies in a stochastic ordering sense (Theorem \ref{thm1}). Note that this age penalty model is very general. Many age penalty metrics studied in the literature, such as the time-average age  \cite{KaulYatesGruteser-Infocom2012,RYatesTIT16,KamKompellaEphremidesTIT,Bo_pull_model,CostaCodreanuEphremides2014ISIT,2012ISIT-YatesKaul,CostaCodreanuEphremides_TIT,Icc2015Pappas,BacinogCeranUysal_Biyikoglu2015ITA,2015ISITYates,Gamma_dist}, average peak age \cite{CostaCodreanuEphremides2014ISIT,2015ISITHuangModiano,CostaCodreanuEphremides_TIT,BacinogCeranUysal_Biyikoglu2015ITA,Gamma_dist,age_with_delivery_error}, and time-average age penalty function \cite{generat_at_will,SunJournal2016}, are special cases of this age penalty model. 

\item We further investigate a more general class of packet service time distributions called New-Better-than-Used (NBU) distributions. We show that the non-preemptive Last-Generated, First-Serve with replication (non-prmp-LGFS-R) policy is within a constant age gap from the optimum average age, and that the gap is independent of the system parameters mentioned above (Theorem \ref{thmNBU1}). Note that  policy non-prmp-LGFS-R with a maximum replication degree $r$ can be near age-optimal compared with policies with any maximum replication degree. This result was not anticipated: In \cite{sun2016delay, shah2016redundant, joshi2017efficient}, it was shown that non-replication policies are near delay-optimal and replication policies are far from the optimum delay and throughput performance for NBU service time distributions. From these studies, one would expect that replications may worsen the age performance. To our surprise, however, we found that a replicative policy (i.e., non-prmp-LGFS-R) is near-optimal in minimizing the age, even for NBU service time distributions.

\item For a special case of the system settings where the update packets arrive in the same order of their generation times and there is no replication, the prmp-LGFS-R policy reduces to LCFS with preemption in service for a single source case in \cite{RYatesTIT16}, and the non-prmp-LGFS-R when $B=1$ reduces to LCFS with preemption only in waiting for a single source case in \cite{RYatesTIT16}, or the ``M/M/1/2*'' in \cite{CostaCodreanuEphremides_TIT,CostaCodreanuEphremides2014ISIT}. Hence, our optimality results are also established for these LCFS-type policies. This relationship tells us that this policy can achieve age-optimality in this case. 

\item Finally, we investigate the throughput and delay performance of the proposed policies. We show that if the packet service times are i.i.d. exponentially distributed, then the prmp-LGFS-R policy is also throughput and delay optimal among all policies (Theorem \ref{thm2th}). In addition, if the packet service times are \emph{i.i.d.} NBU and replications are not allowed, then the non-prmp-LGFS policy is throughput and delay optimal among all non-preemptive policies (Theorem \ref{thmNBU1th}). 
\end{itemize}

To the best of our knowledge, these are the first optimality results on minimizing the age-of-information in queueing systems. Moreover, this is the first paper that considers packet replication to minimize the age.



The remainder of this paper is organized as follows. After a brief overview of related work in Section \ref{RW}, we present the model and problem formulation in Section \ref{sysmod}. The age of the proposed policies is analyzed in Section \ref{age_result}, and the throughput and delay performance of these policies are investigated in Section \ref{thro-anal}. Finally, we conclude in Section \ref{Concl}.

\section{Related Work}\label{RW}
A series of works studied the age performance of scheduling policies in a single queueing system with Poisson arrival process and exponential service time \cite{KaulYatesGruteser-Infocom2012,2012ISIT-YatesKaul,CostaCodreanuEphremides_TIT,RYatesTIT16,CostaCodreanuEphremides2014ISIT,Icc2015Pappas,KamKompellaEphremidesTIT}. In \cite{KaulYatesGruteser-Infocom2012,2012ISIT-YatesKaul}, the update frequency was optimized to improve data freshness in FCFS information-update systems. The effect of the packet management on the age was considered in \cite{CostaCodreanuEphremides_TIT,CostaCodreanuEphremides2014ISIT,Icc2015Pappas}. It was found that a good policy is to discard the old updates waiting in the queue when a new sample arrives, which can greatly reduce the impact of queueing delay on data freshness. In \cite{RYatesTIT16}, the time-average age was characterized for multiple sources Last-Come, First-Serve (LCFS) information-update systems with and without preemption. In this study, it was shown that sharing service facility among Poisson sources can improve the total age. Characterizing the time average age for FCFS queueing system with two and infinite number of servers was studied in \cite{KamKompellaEphremidesTIT}. The analysis in \cite{KamKompellaEphremidesTIT} showed that the model with infinite servers has a lower age in conjunction with more wasting of network resources due to the rise in the obsolete delivered packets. One open question in these studies on age analysis \cite{KaulYatesGruteser-Infocom2012,2012ISIT-YatesKaul,CostaCodreanuEphremides_TIT,RYatesTIT16,CostaCodreanuEphremides2014ISIT,Icc2015Pappas,KamKompellaEphremidesTIT}  is whether the preemptive LCFS policy is age-optimal for exponential service times. In this paper, we provide a confirmative answer to this question, and further investigate age-optimality for more general system settings such as arbitrary packet generation and arrival processes (including out-of-order packet arrivals), multi-server networks, as well as packet replications over multiple servers.


In  \cite{Bo_pull_model}, the average age was characterized in a pull model, where a customer sends requests to all servers to retrieve (pull) the interested information. In this model, the servers carry information with different freshness level and a user waits for the responses from these servers. The server updating process and the response times were assumed to be Poisson and exponential, respectively. In contrast with \cite{Bo_pull_model}, where the authors assumed that a user contacts servers to check for updates, here we prove age-optimality in a multi-server queueing system where a user sends the  updates to a destination through the servers and packet replication is considered.


 Characterizing the age for a class of packet service time distributions that are more general than exponential was considered in \cite{2015ISITHuangModiano,age_with_delivery_error,Gamma_dist}. In \cite{2015ISITHuangModiano}, the age was analyzed in multi-class M/G/1 and M/G/1/1 queues. The age performance in the presence of errors when the service times are exponentially distributed was analyzed in  \cite{age_with_delivery_error}. Gamma-distributed service times was considered in \cite{Gamma_dist}. The studies in \cite{age_with_delivery_error,Gamma_dist} were carried out for LCFS queueing systems with and without preemption. In complement with the age analysis results in \cite{2015ISITHuangModiano,age_with_delivery_error,Gamma_dist}, we show that non-preemptive LGFS (and its special case non-preemptive LCFS) policies are near age-optimal for NBU service time distributions. Similar to the exponential case, these results for NBU service times hold for arbitrary packet general and arrival processes, multiple server networks, and packet replication over multiple servers. In addition, gamma distribution considered in \cite{age_with_delivery_error,Gamma_dist} is a special case of NBU service time distributions.




 
%
In our study, packet generation and arrival times are not controllable. Another line of research has been the joint optimization of packet generation and transmissions in \cite{BacinogCeranUysal_Biyikoglu2015ITA,2015ISITYates,generat_at_will,SunJournal2016}.  An information update policy was developed in \cite{generat_at_will,SunJournal2016}, which was proven to minimize a general class of non-negative,
non-decreasing age penalty functions among all causally feasible policies. 
More recently, a real-time sampling problem of the Wiener process has been studied in \cite{Sun_reportISIT17}: If the sampling times are independent of the observed Wiener process, the optimal sampling problem in \cite{Sun_reportISIT17} reduces to an age-of-information optimization problem; otherwise, the optimal sampling policy can use knowledge of the Wiener process to achieve better performance than age-of-information optimization.

Recently, we generalized our results to multihop networks in \cite{multihop_optimal}, where we proved that age-optimality is achievable in multihop networks with arbitrary packet generation times, packet arrival times, and general network topologies. In particular, it was shown that the LGFS policy is age-optimal among all causal policies for exponential packet service times. In addition, for arbitrary distributions of packet service times, it was shown that the LGFS policy is age-optimal among all non-preemptive work-conserving policies. 

The considered age penalty model in this paper is very general such that it includes, but is not limited to, the time-average age  \cite{KaulYatesGruteser-Infocom2012,RYatesTIT16,KamKompellaEphremidesTIT,Bo_pull_model,CostaCodreanuEphremides2014ISIT,2012ISIT-YatesKaul,CostaCodreanuEphremides_TIT,Icc2015Pappas,BacinogCeranUysal_Biyikoglu2015ITA,2015ISITYates,Gamma_dist}, average peak age \cite{CostaCodreanuEphremides2014ISIT,2015ISITHuangModiano,CostaCodreanuEphremides_TIT,BacinogCeranUysal_Biyikoglu2015ITA,Gamma_dist,age_with_delivery_error}, and time-average age penalty function \cite{generat_at_will,SunJournal2016}.


\section{Model and Formulation}\label{sysmod}
\subsection{Notations and Definitions}
For any random variable $Z$ and an event $A$, let $[Z\vert A]$ denote a random variable with the conditional distribution of $Z$ for given $A$, and $\mathbb{E}[Z\vert A]$ denote the conditional expectation of $Z$ for given $A$. 

Let $\mathbf{x}=(x_1,x_2,\ldots,x_n)$ and $\mathbf{y}=(y_1,y_2,\ldots,y_n)$ be two vectors in $\mathbb{R}^n$, then we denote $\mathbf{x}\leq\mathbf{y}$ if $x_i\leq y_i$ for $i=1,2,\ldots,n$. We use $x_{[i]}$ to denote the $i$-th largest component of vector $\mathbf{x}$. A set $U\subseteq \mathbb{R}^n$ is called upper if $\mathbf{y}\in U$ whenever $\mathbf{y}\geq\mathbf{x}$ and $\mathbf{x}\in U$. We will need the following definitions: 
\begin{definition} \textbf{ Univariate Stochastic Ordering:} \cite{shaked2007stochastic} Let $X$ and $Y$ be two random variables. Then, $X$ is said to be stochastically smaller than $Y$ (denoted as $X\leq_{\text{st}}Y$), if
\begin{equation*}
\begin{split}
\mathbb{P}\{X>x\}\leq \mathbb{P}\{Y>x\}, \quad \forall  x\in \mathbb{R}.
 \end{split}
\end{equation*}
\end{definition}
\begin{definition}\label{def_2} \textbf{Multivariate Stochastic Ordering:} \cite{shaked2007stochastic} 
Let $\mathbf{X}$ and $\mathbf{Y}$ be two random vectors. Then, $\mathbf{X}$ is said to be stochastically smaller than $\mathbf{Y}$ (denoted as $\mathbf{X}\leq_\text{st}\mathbf{Y}$), if
\begin{equation*}
\begin{split}
\mathbb{P}\{\mathbf{X}\in U\}\leq \mathbb{P}\{\mathbf{Y}\in U\}, \quad \text{for all upper sets} \quad U\subseteq \mathbb{R}^n.
 \end{split}
\end{equation*}
\end{definition}
\begin{definition} \textbf{ Stochastic Ordering of Stochastic Processes:} \cite{shaked2007stochastic} Let $\{X(t), t\in [0,\infty)\}$ and $\{Y(t), t\in[0,\infty)\}$ be two stochastic processes. Then, $\{X(t), t\in [0,\infty)\}$ is said to be stochastically smaller than $\{Y(t), t\in [0,\infty)\}$ (denoted by $\{X(t), t\in [0,\infty)\}\leq_\text{st}\{Y(t), t\in [0,\infty)\}$), if, for all choices of an integer $n$ and $t_1<t_2<\ldots<t_n$ in $[0,\infty)$, it holds that
\begin{align}\label{law9'}
\!\!\!(X(t_1),X(t_2),\ldots,X(t_n))\!\leq_\text{st}\!(Y(t_1),Y(t_2),\ldots,Y(t_n)),\!\!
\end{align}
where the multivariate stochastic ordering in \eqref{law9'} was defined in Definition \ref{def_2}.
\end{definition}

\subsection{Preliminary Propositions}
The following propositions will be used throughout the paper:
\begin{proposition}[\cite{shaked2007stochastic}, Theorem 6.B.3]\label{Theorem_6.B.3}
Let $\mathbf{X}=(X_1, X_2,\ldots, X_n)$ and $\mathbf{Y}=(Y_1, Y_2,\ldots, Y_n)$ be two $n$-dimensional random vectors. If
\begin{equation*}
X_1\leq_\text{st} Y_1,
\end{equation*}
\begin{equation*}
[X_2\vert X_1=x_1]\leq_\text{st}[Y_2\vert Y_1=y_1]~\text{whenever}~x_1\leq y_1,
\end{equation*}
and in general, for $i=2,3,\ldots,n$,
\begin{align*}
\begin{split}
[X_i\vert X_1=x_1,\ldots, X_{i-1}=x_{i-1}]\leq_\text{st} \\  [Y_i\vert Y_1=y_1,\ldots, Y_{i-1}=y_{i-1}]\\
\text{whenever}~x_j\leq y_j,~j=1,2,\ldots, i-1,
\end{split}
\end{align*}
then $\mathbf{X}\leq_\text{st}\mathbf{Y}$.
\end{proposition}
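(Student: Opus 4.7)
The plan is to establish the multivariate stochastic order $\mathbf{X}\leq_{\text{st}}\mathbf{Y}$ by constructing, on a common probability space, coupled random vectors $\tilde{\mathbf{X}}$ and $\tilde{\mathbf{Y}}$ with $\tilde{\mathbf{X}}\stackrel{d}{=}\mathbf{X}$, $\tilde{\mathbf{Y}}\stackrel{d}{=}\mathbf{Y}$, and $\tilde{X}_i\leq\tilde{Y}_i$ almost surely for every $i=1,\ldots,n$. Once such a coupling is in hand, Definition \ref{def_2} follows immediately: for any upper set $U\subseteq\mathbb{R}^n$, the inclusion $\{\tilde{\mathbf{X}}\in U\}\subseteq\{\tilde{\mathbf{Y}}\in U\}$ holds because $\tilde{\mathbf{X}}\leq\tilde{\mathbf{Y}}$ componentwise, and so
\begin{equation*}
\mathbb{P}\{\mathbf{X}\in U\}=\mathbb{P}\{\tilde{\mathbf{X}}\in U\}\leq\mathbb{P}\{\tilde{\mathbf{Y}}\in U\}=\mathbb{P}\{\mathbf{Y}\in U\}.
\end{equation*}

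I would build the coupling by induction on $i$. For the base case $i=1$, the hypothesis $X_1\leq_{\text{st}}Y_1$ is, by the one-dimensional Strassen theorem (equivalently, the quantile coupling), realized by $\tilde{X}_1=F_{X_1}^{-1}(U_1)$ and $\tilde{Y}_1=F_{Y_1}^{-1}(U_1)$ for a single $U_1$ uniform on $(0,1)$; this yields $\tilde{X}_1\leq\tilde{Y}_1$ almost surely with the correct marginals. For the inductive step, suppose $(\tilde{X}_1,\ldots,\tilde{X}_{i-1},\tilde{Y}_1,\ldots,\tilde{Y}_{i-1})$ has been constructed with the correct joint marginals and with $\tilde{X}_j\leq\tilde{Y}_j$ almost surely for $j\leq i-1$. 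Conditional on realized values $(x_1,\ldots,x_{i-1})$ of the $\tilde{X}$-block and $(y_1,\ldots,y_{i-1})$ of the $\tilde{Y}$-block, the inductive hypothesis gives $x_j\leq y_j$, so the assumption of the proposition supplies the univariate stochastic order between the two conditional laws. I then let $\tilde{X}_i$ and $\tilde{Y}_i$ be the quantile couplings of these two conditional distributions driven by a fresh uniform $U_i$ independent of everything constructed so far. This simultaneously preserves the required conditional marginal distributions and guarantees $\tilde{X}_i\leq\tilde{Y}_i$ almost surely. Iterating through $i=1,\ldots,n$ and then verifying by the chain rule of conditional distributions that the joint laws match $\mathbf{X}$ and $\mathbf{Y}$ completes the construction.

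The main technical obstacle is the measurability of the map sending the past values $(x_1,\ldots,x_{i-1},y_1,\ldots,y_{i-1})$ to the coupled pair $(\tilde{X}_i,\tilde{Y}_i)$. The quantile-coupling approach is precisely what sidesteps this: since $\mathbb{R}^n$ is Polish, regular versions of the conditional distributions of $X_i$ and $Y_i$ given the respective pasts exist, and the generalized inverses $F^{-1}_{X_i\mid\,\cdot\,}$ and $F^{-1}_{Y_i\mid\,\cdot\,}$ are jointly measurable in the conditioning argument and the uniform variable, avoiding any appeal to a general measurable-selection theorem. An alternative route, avoiding an explicit coupling, is to use the characterization that $\mathbf{X}\leq_{\text{st}}\mathbf{Y}$ holds iff $\mathbb{E}[f(\mathbf{X})]\leq\mathbb{E}[f(\mathbf{Y})]$ for every bounded coordinatewise non-decreasing $f$, and to verify this inequality by repeated conditioning and the tower property; the same measurability issue reappears as the requirement that $y_1\mapsto\mathbb{E}[f(\mathbf{Y})\mid Y_1=y_1]$ be measurable and non-decreasing, again handled by the regular conditional distributions.
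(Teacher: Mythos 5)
This proposition is imported by citation (\cite{shaked2007stochastic}, Theorem 6.B.3) and the paper gives no proof of its own, so there is nothing in the paper to diverge from. Your inductive quantile-coupling construction — realizing the univariate order at each stage via regular conditional distributions and a fresh uniform, then concluding through the componentwise coupling characterization of $\leq_{\text{st}}$ — is correct and is essentially the standard proof of this result in the cited reference.
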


\begin{proposition}[\cite{shaked2007stochastic}, Theorem 6.B.16.(a)]\label{Theorem_6.B.16.(a)}
Let $\mathbf{X}$ and $\mathbf{Y}$ be two $n$-dimensional random vectors. If $\mathbf{X}\leq_\text{st}\mathbf{Y}$ and $\mathbf{q}:\mathbb{R}^n\rightarrow\mathbb{R}^k$ is any $k$-dimensional increasing [decreasing] function, for any positive integer $k$, then the $k$-dimensional vectors $\mathbf{q(X)}$ and $\mathbf{q(Y)}$ satisfy $\mathbf{q(X)}\leq_\text{st}[\geq_\text{st}]\mathbf{q(Y)}$.
\end{proposition}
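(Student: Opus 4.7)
The plan is to reduce the statement to a purely set-theoretic fact about upper sets and preimages, taking Definition \ref{def_2} as the working characterization of the multivariate stochastic order. To prove the increasing case $\mathbf{q}(\mathbf{X}) \leq_\text{st} \mathbf{q}(\mathbf{Y})$, I must verify that for every upper set $V \subseteq \mathbb{R}^k$,
\[
\mathbb{P}\{\mathbf{q}(\mathbf{X}) \in V\} \leq \mathbb{P}\{\mathbf{q}(\mathbf{Y}) \in V\}.
\]
Rewriting each side as $\mathbb{P}\{\mathbf{X} \in \mathbf{q}^{-1}(V)\}$ and $\mathbb{P}\{\mathbf{Y} \in \mathbf{q}^{-1}(V)\}$, the argument collapses to showing that $\mathbf{q}^{-1}(V)$ is itself an upper set in $\mathbb{R}^n$; once this is established, the hypothesis $\mathbf{X} \leq_\text{st} \mathbf{Y}$ applied to the upper set $\mathbf{q}^{-1}(V)$ closes the loop.

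The key lemma I would isolate is the following: if $\mathbf{q}:\mathbb{R}^n\to\mathbb{R}^k$ is coordinatewise increasing and $V\subseteq\mathbb{R}^k$ is upper, then $\mathbf{q}^{-1}(V)$ is upper. The verification is immediate: take $\mathbf{x}\in\mathbf{q}^{-1}(V)$ and $\mathbf{x}'\geq\mathbf{x}$; monotonicity of $\mathbf{q}$ yields $\mathbf{q}(\mathbf{x}')\geq\mathbf{q}(\mathbf{x})\in V$, and the upper-set property of $V$ forces $\mathbf{q}(\mathbf{x}')\in V$, so $\mathbf{x}'\in\mathbf{q}^{-1}(V)$. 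For the decreasing case I would reduce to the increasing case by replacing $\mathbf{q}$ with $-\mathbf{q}$: applying the increasing result delivers $-\mathbf{q}(\mathbf{X})\leq_\text{st}-\mathbf{q}(\mathbf{Y})$, and since negation swaps upper and lower sets (and the complement of a lower set is an upper set), taking complements converts this relation into the desired $\mathbf{q}(\mathbf{X})\geq_\text{st}\mathbf{q}(\mathbf{Y})$.

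The single place to be attentive is the dimension mismatch between the source $\mathbb{R}^n$ and target $\mathbb{R}^k$; the pivotal observation is precisely that coordinatewise monotonicity of $\mathbf{q}$ is strong enough to pull upper sets back across this change of dimension. I do not anticipate a substantive obstacle, since every step is either a set-theoretic rewriting or a direct invocation of the hypothesis. An alternative route via a Strassen-type coupling (build $\hat{\mathbf{X}}\stackrel{d}{=}\mathbf{X}$ and $\hat{\mathbf{Y}}\stackrel{d}{=}\mathbf{Y}$ on a common probability space with $\hat{\mathbf{X}}\leq\hat{\mathbf{Y}}$ almost surely and then apply $\mathbf{q}$ pointwise) would shorten the argument once the coupling is granted, but it relies on a nontrivial existence theorem; the upper-set route outlined above is entirely self-contained and, given the definition actually used in the paper, the more natural choice.
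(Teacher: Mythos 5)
Your proposal is correct, but note that the paper does not prove this statement at all: it is imported verbatim as Proposition~\ref{Theorem_6.B.16.(a)} from the textbook \cite{shaked2007stochastic}, so there is no in-paper proof to match. Your argument is the natural one given that the paper adopts the upper-set characterization of Definition~\ref{def_2}: the reduction to showing that $\mathbf{q}^{-1}(V)$ is an upper set whenever $V\subseteq\mathbb{R}^k$ is upper and $\mathbf{q}$ is coordinatewise increasing is exactly right, the verification is complete, and the reduction of the decreasing case to the increasing case via $-\mathbf{q}$ together with the fact that complements of lower sets are upper sets is also sound. For comparison, the textbook's own route uses the equivalent functional characterization of $\leq_{\text{st}}$ (namely $\mathbb{E}[\phi(\mathbf{X})]\leq\mathbb{E}[\phi(\mathbf{Y})]$ for all increasing real-valued $\phi$ for which the expectations exist), under which the claim is a one-line composition argument: $\phi\circ\mathbf{q}$ is increasing whenever $\phi$ and $\mathbf{q}$ are. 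Your version buys self-containedness relative to the definition actually stated in the paper, at the cost of having to handle the decreasing case by hand; the only caveat, shared with the textbook treatment, is that measurability of $\mathbf{q}$ and of the upper sets involved is being taken for granted, which is standard in this literature and not a gap specific to your argument. The coupling route you mention would also work but, as you say, invokes a nontrivial existence theorem (stated in the paper as Proposition~\ref{Theorem_6.B.30} for processes) and is unnecessary here.
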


\begin{proposition}[\cite{shaked2007stochastic}, Theorem 6.B.16.(b)]\label{Theorem_6.B.16.(b)}
Let $\mathbf{X}_1, \mathbf{X}_2,\ldots\mathbf{X}_d$ be a set of independent random vectors where the dimension of $\mathbf{X}_i$ is $k_i$, $i=1, 2,\ldots,d$. Let $\mathbf{Y}_1, \mathbf{Y}_2,\ldots\mathbf{Y}_d$ be another set of independent random vectors where the dimension of $\mathbf{Y}_i$ is $k_i$, $i=1, 2,\ldots,d$. Denote $k=k_1+k_2+\ldots+k_d$. If $\mathbf{X}_i\leq_\text{st}\mathbf{Y}_i$ for $i=1,2,\ldots,d$, then, for any increasing function $\psi:\mathbb{R}^k\rightarrow\mathbb{R}$, one has
\begin{equation*}
\psi(\mathbf{X}_1, \mathbf{X}_2,\ldots\mathbf{X}_d)\leq_\text{st}\psi(\mathbf{Y}_1, \mathbf{Y}_2,\ldots\mathbf{Y}_d).
\end{equation*}
\end{proposition}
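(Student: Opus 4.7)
The plan is to reduce the claim to Proposition~\ref{Theorem_6.B.16.(a)} by first showing that the concatenated $k$-dimensional random vectors satisfy
$(\mathbf{X}_1,\mathbf{X}_2,\ldots,\mathbf{X}_d)\leq_\text{st}(\mathbf{Y}_1,\mathbf{Y}_2,\ldots,\mathbf{Y}_d)$, and then invoking Proposition~\ref{Theorem_6.B.16.(a)} with the $1$-dimensional increasing function $\psi$ to obtain $\psi(\mathbf{X}_1,\ldots,\mathbf{X}_d)\leq_\text{st}\psi(\mathbf{Y}_1,\ldots,\mathbf{Y}_d)$. So the entire problem is to upgrade the hypothesized marginal orderings $\mathbf{X}_i\leq_\text{st}\mathbf{Y}_i$ to a joint ordering, leveraging independence.

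For that upgrade, I would apply the block version of Proposition~\ref{Theorem_6.B.3}, in which each coordinate is itself allowed to be a random vector rather than a scalar. The proof of Proposition~\ref{Theorem_6.B.3} carries over verbatim in this block form once one adopts Definition~\ref{def_2} for the per-block stochastic ordering. The hypotheses of the block version collapse thanks to independence: the conditional law of $\mathbf{X}_i$ given $\mathbf{X}_1=\mathbf{x}_1,\ldots,\mathbf{X}_{i-1}=\mathbf{x}_{i-1}$ is just the unconditional law of $\mathbf{X}_i$, and identically for the $\mathbf{Y}_i$. Hence each conditional comparison
$[\mathbf{X}_i\mid \mathbf{X}_1=\mathbf{x}_1,\ldots,\mathbf{X}_{i-1}=\mathbf{x}_{i-1}]\leq_\text{st}[\mathbf{Y}_i\mid \mathbf{Y}_1=\mathbf{y}_1,\ldots,\mathbf{Y}_{i-1}=\mathbf{y}_{i-1}]$
required by the proposition reduces to the standing assumption $\mathbf{X}_i\leq_\text{st}\mathbf{Y}_i$, uniformly over the conditioning values. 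This yields the joint stochastic ordering, and combining it with Proposition~\ref{Theorem_6.B.16.(a)} finishes the argument.

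An alternative, perhaps more transparent route bypasses Proposition~\ref{Theorem_6.B.3} entirely and uses a coupling (Strassen) argument. For each $i$, $\mathbf{X}_i\leq_\text{st}\mathbf{Y}_i$ implies the existence of a pair $(\hat{\mathbf{X}}_i,\hat{\mathbf{Y}}_i)$ on a common probability space with the correct marginals and $\hat{\mathbf{X}}_i\leq \hat{\mathbf{Y}}_i$ almost surely. Taking these $d$ couplings to be mutually independent on a product space produces joint random vectors with the correct joint laws of $(\mathbf{X}_1,\ldots,\mathbf{X}_d)$ and $(\mathbf{Y}_1,\ldots,\mathbf{Y}_d)$ and satisfying the componentwise inequality almost surely. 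Monotonicity of $\psi$ then gives $\psi(\hat{\mathbf{X}}_1,\ldots,\hat{\mathbf{X}}_d)\leq \psi(\hat{\mathbf{Y}}_1,\ldots,\hat{\mathbf{Y}}_d)$ a.s., from which the desired univariate stochastic ordering is immediate.

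The main obstacle is essentially bookkeeping: either verifying that Proposition~\ref{Theorem_6.B.3} really does extend from scalar to block-vector conditioning (the proof is identical modulo notation, but it must be stated with some care because the conditioning is now on a vector-valued event), or checking that Strassen's theorem is available in sufficient generality for the Borel random vectors in $\mathbb{R}^{k_i}$ under consideration. Once either technical point is granted, the assembly of the argument is entirely routine.
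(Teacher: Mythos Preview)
Your proposal is correct. Both routes you outline---reducing to a block version of Proposition~\ref{Theorem_6.B.3} and then invoking Proposition~\ref{Theorem_6.B.16.(a)}, or directly building an independent product coupling via Strassen---are standard and valid proofs of this result. Note, however, that the paper does not actually prove this proposition: it is stated as a preliminary result cited from \cite{shaked2007stochastic} (Theorem 6.B.16.(b)) and used as a black box, so there is no paper-side argument to compare against.
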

\begin{proposition}[\cite{shaked2007stochastic}, Theorem 6.B.16.(e)]\label{Theorem_6.B.16.(e)}
Let $\mathbf{X}, \mathbf{Y}$, and $\mathbf{\Theta}$ be random vectors such that $[\mathbf{X}\vert\mathbf{\Theta}=\mathbf{\theta}]\leq_\text{st}[\mathbf{Y}\vert\mathbf{\Theta}=\mathbf{\theta}]$ for all $\mathbf{\theta}$ in the support of $\mathbf{\Theta}$. Then $\mathbf{X}\leq_\text{st}\mathbf{Y}$.
\end{proposition}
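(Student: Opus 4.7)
The plan is to reduce the statement to an elementary application of the tower property of conditional expectation, after invoking a standard characterization of the multivariate stochastic order. Specifically, I would first recall that $\mathbf{X} \leq_{\text{st}} \mathbf{Y}$ in the sense of Definition \ref{def_2} is equivalent to $\mathbb{E}[f(\mathbf{X})] \leq \mathbb{E}[f(\mathbf{Y})]$ for every bounded measurable increasing function $f: \mathbb{R}^n \to \mathbb{R}$. This equivalence follows because indicators of upper sets are precisely the $\{0,1\}$-valued increasing functions, and a general bounded increasing $f$ can be approximated by nonnegative finite combinations of such indicators via a layer-cake representation, lifting the upper-set inequality in Definition \ref{def_2} to the integral form.

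With this characterization in hand, the proof is short. Fix any bounded increasing $f$ and any $\theta$ in the support of $\mathbf{\Theta}$. By hypothesis $[\mathbf{X} \vert \mathbf{\Theta}=\theta] \leq_{\text{st}} [\mathbf{Y} \vert \mathbf{\Theta}=\theta]$, so applying the characterization conditionally yields
\begin{equation*}
\mathbb{E}[f(\mathbf{X}) \vert \mathbf{\Theta}=\theta] \leq \mathbb{E}[f(\mathbf{Y}) \vert \mathbf{\Theta}=\theta].
\end{equation*}
Integrating both sides against the law of $\mathbf{\Theta}$ and invoking the tower property gives $\mathbb{E}[f(\mathbf{X})] \leq \mathbb{E}[f(\mathbf{Y})]$, and applying the characterization in reverse yields $\mathbf{X} \leq_{\text{st}} \mathbf{Y}$, as required.

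The main obstacle is measure-theoretic rather than conceptual: one must ensure that the conditional distributions exist as regular conditional probabilities, so that $\theta \mapsto \mathbb{E}[f(\mathbf{X}) \vert \mathbf{\Theta}=\theta]$ is well-defined and measurable in $\theta$. This is standard on Polish state spaces such as $\mathbb{R}^n$, so the step goes through cleanly. An alternative route I would consider is the Strassen-type coupling characterization of $\leq_{\text{st}}$: for each $\theta$ the conditional stochastic order furnishes a coupling $(\hat{\mathbf{X}}_\theta, \hat{\mathbf{Y}}_\theta)$ with $\hat{\mathbf{X}}_\theta \leq \hat{\mathbf{Y}}_\theta$ almost surely, and mixing these couplings against the law of $\mathbf{\Theta}$ produces an unconditional monotone coupling of $\mathbf{X}$ and $\mathbf{Y}$. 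Both approaches work; I would present the expectation-based one because it avoids an explicit measurable selection of couplings.
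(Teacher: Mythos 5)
Your proof is correct. The paper itself does not prove this proposition---it is quoted directly from the cited reference (Shaked and Shanthikumar, Theorem 6.B.16(e))---and your argument is the standard one for closure of $\leq_{\text{st}}$ under mixtures: the equivalence of the upper-set definition with $\mathbb{E}[f(\mathbf{X})]\leq\mathbb{E}[f(\mathbf{Y})]$ for bounded measurable increasing $f$, applied conditionally on $\mathbf{\Theta}=\theta$ and integrated via the tower property, with regular conditional distributions on $\mathbb{R}^n$ guaranteeing measurability of $\theta\mapsto\mathbb{E}[f(\mathbf{X})\,\vert\,\mathbf{\Theta}=\theta]$; the Strassen-coupling alternative you mention is also valid but, as you note, needs a measurable selection of couplings, so the expectation route is the cleaner choice.
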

In the next proposition, $=_\text{st}$ denotes equality in law.
\begin{proposition}[\cite{shaked2007stochastic}, Theorem 6.B.30]\label{Theorem_6.B.30}
The random processes $\{X(t), t\in [0,\infty)\}$ and $\{Y(t), t\in [0,\infty)\}$ satisfy $\{X(t), t\in [0,\infty)\}\leq_\text{st}\{Y(t), t\in [0,\infty)\}$ if, and only if, there exist two random
processes $\{\widetilde{X}(t), t\in [0,\infty)\}$ and  $\{\widetilde{Y}(t), t\in [0,\infty)\}$, defined on the same probability space, such that
\begin{equation*}
\begin{split}
&\{\widetilde{X}(t), t\in [0,\infty)\}=_\text{st}\{X(t), t\in [0,\infty)\},\\
&\{\widetilde{Y}(t), t\in [0,\infty)\}=_\text{st}\{Y(t), t\in [0,\infty)\},
\end{split}
\end{equation*}
and
\begin{equation*}
\mathbb{P}\{\widetilde{X}(t)\leq\widetilde{Y}(t), t\in[0,\infty)\}=1.
\end{equation*}
\end{proposition}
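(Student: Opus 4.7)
The plan is to prove the two directions separately; call them $(\Leftarrow)$ and $(\Rightarrow)$.

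For $(\Leftarrow)$, suppose the coupling $(\widetilde{X},\widetilde{Y})$ exists on a common probability space. Fix any integer $n$ and times $t_1<t_2<\ldots<t_n$. From the almost-sure pointwise inequality $\widetilde{X}(t)\leq\widetilde{Y}(t)$ for every $t$, one has $(\widetilde{X}(t_1),\ldots,\widetilde{X}(t_n))\leq(\widetilde{Y}(t_1),\ldots,\widetilde{Y}(t_n))$ coordinatewise almost surely. For every upper set $U\subseteq\mathbb{R}^n$ this immediately gives $\mathbb{P}\{(\widetilde{X}(t_1),\ldots,\widetilde{X}(t_n))\in U\}\leq \mathbb{P}\{(\widetilde{Y}(t_1),\ldots,\widetilde{Y}(t_n))\in U\}$, and since the tilde processes share the finite-dimensional laws of $X$ and $Y$, Definition~3 yields $\{X(t)\}\leq_\text{st}\{Y(t)\}$.

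For $(\Rightarrow)$, the approach is to couple the finite-dimensional marginals first and then pass to the process level. Fix any finite tuple $t_1<\ldots<t_n$. The hypothesis gives, via Definition~3, the multivariate ordering $(X(t_1),\ldots,X(t_n))\leq_\text{st}(Y(t_1),\ldots,Y(t_n))$. By the multivariate Strassen theorem (the finite-dimensional analog of the claim), there exists a joint law $\mu_{t_1,\ldots,t_n}$ on $\mathbb{R}^{2n}$ with the correct marginals and concentrated on the cone $\{(x,y)\in\mathbb{R}^{2n}:x_i\leq y_i,\ i=1,\ldots,n\}$. The next step is to make the family $\{\mu_{t_1,\ldots,t_n}\}$ projectively consistent so that Kolmogorov's extension theorem produces a joint process $(\widetilde{X},\widetilde{Y})$ on a single probability space whose marginals agree with those of $X$ and $Y$ and whose joint law places full mass on the pointwise-inequality cone at every finite set of times. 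A convenient way to ensure consistency is to construct the coupling canonically from a single uniform randomness source via conditional quantile transforms, which is automatically monotone and respects all projections.

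The main obstacle is twofold. First, different applications of Strassen on different tuples do not a priori agree under the natural projections, so one must build the couplings coherently (for example through the conditional-quantile construction just mentioned, or by an explicit monotone rearrangement argument for each marginal distribution of $X(t)$ given $X$-history). Second, even with Kolmogorov extension in hand, the almost-sure inequality initially holds only in the sense that for each fixed $t$ the event $\{\widetilde{X}(t)\leq\widetilde{Y}(t)\}$ has probability one; upgrading to the simultaneous statement $\mathbb{P}\{\widetilde{X}(t)\leq\widetilde{Y}(t),\ t\in[0,\infty)\}=1$ involves an uncountable intersection, which is in general not measurable. The standard remedy is to invoke path regularity (separability, or right-continuity of the sample paths), so that the simultaneous event reduces to a countable intersection over a dense subset of times and hence still has probability one.
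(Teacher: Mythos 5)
You should first note that the paper does not prove this proposition at all: it is quoted verbatim from Shaked and Shanthikumar (Theorem 6.B.30), i.e., it is the Kamae--Krengel--O'Brien coupling characterization of $\leq_\text{st}$ for processes, used in the paper as a black box. So there is no in-paper argument to compare against, and your attempt has to stand on its own. Your $(\Leftarrow)$ direction is correct and complete: coordinatewise a.s.\ domination at any finite tuple of times gives the upper-set inequality, and equality of finite-dimensional laws transfers it to $X$ and $Y$.

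The $(\Rightarrow)$ direction, however, has a genuine gap, and it sits exactly where you placed your ``convenient way to ensure consistency.'' The assertion that a conditional-quantile (standard) construction driven by a single uniform source is ``automatically monotone'' is not justified and is false in general: the multivariate ordering of Definition~2 does \emph{not} imply that the conditional law of $X(t_i)$ given an $X$-history is stochastically dominated by that of $Y(t_i)$ given a coordinatewise larger $Y$-history -- that conditional ordering is the hypothesis of Proposition~\ref{Theorem_6.B.3}, which is sufficient but not necessary for $\leq_\text{st}$. Consequently the quantile coupling need not satisfy $\widetilde{X}(t_i)\leq\widetilde{Y}(t_i)$ pathwise, and your projectively consistent family of monotone couplings is not actually constructed. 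The known proof is nonconstructive: for each finite time set Strassen's theorem gives \emph{some} monotone coupling; one then works along an increasing sequence of finite sets exhausting a countable dense set of times, uses tightness together with the fact that the set of monotone couplings with prescribed marginals is closed under weak convergence to extract a limiting coupling that is simultaneously monotone on the countable dense set, and only afterwards invokes path regularity (separability or right-continuity, which is indeed needed and which the age processes in this paper possess) to pass from the countable dense set to all $t\in[0,\infty)$ and to make the uncountable-intersection event measurable. Your measurability remark at the end is the right fix for that last step, but without the compactness/weak-limit argument (or some other device replacing your quantile claim) the forward implication is not proved.
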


\subsection{Queueing System Model}
We consider a queueing system with $m$ servers as shown in Fig. \ref{Fig:sysMod}. The system starts to operate at time $t=0$. The update packets are generated exogenously to the system and then arrive at the queue. Thus, the update packets may not arrive at the queue instantly when they are generated. The $i$-th update packet, called packet $i$, is generated at time $s_i$, arrives at the queue at time $a_i$, and is delivered to the destination at time $c_i$ such that $0\leq s_1\leq s_2\leq \ldots$ and $s_i\leq a_i\leq c_i$.
Note that in this paper, the sequences $\{s_1, s_2, \ldots\}$ and $\{a_1, a_2, \ldots\}$ are \textit{arbitrary}. Hence, the update packets may not arrive at the system in the order of their generation times.
For example, in Fig. \ref{Fig:Age}, we have $s_1<s_2$ but $a_2<a_1$. Let $B$ denote the buffer size of the queue which can be infinite, finite, or even zero. If $B$ is finite, the packets that arrive to a full buffer are either dropped or replace other packets in the queue.
The packet service times are \emph{i.i.d.} across servers and the packets assigned to the same server, and are independent of the packet generation and arrival processes. Packet replication is considered in this model, where the maximum replication degree is $r$ ($1\leq r\leq m$). In this model, one packet can be replicated to at most $r$ servers and the first completed replica is considered as the valid execution of the packet. After that, the remaining replicas of this packet are cancelled immediately to release the servers. Note that, the maximum replication degree $r$ is fixed for a system; however, the number of replicas that can be created for a packet may vary between 1 and $r$.

\subsection{Scheduling Policy}\label{Schpolicy}
A scheduling policy, denoted by $\pi$, determines the packet assignments and replications over time; it also controls dropping or replacing packets when the queue buffer is full. Note that the packet delivery time to the destination $c_i$ is a function of the scheduling policy $\pi$, while the sequences $\{s_1, s_2, \ldots\}$ and $\{a_1, a_2, \ldots\}$ do not change according to the scheduling policy. However, a policy $\pi$ may have knowledge of the future packet generation and arrival times. 
Moreover, we assume that the packet service times are invariant of the scheduling policy and the realization of a packet service time is unknown until its service is completed (unless the service time is deterministic).
 
Define $\Pi_r$ as the set of all policies, that includes \emph{causal} and \emph{non-causal} policies, when the maximum replication degree is $r$. Hence,  $\Pi_1\subset\Pi_2\subset\ldots\subset\Pi_m$. Note that causal policies are those policies whose scheduling decisions are made based only on the history and current state of the system; while non-causal policies are those policies whose scheduling decisions are made based on the history, current, and future state of the system. We define several types of policies in $\Pi_r$:

A policy is said to be \textbf{preemptive}, if a server can preempt a packet being processed and switch to processing  any other (including the preempted packet itself) packet at any time; only one copy of the preempted packet can be stored back into the queue if there is enough buffer space and sent at a later time when the servers are available again \footnote{If a preempted packet is served again, its service either starts over or it resumes the service from the preempted point. In case of exponential service times, both scenarios are equivalent because of the memoryless property.}. In contrast, in a \textbf{non-preemptive} policy, processing of a packet cannot be interrupted until the packet is completed or cancelled \footnote{Recall that a packet is cancelled when a replica has completed processing at another server.}; after completing or cancelling a packet, the server can switch to process another packet. A policy is said to be \textbf{work-conserving}, if no server is idle whenever there are packets waiting in the queue.
 
\subsection{Age Performance Metric}
\begin{figure}
\includegraphics[scale=0.35]{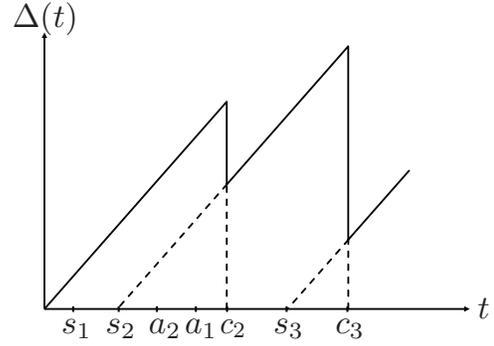}
\centering
\caption{Sample path of the age process $\Delta(t)$.}\label{Fig:Age}
\vspace{-0.3cm}
\end{figure}
Let $U(t)=\max\{s_i : c_{i}\leq t\}$ be the largest generation time of the packets at the destination at time $t$. The \emph{age-of-information}, or simply the \emph{age}, is defined as  \cite{adelberg1995applying,cho2000synchronizing,golab2009scheduling,KaulYatesGruteser-Infocom2012}
\begin{equation}\label{age}
\begin{split}
\Delta(t)=t-U(t).
\end{split}
\end{equation} 
The initial state $U(0^-)$ at time $t=0^-$ is invariant of the policy $\pi\in\Pi_r$, where we assume that $s_0=U(0^-)=0$. As shown in Fig. \ref{Fig:Age}, the age increases linearly with $t$ but is reset to a smaller value with the arrival of a packet with larger generation time. The age process is given by 
\begin{align}
\Delta=\{\Delta(t), t\in [0,\infty)\}.
\end{align}

In this paper, we introduce a non-decreasing \emph{age penalty functional} $g(\Delta)$ to represent the level of dissatisfaction for data staleness at the receiver or destination.
\begin{definition} \textbf{Age Penalty Functional:}
Let $\mathbf{V}$ be the set of $n$-dimensional Lebesgue measurable functions, i.e.,
\begin{align}
\mathbf{V} = \{f : [0,\infty)^n \mapsto \mathbb{R}~\text{such that} ~f~ \text{is Lebesgue measurable}\}.\nonumber
\end{align}
A functional $g:\mathbf{V}\mapsto\mathbb{R}$ is said to be an \emph{age penalty functional} if $g$ is \emph{non-decreasing} in the following sense:  
\begin{equation}
\begin{split}
\!\!g(\Delta_1) \leq g(\Delta_2),~\text{whenever}~\Delta_{1}(t)\leq \Delta_{2}(t), \forall t\in [0,\infty). \!\!
\end{split}
\end{equation}
\end{definition}
The age penalty functionals used in prior studies include:
\begin{itemize}
\item \emph{Time-average age  \cite{CostaCodreanuEphremides2014ISIT,KaulYatesGruteser-Infocom2012,2012ISIT-YatesKaul,CostaCodreanuEphremides_TIT,Icc2015Pappas,KamKompellaEphremidesTIT,Bo_pull_model,RYatesTIT16,BacinogCeranUysal_Biyikoglu2015ITA,2015ISITYates,Gamma_dist}:} The time-average age is defined as
\begin{equation}\label{functional1}
g_1(\Delta)=\frac{1}{T}\int_{0}^{T} \Delta(t) dt,
\end{equation}
\item \emph{Average peak age \cite{CostaCodreanuEphremides2014ISIT,2015ISITHuangModiano,CostaCodreanuEphremides_TIT,BacinogCeranUysal_Biyikoglu2015ITA,Gamma_dist,age_with_delivery_error}:} The average peak is defined as 
\begin{equation}\label{functional2}
g_2(\Delta)=\frac{1}{K}\sum_{k=1}^{K} A_{k},
\end{equation}
where $A_{k}$ denotes the $k$-th peak value of $\Delta(t)$ since time $t=0$. 
\item \emph{Time-average age penalty function \cite{generat_at_will,SunJournal2016}:} The average age penalty function is
\begin{equation}\label{functional3}
g_3(\Delta)= \frac{1}{T}\int_{0}^{T} h(\Delta(t)) dt,
\end{equation}
where $h$ : $[0,\infty)\to [0,\infty)$ can be any non-negative and non-decreasing function. As pointed out in  \cite{SunJournal2016}, a stair-shape function $h(x)=\lfloor x\rfloor$ can be used to characterize the dissatisfaction of data staleness when the information of interest is checked periodically, and an exponential function $h(x)=e^{x}$ is appropriate for online learning and control applications where the demand for updating data increases quickly with respect to the age. Also, an indicator function $h(x)=\mathds{1}(x>d)$ can be used to characterize the dissatisfaction when a given age limit $d$ is violated. 
\end{itemize}

\section{Age-Optimality Results of LGFS Policies}\label{age_result}
In this section, we provide age-optimality and near age-optimality results for multi-server queueing networks with packet replication. We start by considering the exponential packet service time distribution and show that age-optimality can be achieved. Then, we consider the classes of NBU packet service time distributions and show that there exist simple policies that can come close to age-optimality.
\subsection{Exponential Service Time Distribution}
We study age-optimal packet scheduling when the packet service times are \emph{i.i.d.} exponentially distributed. We start by defining the Last-Generated, First-Serve discipline as follows.
\begin{definition}
A scheduling policy is said to follow the \textbf{Last-Generated, First-Serve (LGFS)} discipline, if the last generated packet is served first among all packets in the system.
\end{definition}
In the LGFS disciplines, packets are served according to their generation times such that the packet with the largest generation time is served first among all packets in the system. In contrast, in the LCFS disciplines, packets are served according to their arrival times such that the packet with the largest arrival time is served first among all packets in the system. Both disciplines are equivalent when the packets arrive to the queue in the same order of their generation times.

In this paper, we propose a policy called \textbf{preemptive Last-Generated, First-Serve with replication (prmp-LGFS-R)}. This policy follows the LGFS discipline. When there is no replication ($r=1$), the implementation details of prmp-LGFS-R policy \footnote{The decision related to dropping or replacing packets in the full buffer case does not affect the age performance of prmp-LGFS-R policy. Hence, we don't specify this decision under the prmp-LGFS-R policy in all related algorithms.} are depicted in Algorithm \ref{alg1'}.
\begin{algorithm}[h]
\footnotesize
\SetKwData{NULL}{NULL}
\SetCommentSty{small} 
$\alpha:=0$\tcp*[r]{$\alpha$ is the smallest generation time of the packets under service}
$I:=m$\tcp*[r]{$I$ is the number of idle servers}
$Q:=\emptyset$\tcp*[r]{$Q$ is the set of distinct packets that are under service}
\While{the system is ON} {
\If{a new packet $p_i$ with generation time $s$ arrives}{ 
\uIf(\tcp*[f]{All servers are busy}){I=0}{
\uIf(\tcp*[f]{Packet $p_i$ is stale}){$s\leq\alpha$}{
Store the packet in the queue\;}
\Else(\tcp*[f]{Packet $p_i$ carries fresh information}){
Find packet $p_j\in Q$ with generation time $\alpha$\;
Preempt packet $p_j$ and store it back to the queue\;
Assign packet $p_i$ to the idle server\;
$Q:=Q\cup\{p_i\} - \{p_j\}$\;
}}
\Else(\tcp*[f]{At least one of the servers is idle})
{
Assign packet $p_i$ to an idle server\;
$Q:=Q\cup\{p_i\}$\;
} 
Update $I$\;
$\alpha:=\min\{s_i : i\in Q\}$\;}
\If{a packet $p_l$ is delivered}{
$Q:=Q-\{p_l\}$\;
 \If{ the queue is not empty}{
 Pick the packet with the largest generation time in the queue $p_h$\;
 Assign packet $p_h$ to an idle server\;
 $Q:=Q\cup\{p_h\}$\;
 }
 Update $I$\;
 $\alpha:=\min\{s_i : i\in Q\}$\;
}
}
\caption{Prmp-LGFS-R policy when $r=1$.}\label{alg1'}
\end{algorithm}

When there is a packet replication ($r>1$), the prmp-LGFS-R policy acts as follows. We replicate the packet with the largest generation time in the system on $r$ servers. Then, we replicate the packet with the second largest generation time in the system on the remaining idle servers such that the total number of replicas does not exceed $r$, and so on (i.e., the replicas of the packet with a larger generation time are sent with a higher priority than those of the packet with a lower generation time). In other words, since we may not have $m=ar$ for some positive integer $a$, packets under service may not be evenly distributed among the servers if all servers are busy. In this case, we give the highest priority to the $k$ ($k=\lfloor\frac{m}{r}\rfloor$) packets under service with the largest generation times and each one of them is replicated on $r$ servers. The packet under service with the smallest generation time is replicated on the remaining idle servers (whose number is less than $r$). If $m=ar$ for some positive integer $a$, then all packets under service are evenly distributed among the servers and each one of them is replicated on $r$ servers. The implementation details of prmp-LGFS-R policy when $r\geq 1$ are depicted in Algorithm \ref{alg1}: This algorithm explains the procedures that the prmp-LGFS-R policy follows in the case of packet arrival and departure events as follows.
\begin{itemize}
\item \textbf{Packet arrival event:} If a new packet $p_i$ arrives, we first check whether this new packet preempts an older packet that is being processed or not in Steps 6-19. After that, if packet $p_i$ is served, we specify the number of replicas that we need to create for packet $p_i$ in Steps 21-26. In particular, if packet $p_i$ is served, we have two possible cases.

\textbf{Case 1:} The generation time of packet $p_i$ is greater than the one with the smallest generation time in the set $Q$ (set $Q$ is defined at the beginning of the algorithm). In this case, we need to replicate packet $p_i$ on $r$ idle servers. Therefore, if the number of available servers ($I$) is less than $r$, we preempt $(r-I)$ more replicas of the packet with the smallest generation time in the set $Q$ and replicate packet $p_i$ on $r$ servers. These procedures are depicted in Steps 21-23.

\textbf{Case 2:} The generation time of packet $p_i$
is the smallest one among the packets in the set $Q$. In this case, packet $p_i$ is replicated on the available idle servers such that the total number of replicas of packet $p_i$ does not exceed $r$, as depicted in Steps 24-26.

\item \textbf{Packet departure event:} If a packet $p_l$ is delivered, we cancel all the remaining replicas of packet $p_l$. Moreover, if the queue is not empty, we pick the freshest packet in the queue and replicate it on the available idle servers such that the total number of replicas of this packet does not exceed $r$. These procedures are illustrated in Steps 29-39.
\end{itemize}
\begin{algorithm}[!t]
\footnotesize
\SetKwData{NULL}{NULL}
\SetCommentSty{footnotesize} 
$\alpha:=0$\tcp*[r]{$\alpha$ is the smallest generation time of the packets under service}
$I:=m$\tcp*[r]{$I$ is the number of idle servers}
$Q:=\emptyset$\tcp*[r]{$Q$ is the set of distinct packets that are under service}
$k:=\lfloor\frac{m}{r}\rfloor$\tcp*[r]{ $k$ is the number of distinct packets that each one of them can be replicated on $r$ servers}
\While{the system is ON} {
\If{a new packet $p_i$ with generation time $s$ arrives}{ 
\uIf(\tcp*[f]{All servers are busy}){$I=0$}{
\uIf(\tcp*[f]{Packet $p_i$ is stale}){$s\leq\alpha$}{
Store packet $p_i$ in the queue\;}
\Else(\tcp*[f]{Packet $p_i$ carries fresh information}){
Find packet $p_j\in Q$ with generation time $\alpha$\;
Preempt all replicas of packet $p_j$\;
Packet $p_j$ is stored back to the queue\;
$Q:=Q\cup\{p_i\} - \{p_j\}$\;
Update $I$\;
}}
\Else(\tcp*[f]{At least one of the servers is idle})
{
$Q:=Q\cup\{p_i\}$\;
} 
$\alpha:=\min\{s_i : i\in Q\}$\;
 
\uIf(\tcp*[f]{Specify the number of replicas of packet $p_i$}){$p_i\in Q$ \textbf{and} generation time of packet $p_i$ $>\alpha$ \textbf{and} $I<r$}{
 Preempt $(r-I)$ replicas of the packet with generation time $\alpha$\;
 Replicate packet $p_i$ on $r$ idle servers\;
}
\ElseIf{$p_i\in Q$ \textbf{and} generation time of packet $p_i$ $=\alpha$}{
  Replicate packet $p_i$ on $\min\{r,I\}$ idle servers\;
}
%
 Update $I$\;
}
\If{a packet $p_l$ is delivered}{
Cancel the remaining replicas of packet $p_l$\;
$Q:=Q-\{p_l\}$\;
 \If{ the queue is not empty}{
 Pick the packet with the largest generation time in the queue $p_h$\;
 $Q:=Q\cup\{p_h\}$\;
Replicate packet $p_h$ on $\min\{r,I\}$ idle servers\;
Update $I$\;
 }
$\alpha:=\min\{s_i : i\in Q\}$\;
}
}
\caption{Prmp-LGFS-R policy when $r\geq 1$.}\label{alg1}
\end{algorithm}

Note that the prmp-LGFS-R policy is a causal policy, i.e., its scheduling decisions are made based on the history and current state of the system and do not require the knowledge of the future packet generation and arrival times. Define a set of parameters $\mathcal{I}=\{B, m, r, s_i, a_{i},i=1,2,\ldots\}$, where $B$ is the queue buffer size, $m$ is the number of servers, $r$ is the maximum replication degree, $s_i$ is the generation time of packet $i$, and $a_{i}$ is the arrival time of packet $i$. Let $\Delta_{\pi}=\{\Delta_{\pi}(t), t\in [0,\infty)\}$ be the age processes under policy $\pi$. The age performance of the prmp-LGFS-R policy is characterized as follows.
\begin{theorem}\label{thm1}
Suppose that the packet service times are exponentially distributed, and \emph{i.i.d.} across servers and the packets assigned to the same server, then for all $\mathcal{I}$ and $\pi\in\Pi_{r}$
\begin{equation}\label{thm1eq1}
[\Delta_{\text{prmp-LGFS-R}}\vert\mathcal{I}]\leq_{\text{st}}[\Delta_{\pi}\vert\mathcal{I}],
\end{equation}
or equivalently, for all $\mathcal{I}$ and non-decreasing functional $g$
\begin{equation}\label{thm1eq2}
\mathbb{E}[g(\Delta_{\text{prmp-LGFS-R}})\vert\mathcal{I}]=\min_{\pi\in\Pi_{r}}\mathbb{E}[g(\Delta_\pi)\vert\mathcal{I}],
\end{equation}
provided the expectations in \eqref{thm1eq2} exist.
\end{theorem}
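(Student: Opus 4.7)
The plan is to establish Theorem~\ref{thm1} by a sample-path coupling argument combined with induction on the common sequence of arrival and potential-completion events. By Proposition~\ref{Theorem_6.B.16.(e)} (applied with $\mathbf{\Theta}=\mathcal{I}$), it is enough to prove \eqref{thm1eq1} conditionally on arbitrary fixed realizations of $\{s_i,a_i\}_{i\ge 1}$, which I therefore treat as deterministic. Proposition~\ref{Theorem_6.B.30} then reduces \eqref{thm1eq1} to constructing a coupling of the two age processes on a common probability space satisfying $\Delta_{P}(t)\le \Delta_\pi(t)$ almost surely for every $t\ge 0$, where I write $P$ for prmp-LGFS-R and $\pi$ for an arbitrary competitor in $\Pi_r$. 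Since $\Delta(t)=t-U(t)$, the desired pathwise inequality amounts to $U_P(t)\ge U_\pi(t)$.

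I build the coupling by exploiting the memoryless property of the exponential distribution: attach to each of the $m$ servers an independent rate-$\mu$ Poisson process of ``potential completion'' events, and let each such event simultaneously terminate whichever packet (or replica) occupies that server in each system. This produces i.i.d.\ $\exp(\mu)$ service times in both systems while ensuring a common sequence of arrival epochs and potential-completion epochs. Both systems therefore evolve on the same enumerated sequence of events; only their scheduling decisions differ.

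The core of the argument is a sample-path dominance of an auxiliary state vector. For each policy $\pi$ and time $t\ge 0$, let $\mathbf{v}_\pi(t)$ be the decreasingly sorted vector whose entries are the generation times of all packets that, under $\pi$, are either already delivered to the destination by time $t$ or currently being processed by at least one server (each distinct packet counted once regardless of the number of replicas), padded by $-\infty$ so that the two vectors being compared have equal length. The claim to be established by induction along the common event sequence is
\begin{equation*}
v_{[i],P}(t)\ \ge\ v_{[i],\pi}(t),\qquad \forall\, i\ge 1,\ \forall\, t\ge 0,\qquad (\star)
\end{equation*}
almost surely under the coupling. At $t=0^-$ both vectors coincide (using the initial convention $s_0=U(0^-)=0$), so the base case is trivial. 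A potential completion on server $k$ merely transfers whichever entry sits there from the in-service block to the delivered block without changing $\mathbf{v}$; $P$ then pulls the freshest buffered packet into the freed slot and, if permissible, reshuffles replicas to concentrate them on the freshest served packets, whereas $\pi$ may act arbitrarily---the induction hypothesis is preserved because $P$'s choice places the new in-service value at the largest feasible rank. An arrival of packet $i$ with generation time $s_i$ is absorbed by $P$ into the in-service block at its sorted rank (preempting the smallest in-service entry if $s_i$ exceeds it), while any valid reaction by $\pi$ inserts $s_i$ at a position no higher than under $P$.

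The principal technical obstacle is the bookkeeping for the arrival case: when $P$ preempts an in-service packet in order to serve packet $i$, the preempted generation time is ejected back to the buffer and temporarily leaves $\mathbf{v}_P$. One must verify, across the case split (buffer empty or full, all servers busy versus some idle, $s_i$ fresh versus stale, replication degree saturated or not), that the ejected value is no larger than the entry it is swapped for, so that no coordinate of $\mathbf{v}_P$ falls below the corresponding coordinate of $\mathbf{v}_\pi$. Once $(\star)$ is established, the additional observation that potential-completion events synchronize the migration of in-service entries into the delivered block across both systems yields $U_P(t)\ge U_\pi(t)$ almost surely, giving \eqref{thm1eq1}. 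The equivalent form \eqref{thm1eq2} for every non-decreasing functional $g$ is then immediate, since the same coupling gives $g(\Delta_P)\le g(\Delta_\pi)$ almost surely, and taking expectations achieves the infimum.
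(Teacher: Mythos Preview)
Your coupling is the wrong one, and because of that the final deduction ``$(\star)\Rightarrow U_P(t)\ge U_\pi(t)$'' fails. Under the server-identity coupling you propose (one Poisson clock per physical server, firing simultaneously in both systems), the invariant $(\star)$ may well hold while $U_P(t)<U_\pi(t)$. Take $m=2$, $r=1$, and let packets with generation times $s_1<s_2$ arrive together at time $0$. Policy $P$ happens to put $s_2$ on server~1 and $s_1$ on server~2; the competitor $\pi$ does the reverse. Both sorted vectors $\mathbf{v}_P,\mathbf{v}_\pi$ equal $(s_2,s_1)$, so $(\star)$ holds. Now server~2's clock rings first: in $P$ the delivered packet has generation time $s_1$, in $\pi$ it has generation time $s_2$, giving $U_P=s_1<s_2=U_\pi$. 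Nothing in your ``synchronization'' remark prevents this; your state vector $\mathbf{v}$ deliberately lumps delivered and in-service packets together, so it cannot see which of its entries sit in the delivered block.

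The paper repairs exactly this point by coupling \emph{by rank} rather than by server: when a completion occurs, it is the packet with the $l$-th largest in-service generation time that is delivered \emph{simultaneously in both systems}. Memorylessness makes this legitimate, since every busy server completes at rate $\mu$ regardless of which packet (or rank) it holds. The state the paper tracks is $(U_\pi,\alpha_{[1],\pi},\ldots,\alpha_{[m],\pi})$, the sorted per-server generation times together with $U$, and the induction maintains $U_P\ge U_\pi$ directly. The structural step you are missing is the paper's Lemma~\ref{lem3'}: because $P$ always loads its servers with the freshest available packets (with the prescribed replication pattern), the single scalar inequality $U_P\ge U_\pi$ already forces $\alpha_{[i],P}\ge\alpha_{[i],\pi}$ for every $i$. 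A rank-$l$ completion then gives $U_P'=\alpha_{[l],P}\ge\alpha_{[l],\pi}=U_\pi'$, and arrivals trivially preserve $U$; this closes the induction without any of the preemption bookkeeping you flag as the ``principal technical obstacle.'' Your vector $\mathbf{v}$ and your server-indexed clocks can be salvaged only by switching to this rank coupling and tracking $U$ separately.
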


\begin{proof}
See Appendix~\ref{Appendix_A}.
\end{proof}

Theorem \ref{thm1} tells us that for arbitrary sequence of packet generation times $(s_1, s_2, \ldots)$, sequence of packet arrival times $(a_{1}, a_{2}, \ldots)$, buffer size $B$, number of servers $m$, and maximum replication degree $r$, the prmp-LGFS-R policy achieves optimality of the age process within the policy space $\Pi_{r}$. In addition, \eqref{thm1eq2} tells us that the prmp-LGFS-R policy minimizes any \emph{non-decreasing functional} of the age process, including the time-average age \eqref{functional1}, average peak age \eqref{functional2}, and time-average age penalty function \eqref{functional3} as special cases. It is important to emphasize that the prmp-LGFS-R policy can achieve optimality compared with all causal and non-causal policies in $\Pi_{r}$. Also, when the update packets arrive in the same order of their generation times and there is no replication, the prmp-LGFS-R policy becomes LCFS with preemption in service (LCFS-S) for a single source case that was proposed in \cite{RYatesTIT16}. Thus, this policy can achieve age-optimality in this case.

As a result of Theorem \ref{thm1}, we can deduce the following corollaries:

A weaker version of Theorem \ref{thm1} can be obtained as follows. 
\begin{corollary}\label{cor1''}
If the conditions of Theorem \ref{thm1} hold, then for any arbitrary packet generation and arrival processes, and for all $\pi\in\Pi_{r}$
\begin{equation*}
\Delta_{\text{prmp-LGFS-R}}\leq_{\text{st}}\Delta_{\pi}.
\end{equation*}
\end{corollary}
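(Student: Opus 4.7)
The plan is to deduce Corollary \ref{cor1''} directly from Theorem \ref{thm1} by integrating out the conditioning on $\mathcal{I}$. Theorem \ref{thm1} already delivers the desired ordering pointwise in $\mathcal{I}$, i.e., conditionally on the realization of the buffer size, number of servers, replication degree, and the full generation/arrival sequences; the task of the corollary is merely to promote this conditional ordering to the unconditional (mixed) one for arbitrary joint distributions of the packet generation and arrival processes.

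The main tool will be Proposition \ref{Theorem_6.B.16.(e)} (the ``mixture'' property of the multivariate $\leq_\text{st}$ order), combined with the definition of stochastic ordering of stochastic processes (Definition 3). First, fix an arbitrary integer $n\geq 1$ and an arbitrary set of time instants $0\leq t_1<t_2<\ldots<t_n$. Let
\begin{equation*}
\mathbf{X}=(\Delta_{\text{prmp-LGFS-R}}(t_1),\ldots,\Delta_{\text{prmp-LGFS-R}}(t_n)),
\end{equation*}
\begin{equation*}
\mathbf{Y}=(\Delta_{\pi}(t_1),\ldots,\Delta_{\pi}(t_n)).
\end{equation*}
By the equivalence of the process-level ordering \eqref{thm1eq1} and its finite-dimensional marginals, Theorem \ref{thm1} implies that, for every realization $\mathcal{I}=\bm{\theta}$ in the support of $\mathcal{I}$,
\begin{equation*}
[\mathbf{X}\mid\mathcal{I}=\bm{\theta}]\leq_\text{st}[\mathbf{Y}\mid\mathcal{I}=\bm{\theta}].
\end{equation*}

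Next, I invoke Proposition \ref{Theorem_6.B.16.(e)} with $\mathbf{\Theta}=\mathcal{I}$ to conclude the unconditional ordering $\mathbf{X}\leq_\text{st}\mathbf{Y}$. This step is valid for any joint distribution of the packet generation times $\{s_i\}$ and arrival times $\{a_i\}$, which is the content of ``arbitrary packet generation and arrival processes.'' Because $n$ and $t_1,\ldots,t_n$ were arbitrary, Definition 3 then yields
\begin{equation*}
\Delta_{\text{prmp-LGFS-R}}\leq_\text{st}\Delta_{\pi},
\end{equation*}
which is the claim.

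I do not expect any real obstacle: the whole work has already been done in Theorem \ref{thm1}. The only subtlety worth checking is measurability, namely that the map $\bm{\theta}\mapsto\mathbb{P}\{\mathbf{X}\in U\mid\mathcal{I}=\bm{\theta}\}$ is measurable for every upper set $U$, so that Proposition \ref{Theorem_6.B.16.(e)} applies; this is standard and implicit in the statement of that proposition. The result is strictly weaker than Theorem \ref{thm1} because it no longer conditions on $\mathcal{I}$, but it is often the form most convenient for downstream applications where the generation/arrival processes are themselves random (e.g., Poisson).
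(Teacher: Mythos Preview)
Your proposal is correct and matches the paper's own proof: the paper also derives the corollary directly from Theorem \ref{thm1} by applying Proposition \ref{Theorem_6.B.16.(e)} to mix over the sample paths of the packet generation and arrival processes. Your write-up simply makes explicit the passage through finite-dimensional marginals (Definition 3), which the paper leaves implicit.
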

\begin{proof}
We consider the mixture over multiple sample paths of the packet generation and arrival processes to prove the result. In particular, by using the result of Theorem \ref{thm1} and Proposition \ref{Theorem_6.B.16.(e)}, the corollary follows. 
\end{proof}
\begin{corollary}\label{cor1'}
Under the conditions of Theorem \ref{thm1}, if one packet can be replicated to all $m$ servers (i.e., $r= m$), then for all $\mathcal{I}$, the prmp-LGFS-R policy when $r=m$ is an age-optimal among all policies in $\Pi_m$.
\end{corollary}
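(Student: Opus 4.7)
The plan is to observe that this corollary is an immediate consequence of Theorem~\ref{thm1} applied in the special case $r=m$, and the main task is simply to explain why the policy class $\Pi_m$ is in fact the class of \emph{all} scheduling policies under consideration. Recall from Section~\ref{Schpolicy} the chain of inclusions $\Pi_1\subset\Pi_2\subset\ldots\subset\Pi_m$, where $\Pi_r$ denotes the set of all (causal and non-causal) scheduling policies whose maximum replication degree is $r$. Since the maximum replication degree is capped by the number of servers $m$, any feasible scheduling policy for the system described in Section~\ref{sysmod} must lie in $\Pi_m$. Hence $\Pi_m$ is the universal policy class for the problem.

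Given this, I would invoke Theorem~\ref{thm1} with the parameter $r$ set to $m$. The theorem asserts that for any $\mathcal{I}$ and every $\pi\in\Pi_r$, $[\Delta_{\text{prmp-LGFS-R}}\vert\mathcal{I}]\leq_{\text{st}}[\Delta_{\pi}\vert\mathcal{I}]$, where the left-hand side refers to the prmp-LGFS-R policy operated at maximum replication degree $r$. Specializing to $r=m$, this gives
\begin{equation*}
[\Delta_{\text{prmp-LGFS-R}}\vert\mathcal{I}]\leq_{\text{st}}[\Delta_{\pi}\vert\mathcal{I}],\quad \forall \pi\in\Pi_m,
\end{equation*}
where the prmp-LGFS-R policy on the left is now the one described in Algorithm~\ref{alg1} with $r=m$ (in which the freshest packet in the system is replicated on all $m$ servers simultaneously). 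Combined with the observation that $\Pi_m$ contains every admissible scheduling policy, this is precisely the claim that prmp-LGFS-R with $r=m$ is age-optimal among all policies.

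There is essentially no technical obstacle here: the corollary is a direct re-packaging of Theorem~\ref{thm1} at its largest admissible replication degree. The only subtlety worth pointing out in the write-up is that the prmp-LGFS-R policy with $r=m$ compares favorably not only against all non-replicating policies in $\Pi_1$ and all moderately replicating policies in $\Pi_r$ for $r<m$, but also against every fully replicating policy in $\Pi_m$ itself; this follows because Theorem~\ref{thm1} holds with $\pi$ ranging over the entire class $\Pi_r$ rather than any proper subclass.
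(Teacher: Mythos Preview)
Your proposal is correct and follows exactly the same approach as the paper, which simply states that the corollary is a direct result of Theorem~\ref{thm1}. Your write-up merely elaborates on why specializing Theorem~\ref{thm1} to $r=m$ yields optimality over $\Pi_m$, which is precisely the paper's (one-line) argument.
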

\begin{proof}
This corollary is a direct result of Theorem \ref{thm1}.
\end{proof}
It is important to recall that $\Pi_1\subset\Pi_2\subset\ldots\subset\Pi_m$. Therefore, Corollary \ref{cor1'} tells us that the prmp-LGFS-R policy when $r=m$ achieves age-optimality compared with all  policies with any maximum replication degree. 
\begin{corollary}\label{cor1}
If the conditions of Theorem \ref{thm1} hold, then for all $\mathcal{I}$, the age performance of the prmp-LGFS-R policy remains the same for any queue size $B\geq 0$.
\end{corollary}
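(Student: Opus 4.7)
The plan is to prove that $\Delta_{\text{prmp-LGFS-R},B}$ and $\Delta_{\text{prmp-LGFS-R},0}$ are equal in distribution for every $B\geq 0$, from which the corollary follows immediately. My approach is to combine Theorem~\ref{thm1} with a sample-path invariant so as to obtain the two stochastic orderings separately.

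For one direction, I observe that the prmp-LGFS-R policy with $B=0$, call it $\pi^{*}$, never uses the buffer and is therefore a valid member of $\Pi_{r}$ for a system with any buffer size $B\geq 0$. Applying Theorem~\ref{thm1} to the system with buffer $B$ gives
\begin{equation*}
\Delta_{\text{prmp-LGFS-R},B}\leq_{\text{st}}\Delta_{\pi^{*}}=\Delta_{\text{prmp-LGFS-R},0},
\end{equation*}
so a larger buffer cannot make the age process worse.

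For the reverse direction, I would first establish, by induction on the arrival, preemption, and service-completion events, the invariant that at every time $t$ during prmp-LGFS-R every packet $p_{b}$ currently in the buffer satisfies $s_{p_{b}}\leq \min_{p\in S(t)} s_{p}$, where $S(t)$ denotes the set of packets in service. This holds because, by the preemption-by-freshness rule of prmp-LGFS-R, any packet placed into the buffer is either a packet just preempted from service (whose generation time was the previous minimum $\alpha$, which is no larger than the new minimum) or a newly arriving stale packet (with generation time at most $\alpha$ by construction), and because each service completion moves only the freshest buffered packet into service, leaving the rest no fresher than this one. The key consequence is that when a buffered packet $p_{h}$ is moved into service at some time $t_{0}$, triggered by the completion of a packet $p_{D}$, the invariant gives $s_{p_{h}}\leq s_{p_{D}}$, so $U(t)\geq s_{p_{D}}\geq s_{p_{h}}$ for all $t\geq t_{0}$ and any eventual delivery of $p_{h}$ cannot increase $U(t)$.

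I would then couple the systems with $B\geq 0$ and $B=0$ using common packet generation/arrival times and a common sequence of exponential service-completion epochs at each server. In both systems the freshest packet in the queue is always the one being served, so fresh arrivals either find an idle server or preempt a strictly staler packet; by the invariant above, buffered packets never contribute to $U(t)$, so the set of ``impactful'' deliveries (those that raise $U$) coincides path-by-path in the two systems. This yields $\Delta_{\text{prmp-LGFS-R},0}\leq_{\text{st}}\Delta_{\text{prmp-LGFS-R},B}$, and equality follows. The main obstacle will be formalizing this coupling, especially when $r>1$: in the $B\geq 0$ system a server may be processing a buffered packet where the $B=0$ system has it idle, and when a fresh packet arrives I must invoke the memoryless property of the exponential service time to argue that its subsequent completion epoch has the same distribution as if the server had been idle, so the impactful deliveries remain sample-path identical.
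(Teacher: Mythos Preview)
Your proposal is correct, but it is considerably more elaborate than the paper's own argument. The paper dispatches the corollary in two sentences: it simply observes that under prmp-LGFS-R the queue holds only preempted packets and outdated arrivals, and that whether or not these packets are eventually delivered cannot affect the age process. Your invariant ``every buffered packet is no fresher than the minimum generation time currently in service'' is precisely the rigorous statement of that observation, and your derivation that a packet pulled from the buffer after a completion of $p_D$ satisfies $s_{p_h}\le s_{p_D}\le U(t_0)$ is exactly the missing justification the paper leaves implicit.

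Two remarks on structure. First, the detour through Theorem~\ref{thm1} for one direction is clever but unnecessary: once you carry out the coupling you describe (common Poisson completion clocks at each server, memorylessness ensuring that a fresh packet preempting a stale buffered packet in the $B$-system has the same completion epoch as the same fresh packet placed on the corresponding idle server in the $0$-system), you obtain sample-path equality of the two age processes, not merely a one-sided ordering. Both directions then follow at once. Second, the paper's terse proof glosses over the very point you flag as the main obstacle, namely that a server occupied by a stale buffered packet does not delay the processing of a subsequently arriving fresh packet; this is where preemption together with the memoryless property is genuinely needed, and your proposal is the first place that dependence is made explicit. So your approach buys rigor at the cost of length, while the paper's buys brevity at the cost of leaving the role of memorylessness unstated.
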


\begin{proof}
From the operation of policy prmp-LGFS-R, its queue is used to store the preempted packets and outdated arrived packets. The age process of the prmp-LGFS-R policy is not affected no matter these packets are delivered or not. Hence, the age performance of  the prmp-LGFS-R policy is invariant for any queue size $B\geq 0$. This completes the proof.
\end{proof}
The next corollary clarifies the relationship between the prmp-LGFS-R policy and the LCFS-S policy.
\begin{corollary}
Under the conditions of Theorem \ref{thm1}, if the packets arrive to the queue in the same order of their generation times and replications are not allowed, then for all $\mathcal{I}$, the LCFS-S policy is age-optimal, i.e., the LCFS-S satisfies \eqref{thm1eq1} and \eqref{thm1eq2}.
\end{corollary}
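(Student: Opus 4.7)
The plan is to show that under the two hypotheses of the corollary, namely in-order arrivals ($a_i\leq a_j$ whenever $s_i\leq s_j$) and no replication ($r=1$), the LCFS-S policy is sample-path identical to the prmp-LGFS-R policy, so that the result reduces to an application of Theorem~\ref{thm1} with $r=1$.

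First I would specialize the prmp-LGFS-R policy to the case $r=1$, so that Algorithm~\ref{alg1'} (equivalently Algorithm~\ref{alg1} with $k=m$) applies: at each event, the server(s) process the packets whose \emph{generation} times are largest, preempting in service when a fresher packet arrives. Next I would observe that LCFS-S does exactly the same thing but uses \emph{arrival} times as the priority key. The bridging step is to note that if the sequences $(s_i)$ and $(a_i)$ are co-monotone, i.e., $s_i\leq s_j\Longleftrightarrow a_i\leq a_j$, then ordering the packets present in the system by generation time produces the same ranking as ordering them by arrival time. Therefore at every decision epoch of prmp-LGFS-R, the packet selected for (possibly preemptive) service is the same packet that LCFS-S would select, and the two policies induce identical sample paths of assignments, completion times $c_i$, and hence of the age process $\Delta(t)$.

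From this pathwise identification one has $\Delta_{\text{LCFS-S}}(t)=\Delta_{\text{prmp-LGFS-R}}(t)$ for all $t\geq 0$ under any realization of the exponential service times, and in particular $\Delta_{\text{LCFS-S}}=_{\text{st}}\Delta_{\text{prmp-LGFS-R}}$ given $\mathcal{I}$. Applying Theorem~\ref{thm1} with $r=1$ then gives, for every $\pi\in\Pi_{1}$,
\begin{equation*}
[\Delta_{\text{LCFS-S}}\vert\mathcal{I}]\leq_{\text{st}}[\Delta_{\pi}\vert\mathcal{I}],
\end{equation*}
which is \eqref{thm1eq1} for LCFS-S, and the expectation inequality \eqref{thm1eq2} follows by the standard fact (Proposition~\ref{Theorem_6.B.16.(a)} applied with $k=1$ to a non-decreasing functional $g$) that $\leq_{\text{st}}$ on processes implies the corresponding ordering of expectations of non-decreasing functionals, whenever those expectations exist.

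The only subtlety, and the main thing to pin down, is the equivalence of the two scheduling disciplines on a given sample path. In particular, one should check that the buffer bookkeeping is consistent: a preempted packet re-entered into the queue retains its generation time and arrival time, so the co-monotonicity of $(s_i)$ and $(a_i)$ continues to imply identical rankings in the queue throughout the execution, and the choice of packet to admit/drop when the buffer is full is irrelevant by Corollary~\ref{cor1} (whose conclusion transfers verbatim to LCFS-S under the same identification). Once this pathwise coupling is recorded, the corollary is immediate.
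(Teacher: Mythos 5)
Your proposal is correct and matches the paper's reasoning: the paper proves the corollary simply as a direct consequence of Theorem~\ref{thm1}, relying on the observation (stated in the text) that the LGFS and LCFS disciplines coincide when packets arrive in the order of their generation times and $r=1$, which is exactly the pathwise identification you spell out. Your write-up just makes the sample-path equivalence and the buffer bookkeeping explicit, which the paper leaves implicit.
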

\begin{proof}
This corollary is a direct result of Theorem \ref{thm1}.
\end{proof}
\subsubsection{Simulation Results}
We present some simulation results to compare the age performance of the prmp-LGFS-R policy with other policies. The packet service times are exponentially distributed with mean $1/\mu =1$. The inter-generation times are \emph{i.i.d.} Erlang-2 distribution with mean $1/\lambda$. The number of servers is $m$. Hence, the traffic intensity is $\rho=\lambda/m\mu$. \footnote{Throughout this paper, the traffic intensity $\rho$ is computed without considering replications (i.e., $\rho$ is calculated when $r=1$).} The queue size is $B$, which is a non-negative integer. 

\begin{figure}[h]
\centering
\includegraphics[scale=0.45]{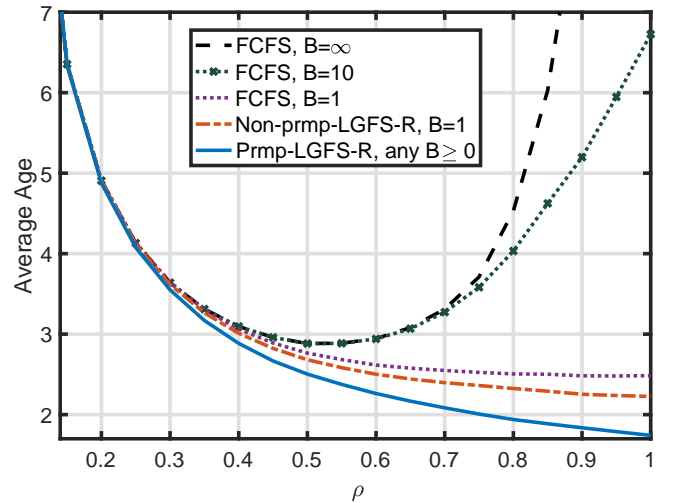}
\caption{Average age versus traffic intensity $\rho$ for an update system with $m=1$ server, queue size $B$, and  \emph{i.i.d.} exponential service times.}
\label{avg_age1}
\end{figure}
Figure \ref{avg_age1} illustrates the time-average age versus $\rho$ for an information-update system with $m=1$ server. The time difference ($a_i-s_i$) between packet generation and arrival is zero, i.e., the update packets arrive in the same order of their generation times. We can observe that the prmp-LGFS-R policy achieves a smaller age than the FCFS policy analyzed in \cite{KaulYatesGruteser-Infocom2012}, and the non-preemptive LGFS policy with queue size $B=1$ which is equivalent to ``M/M/1/2*'' in \cite{CostaCodreanuEphremides_TIT,CostaCodreanuEphremides2014ISIT} in this case. 
Note that in these prior studies, the age was characterized only for the special case of Poisson arrival process. Moreover, with ordered arrived packets at the server, the LGFS policy and LCFS policy have the same age performance.

\begin{figure}[h]
\centering
\includegraphics[scale=0.45]{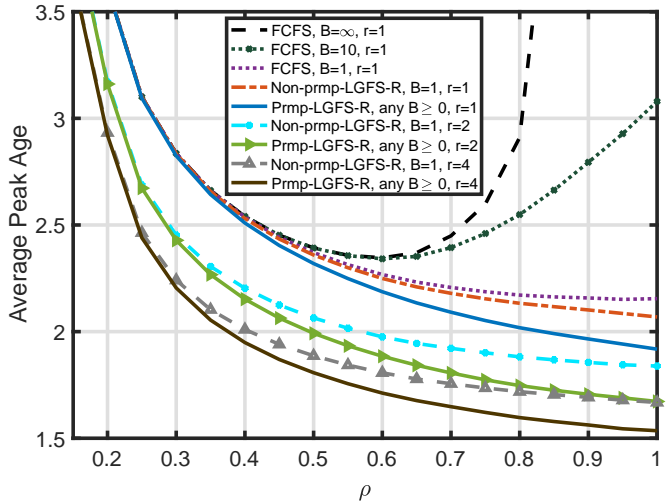}
\caption{Average peak age versus traffic intensity $\rho$ for an update system with $m=4$ servers, queue size $B$, maximum replication degree $r$, and \emph{i.i.d.} exponential service times.}
\label{avg_peak_age2}
\end{figure}
Figure \ref{avg_peak_age2} plots the average peak age versus $\rho$ for an information-update system with $m=4$ servers. The time difference between packet generation and arrival, i.e., $a_i-s_i$, is modeled to be either 1 or 100, with equal probability. The maximum replication degree $r$ is either 1, 2, or 4. For each $r$, we found that the prmp-LGFS-R policy achieves better age performance than other policies that belong to the policy space $\Pi_r$. For example, the age performance of the prmp-LGFS-R policy when $r=2$ is better than the age performance of the other policies that are plotted when $r$ equal to 1 and 2. Note that the age performance of the prmp-LGFS-R policy remains the same for any queue size $B\geq0$. However, the age performance of the non-prmp-LGFS-R policy and FCFS policy varies with the queue size $B$. We also observe that the average peak age in case of FCFS policy with $B=\infty$ blows up when the traffic intensity is high. This is due to the increased congestion in the network which leads to a delivery of stale packets. Moreover, in case of FCFS policy with $B=10$, the average peak age is high but bounded at high traffic intensity, since the fresh packet has a better opportunity to be delivered in a relatively short period compared with FCFS policy with $B=\infty$. These numerical results agree with Theorem \ref{thm1}.

\subsection{NBU Service Time Distributions}
The next question we proceed to answer is whether for an important class of distributions that are more general than exponential, age-optimality or near age-optimality can be achieved. We consider the class of NBU packet service time distributions, which are defined as follows.
\begin{definition}  \textbf{New-Better-than-Used distributions:} Consider a non-negative random variable $Z$ with complementary cumulative distribution function (CCDF) $\bar{F}(z)=\mathbb{P}[Z>z]$. Then, $Z$ is \textbf{New-Better-than-Used (NBU)} if for all $t,\tau\geq0$
\begin{equation}\label{NBU_Inequality}
\bar{F}(\tau +t)\leq \bar{F}(\tau)\bar{F}(t).
\end{equation} 
\end{definition}
Examples of NBU distributions include constant service time, Gamma distribution, (shifted) exponential distribution, geometric distribution, Erlang distribution, negative binomial distribution, etc. 

\begin{algorithm}[!t]
\footnotesize
\SetKwData{NULL}{NULL}
\SetCommentSty{footnotesize} 
$\delta:=0$\tcp*[r]{$\delta$ is the smallest generation time of the packets in the queue}
$I:=m$\tcp*[r]{$I$ is the number of idle servers}
$k:=\lfloor\frac{m}{r}\rfloor$\tcp*[r]{$k$ is number of packets that each one of them can be replicated on $r$ servers}
\While{the system is ON} {
\If{a new packet $p_i$ with generation time $s$ arrives}{ 
\uIf(\tcp*[f]{All servers are busy}){I=0}{
\uIf{Buffer is full}{
\uIf(\tcp*[f]{Packet $p_i$ carries fresh information}){$s>\delta$}{
 Packet $p_i$ replaces the packet with generation time $\delta$ in the queue\;
 }
 \Else (\tcp*[f]{Packet $p_i$ is stale}){
  Drop packet $p_i$\;}}
  \Else{
Store packet $p_i$ in the queue\;
}Update $\delta$\;}
\Else(\tcp*[f]{At least one of the servers is idle})
{
Replicate packet $p_i$ on $\min\{r,I\}$ idle servers\;
Update $I$\; 
}
\If{a packet $p_l$ is delivered}{
Cancel the remaining replicas of packet $p_l$\;
Update $I$\;
Find packet $p_j$ that is replicated on $(m-kr)$ servers\;
\uIf{the queue is empty \textbf{and} packet $p_j$ exists}{
Replicate packet $p_j$ on extra $((k+1)r-m)$ idle servers\;}
\ElseIf{the queue is not empty}{
Pick the packet with the largest generation time in the queue $p_h$\;
\If{packet $p_j$ exists \textbf{and} generation time of packet $p_j>$ generation time of packet $p_h$}{
Replicate packet $p_j$ on extra $((k+1)r-m)$ idle servers\;
Update $I$\;}
Replicate packet $p_h$ on $\min\{r,I\}$ idle servers\;
}
Update $I$\;
Update $\delta$\;}
}}
\caption{Non-prmp-LGFS-R policy.}\label{alg2}
\end{algorithm}

Next, we show that near age-optimality can be achieved when the service times are NBU. We propose a policy called non-preemptive LGFS with replication (non-prmp-LGFS-R). The non-prmp-LGFS-R policy has the same main features of the prmp-LGFS-R policy except that the non-prmp-LGFS-R policy does not allow packet preemption. Moreover, under the non-prmp-LGFS-R policy, the fresh packet replaces the packet with the smallest generation time in the queue when it has a finite buffer size that is full. The description of the non-prmp-LGFS-R policy is depicted in Algorithm \ref{alg2}: This algorithm explains the procedures that the non-prmp-LGFS-R policy follows in the case of packet arrival and departure events as follows.
\begin{itemize}
\item \textbf{Packet arrival event:} If a new packet $p_i$ arrives and all servers are busy, then we have two cases.

\textbf{Case 1:} The buffer is full. In this case, packet $p_i$ is either dropped or replaces another packet in the queue depending on its generation time, as depicted in Steps 7-12.

\textbf{Case 2:} The buffer is not full. In this case, packet $p_i$ is stored directly in the queue, as depicted in Steps 13-15. 

If there are idle servers, then packet $p_i$ is replicated on the available idle servers such that the total number of replicas of packet $p_i$ does not exceed $r$, as illustrated in Steps 17-20.

\item \textbf{Packet departure event:} If a packet $p_l$ is delivered, we cancel all the remaining replicas of packet $p_l$. Also, if there is a packet $p_j$ that is replicated on fewer servers than $r$ servers, then packet $p_j$ is replicated on extra $((k+1)r-m)$ servers under two cases.

 \textbf{Case A:} If the queue is empty, as depicted in Steps 24-26.
 
 \textbf{Case B:} If the queue is not empty, but the generation time of packet $p_j$ is greater than the largest generation time of the packets in the queue, as depicted in Steps 27-32. 
 
 Finally, if the queue is not empty, the packet with the largest generation time in the queue is replicated on the available idle servers such that the total number of replicas of this packet does not exceed $r$, as illustrated in Step 33.
\end{itemize}

It is important to emphasize that the non-prmp-LGFS-R policy is a causal policy, i.e., its scheduling decisions are made based on the history and current state of the system and do not require the knowledge of the future packet generation and arrival times. To show that policy non-prmp-LGFS-R can come close to age-optimal, we need to construct an age lower bound as follows: 

 Let $v_i$ denote the earliest time that packet $i$ has started service (the earliest assignment time of packet $i$ to a server), which is a function of the scheduling policy $\pi$.
\begin{figure}
\includegraphics[scale=0.4]{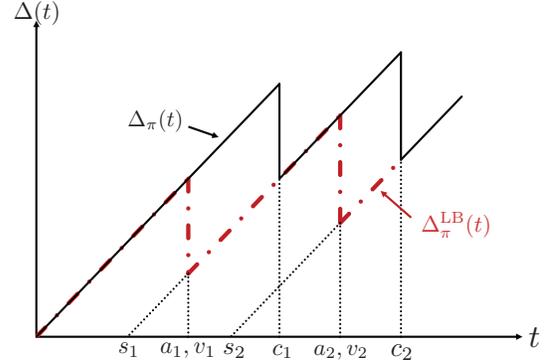}
\centering
\caption{The evolution of $\Delta_{\pi}^{\text{LB}}$ and $\Delta_{\pi}$ in a single server queue. We assume that $a_1>s_1$ and $a_2>c_1>s_2$. Thus, we have $v_1=a_1$ and $v_2=a_2$.}\label{Fig:lower_bound_shap}
\vspace{-0.3cm}
\end{figure} 
Define a function $\Delta_\pi^{\text{LB}}(t)$ as 
\begin{equation}\label{lowerbound}
\Delta_\pi^{\text{LB}}(t)=t-\max\{s_i : v_i(\pi)\leq t\}.
\end{equation} 
The process of $\Delta_\pi^{\text{LB}}(t)$ is given by $\Delta_\pi^{\text{LB}}=\{\Delta_\pi^{\text{LB}}(t), t\in [0,\infty)\}$. The definition of the process $\Delta_\pi^{\text{LB}}(t)$ is similar to that of the age process of policy $\pi$ except that the packet completion times are replaced by their assignment times to the servers. In this case, the process $\Delta_\pi^{\text{LB}}(t)$ increases linearly with $t$ but is reset to a smaller value with the assignment of a fresher packet to a server under policy $\pi$, as shown in Fig. \ref{Fig:lower_bound_shap}.
The process $\Delta_{\text{non-prmp-LGFS-R}}^{\text{LB}}$ is a lower bound of all policies in $\Pi_m$ in the following sense.
\begin{lemma}\label{lowerbound_lemma}
Suppose that the packet service times are NBU, and \emph{i.i.d.} across servers and the packets assigned to the same server, then for all $\mathcal{I}$ satisfying $B\geq 1$, and $\pi\in\Pi_{m}$
\begin{equation}\label{hypo_na2}
[\Delta_{\text{non-prmp-LGFS-R}}^{\text{LB}}\vert\mathcal{I}]\leq_{\text{st}}[\Delta_{\pi}\vert\mathcal{I}].
\end{equation}
\end{lemma}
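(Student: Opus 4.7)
The plan is to establish the lemma by constructing a coupling under which the sample-path inequality $\Delta_{\text{non-prmp-LGFS-R}}^{\text{LB}}(t)\leq\Delta_{\pi}(t)$ holds for every $t\geq 0$, and then to lift this to the stochastic ordering of the two processes via Proposition \ref{Theorem_6.B.30}. Since $\Delta_{\pi}^{\text{LB}}$ is defined from assignment times $v_{i}$ while $\Delta_{\pi}$ is defined from completion times $c_{i}$, the desired pathwise inequality is equivalent to the following statement: at every instant $t$, the freshest packet that non-prmp-LGFS-R has \emph{started serving} by time $t$ is at least as fresh as the freshest packet that $\pi$ has \emph{delivered} by time $t$.

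First, I would set up the coupling. Both systems are fed the same generation and arrival sequences $\{s_{i}\}$ and $\{a_{i}\}$. For the service times, I would use the NBU inequality $\bar{F}(\tau+t)\leq\bar{F}(\tau)\bar{F}(t)$ to construct, on a common probability space, coupled versions of the residual service times on the $\pi$-servers and the fresh service times drawn by non-prmp-LGFS-R such that, almost surely, the fresh times used by non-prmp-LGFS-R are dominated by the corresponding $\pi$-residuals. This is the multi-server, replication-aware analogue of the NBU couplings used in the near delay-optimality results referenced in \cite{sun2016delay, shah2016redundant, joshi2017efficient}.

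Second, I would prove the pathwise inequality by induction on the ordered sequence of system events in the coupled realization. Let $S_{\pi}(t)=\max\{s_{i}:c_{i}(\pi)\leq t\}$ and $S^{\text{LB}}(t)=\max\{s_{i}:v_{i}(\text{non-prmp-LGFS-R})\leq t\}$; the target is $S^{\text{LB}}(t)\geq S_{\pi}(t)$ for all $t$. Packet arrivals and non-prmp-LGFS-R assignments can only preserve or strengthen this inequality, so the only nontrivial event is a $\pi$-completion at some time $t^{\star}$ of a packet $i$ with generation time $s_{i}=S_{\pi}(t^{\star})$. The claim then reduces to showing that, by $t^{\star}$, non-prmp-LGFS-R has started serving some packet of generation time $\geq s_{i}$. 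Because $B\geq 1$, packet $i$ cannot be displaced in the non-prmp-LGFS-R buffer by any staler packet, so it remains continuously available for service from $a_{i}$ onward (unless already served, or superseded by a still-fresher packet). If any non-prmp-LGFS-R server becomes idle during $[a_{i},t^{\star}]$, the LGFS discipline immediately assigns a packet of generation time $\geq s_{i}$ and the induction advances; otherwise, all $m$ servers are busy throughout $[a_{i},t^{\star}]$, and the NBU coupling applied to their residual times forces a completion — and hence a new assignment — in the non-prmp-LGFS-R system no later than $t^{\star}$, again yielding an assignment of a packet with generation time $\geq s_{i}$.

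The main obstacle is carrying out the NBU coupling consistently across all $m$ servers when $\pi$ is allowed to be an arbitrary preemptive and non-causal policy with replication up to degree $m$. Replicas under $\pi$ may be cancelled at any moment and the sets of in-service packets under the two systems can diverge substantially after an arbitrary history, so the bookkeeping must match each effective $\pi$-service (after accounting for preemptions and cancellations) to a fresh non-prmp-LGFS-R service through repeated applications of the NBU inequality; the NBU property is precisely what makes these cross-system residual-versus-fresh comparisons go through. Once $S^{\text{LB}}(t)\geq S_{\pi}(t)$ is established pathwise, it is equivalent to $\Delta_{\text{non-prmp-LGFS-R}}^{\text{LB}}(t)\leq\Delta_{\pi}(t)$ for all $t$, and Proposition \ref{Theorem_6.B.30} then yields the stochastic dominance claimed in the lemma.
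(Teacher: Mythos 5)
Your reduction of the lemma to the statement ``at every $t$, the freshest packet that non-prmp-LGFS-R has started serving is at least as fresh as the freshest packet $\pi$ has delivered,'' and your two-case analysis (an idle server picks up a packet as fresh as packet $i$ immediately; otherwise compare busy-server residuals against $\pi$'s service of packet $i$), do match the structure of the paper's argument. But there is a genuine gap in the probabilistic machinery. First, the coupling you set up is stated backwards and with the two systems swapped: the NBU property gives that a \emph{residual} service time is stochastically \emph{smaller} than a fresh one, and what the argument needs is that the remaining service times of the packets currently in service under non-prmp-LGFS-R, on the very servers where $\pi$ places its replicas of the tagged packet and at the instants it places them, are dominated by the \emph{fresh} service times of those replicas under $\pi$. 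You instead assert that the fresh times drawn by non-prmp-LGFS-R are a.s.\ dominated by $\pi$-residuals, which NBU does not provide and which is not even the comparison your own induction step later invokes, so the write-up is internally inconsistent on the key inequality.

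Second, and more fundamentally, the global almost-sure coupling on which your whole plan rests is never constructed; you flag it as ``the main obstacle'' and then use it anyway. The residuals to be compared depend on elapsed service times determined by the coupled evolution itself, the comparison instants are chosen by an arbitrary (possibly non-causal, preemptive, replicating) $\pi$, and the same service-time randomness enters several comparisons, so a simultaneous pathwise domination at all such instants cannot be extracted from the NBU inequality alone. The paper avoids a pathwise coupling for this lemma entirely: it notes that $\Delta_\pi$ is an increasing function of the vector $\mathbf{D}(\pi)$ of earliest delivery times of packets at least as fresh as packet $i$ (see \eqref{Def_na1_2}), that $\Delta^{\text{LB}}$ is an increasing function of the corresponding earliest assignment times $\mathbf{\Gamma}$ in \eqref{Def_na1_1}, reduces the claim via Proposition \ref{Theorem_6.B.16.(a)} to $[\mathbf{\Gamma}(P)\vert\mathcal{I}]\leq_{\text{st}}[\mathbf{D}(\pi)\vert\mathcal{I}]$, and proves this coordinate-by-coordinate with conditioning on the earlier coordinates via Proposition \ref{Theorem_6.B.3}; each coordinate step applies the NBU bound only in distribution, combined with independence across servers and Proposition \ref{Theorem_6.B.16.(b)}, to get $\min_{w}(\tau_w+R_{l_w})\leq_{\text{st}} c_j(\pi)$, and it separately handles the case in which a packet fresher than the tagged one is delivered earlier under $\pi$. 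To salvage your pathwise formulation you would have to build such a sequential/conditional coupling explicitly (Proposition \ref{Theorem_6.B.30} only tells you one exists \emph{after} the ordering is proved); as written, the central step of your proposal is asserted rather than proved.
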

\begin{proof}
See Appendix~\ref{Appendix_B}.
\end{proof}
We can now proceed to characterize the age performance of policy non-prmp-LGFS-R. 
Let $X_1,\ldots, X_m$ denote the \emph{i.i.d.} packet service times of the $m$ servers, with mean $E[X_l] = E[X] <\infty$. We use Lemma \ref{lowerbound_lemma} to prove the following theorem. 
\begin{theorem}\label{thmNBU1}
Suppose that the packet service times are NBU, and \emph{i.i.d.} across servers and the packets assigned to the same server, then for all $\mathcal{I}$ satisfying $B\geq 1$
\vspace{.25cm}\begin{itemize}
\abovedisplayskip=-\baselineskip
\abovedisplayshortskip=-\baselineskip
\item[(a)] 
\begin{equation}\label{NBUhypoeq1}
\begin{split}
\min_{\pi\in\Pi_m} [\bar{\Delta}_{\pi}\vert\mathcal{I}]\leq [\bar{\Delta}_{\text{non-prmp-LGFS-R}}\vert\mathcal{I}]\leq \\\min_{\pi\in\Pi_m} [\bar{\Delta}_{\pi}\vert\mathcal{I}]+\mathbb{E}[X].
\end{split}
\end{equation}
\end{itemize}
\begin{itemize}
\item[(b)] If there is a positive integer $a$ such that $m = ar$, then
\begin{equation}\label{NBUhypoeq2}
\begin{split}
\min_{\pi\in\Pi_m} [\bar{\Delta}_{\pi}\vert\mathcal{I}]\leq [\bar{\Delta}_{\text{non-prmp-LGFS-R}}\vert\mathcal{I}]\leq \\\min_{\pi\in\Pi_m} [\bar{\Delta}_{\pi}\vert\mathcal{I}]+\mathbb{E}[\min_{l=1,\ldots,r}X_{l}],
\end{split}
\end{equation} 
\end{itemize}
where $\bar{\Delta}_{\pi}=\limsup_{T\rightarrow\infty}\frac{\mathbb{E}[\int_0^{T} \Delta_{\pi}(t)dt]}{T}$ is the average age under policy $\pi$.
\end{theorem}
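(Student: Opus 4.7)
The plan is to split each inequality into a lower-bound part, handled immediately by Lemma~\ref{lowerbound_lemma}, and a matching upper-bound part, handled by a sample-path estimate of the gap $G(t):=\Delta_{\text{non-prmp-LGFS-R}}(t)-\Delta^{\text{LB}}_{\text{non-prmp-LGFS-R}}(t)=U^{\text{LB}}(t)-U(t)\geq 0$.

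The lower bound in both \eqref{NBUhypoeq1} and \eqref{NBUhypoeq2} is an immediate consequence of Lemma~\ref{lowerbound_lemma}, which gives $[\Delta^{\text{LB}}_{\text{non-prmp-LGFS-R}}\vert\mathcal{I}]\leq_{\text{st}}[\Delta_{\pi}\vert\mathcal{I}]$ for every $\pi\in\Pi_m$. Because the integral functional $\int_0^T(\cdot)\,dt$ is non-decreasing in the sample path, Proposition~\ref{Theorem_6.B.16.(a)} yields $\mathbb{E}[\int_0^T\Delta^{\text{LB}}_{\text{non-prmp-LGFS-R}}(t)\,dt]\leq\mathbb{E}[\int_0^T\Delta_{\pi}(t)\,dt]$; dividing by $T$, taking $\limsup_{T\to\infty}$, and minimizing over $\pi$ produces $\bar{\Delta}^{\text{LB}}_{\text{non-prmp-LGFS-R}}\leq\min_{\pi\in\Pi_m}\bar{\Delta}_{\pi}$. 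The reverse inequality $\min_{\pi}\bar{\Delta}_{\pi}\leq\bar{\Delta}_{\text{non-prmp-LGFS-R}}$ is automatic since non-prmp-LGFS-R $\in\Pi_m$.

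For the upper bound I will show that $\bar{\Delta}_{\text{non-prmp-LGFS-R}}-\bar{\Delta}^{\text{LB}}_{\text{non-prmp-LGFS-R}}\leq\mathbb{E}[Y]$, with $\mathbb{E}[Y]=\mathbb{E}[X]$ in case (a) and $\mathbb{E}[Y]=\mathbb{E}[\min_{l\leq r}X_l]$ in case (b). The starting point is the layer-cake identity
\begin{equation*}
\int_0^T G(t)\,dt=\int_0^{\infty}\bigl|\{t\in[0,T]:U(t)<u\leq U^{\text{LB}}(t)\}\bigr|\,du.
\end{equation*}
Setting $\sigma(u):=\inf\{v_i:s_i>u\}$ and $\tau(u):=\inf\{c_i:s_i>u\}$, the inner set equals $[\sigma(u),\tau(u))\cap[0,T]$. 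The structural key is non-preemption: the first packet $K(u)$ with $s_{K(u)}>u$ enters service at $\sigma(u)$ and completes at $\sigma(u)+Y_{K(u)}$, where $Y_{K(u)}$ is the minimum of the $r_{K(u)}\leq r$ i.i.d.\ copies of $X$ on its servers; hence $\tau(u)-\sigma(u)\leq Y_{K(u)}$. Enumerating the record-assigned packets $K_1,K_2,\ldots$ that successively raise $U^{\text{LB}}$, the map $u\mapsto K(u)$ is piecewise constant with $K(u)=K_j$ on $[s_{K_{j-1}},s_{K_j})$ (with $s_{K_0}:=0$), so
\begin{equation*}
\int_0^T G(t)\,dt\leq \sum_{j:\,v_{K_j}\leq T} Y_{K_j}\bigl(s_{K_j}-s_{K_{j-1}}\bigr).
\end{equation*}
Conditioning on the history $\mathcal{H}_{v_{K_j}}$ up to the assignment of $K_j$ freezes $K_j$, $r_{K_j}$ and $s_{K_j}-s_{K_{j-1}}$, while $Y_{K_j}$ is drawn afresh at service commencement and satisfies $\mathbb{E}[Y_{K_j}\mid\mathcal{H}_{v_{K_j}}]\leq\mathbb{E}[X]$ in case (a); in case (b), the invariant $m=ar$ forces the idle-server count to remain a multiple of $r$, so $r_{K_j}=r$ always and $\mathbb{E}[Y_{K_j}\mid\mathcal{H}_{v_{K_j}}]=\mathbb{E}[\min_{l\leq r}X_l]$. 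Telescoping $\sum_j(s_{K_j}-s_{K_{j-1}})=s_{K_J}\leq T$ (using $s_i\leq a_i\leq v_i\leq T$) then yields $\mathbb{E}[\int_0^T G(t)\,dt]\leq\mathbb{E}[Y]\cdot T$, and dividing by $T$ and taking $\limsup$ closes the argument.

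The main obstacle is the independence-based factorization underlying the last step: the identity of the $j$-th record packet $K_j$ and its replication degree $r_{K_j}$ both depend intricately on the history via the scheduling rule, so the conditioning must be applied carefully to the $\sigma$-algebra generated by the arrival, generation, and completion events strictly before $v_{K_j}$, using that the service times are drawn only upon commencement and are independent of this $\sigma$-algebra; in case (b) one must separately verify the structural invariant that under non-prmp-LGFS-R with $m=ar$ idle servers always appear in groups of $r$, which propagates inductively from the initial state.
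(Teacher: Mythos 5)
Your proof is correct and takes essentially the same route as the paper's Appendix~D: the lower bound comes from Lemma~\ref{lowerbound_lemma}, and the upper bound from showing the average gap between $\Delta_{\text{non-prmp-LGFS-R}}$ and $\Delta^{\text{LB}}_{\text{non-prmp-LGFS-R}}$ is at most $\mathbb{E}[X]$ (resp.\ $\mathbb{E}[\min_{l=1,\ldots,r}X_{l}]$ when $m=ar$), exploiting that non-preemption makes the first packet fresher than each age level complete within one fresh service draw (or a minimum of $r$ fresh draws) of its assignment epoch, with inter-generation weights summing to at most $T$. Your layer-cake/record-packet bookkeeping with stopping-time conditioning, and the $m=ar$ idle-servers-in-multiples-of-$r$ invariant you flag, are just a more explicit rendering of the paper's parallelogram decomposition $G_i=\tau_i d_i$ with $\mathbb{E}[d_i]\leq\mathbb{E}[X]$ and its assertion that the relevant packet is replicated on exactly $r$ servers.
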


\begin{proof}
See Appendix~\ref{Appendix_D}.
\end{proof}

Theorem \ref{thmNBU1} tells us that for arbitrary sequence of packet generation times $(s_1, s_2, \ldots)$, sequence of packet arrival times $(a_{1}, a_{2}, \ldots)$, number of servers $m$, maximum replication degree $r$, and buffer size $B\geq1$, the non-prmp-LGFS-R policy is within a constant age gap from the optimum average age among policies in $\Pi_m$. It is important to emphasize that policy non-prmp-LGFS-R with a maximum replication degree $r$ can be near age-optimal compared with policies with any maximum replication degree. Also, when the update packets arrive in the same order of their generation times and there is no replication, the non-prmp-LGFS-R policy when $B=1$ becomes LCFS with preemption only in waiting (LCFS-W) for a single source case in \cite{RYatesTIT16}, or the ``M/M/1/2*'' in \cite{CostaCodreanuEphremides_TIT,CostaCodreanuEphremides2014ISIT}. Thus, these policies can achieve near age-optimality in this case. The following corollary emphasizes this relationship.
\begin{corollary}
Under the conditions of Theorem \ref{thmNBU1}, if the packets arrive to the queue in the same order of their generation times, replications are not allowed ($r=1$), and $B=1$, then for all $\mathcal{I}$, the LCFS-W policy and the ``M/M/1/2*'' policy  are near age-optimal, i.e., these policies satisfy \eqref{NBUhypoeq1}.
\end{corollary}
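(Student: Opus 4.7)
The plan is to reduce the corollary directly to Theorem \ref{thmNBU1} by establishing that, under the three restrictions ``in-order arrivals'', $r=1$, and $B=1$, the non-prmp-LGFS-R policy becomes operationally identical (sample-path by sample-path) to the LCFS-W policy and, in the single-server case, to the ``M/M/1/2*'' policy. Once this equivalence is proved, the age processes of these policies coincide almost surely, so the average-age bound in \eqref{NBUhypoeq1} that Theorem \ref{thmNBU1} establishes for non-prmp-LGFS-R transfers verbatim to LCFS-W and to ``M/M/1/2*''.

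To argue the equivalence I would proceed in three short steps. First, under in-order arrivals the LGFS rule collapses to the LCFS rule: since $\{a_i\}$ respects the order of $\{s_i\}$, the packet in the system with the largest generation time is always the most recently arrived packet, so the two priority rules pick the same packet at every instant. Second, setting $r=1$ trivializes the replication logic of Algorithm \ref{alg2}: whenever the server becomes idle it serves the highest-priority packet without creating duplicates, and the post-departure branches that would assign extra replicas to some packet $p_j$ are never triggered. Third, with $B=1$, Steps 7--12 of Algorithm \ref{alg2} specify that a fresh arrival replaces the unique waiting packet and a stale arrival is dropped, which is precisely the ``preemption only in waiting'' rule defining LCFS-W in \cite{RYatesTIT16} and coincides, in the single-server exponential case, with the ``M/M/1/2*'' discipline of \cite{CostaCodreanuEphremides_TIT,CostaCodreanuEphremides2014ISIT}. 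These three observations imply that the scheduling decisions, packet assignments, and completion events agree event-by-event across the policies, so the age processes agree pointwise and in particular $\bar{\Delta}_{\text{LCFS-W}} = \bar{\Delta}_{\text{non-prmp-LGFS-R}}$ (and likewise for ``M/M/1/2*'').

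Applying Theorem \ref{thmNBU1}(a) to the common age process then delivers \eqref{NBUhypoeq1} for LCFS-W and for ``M/M/1/2*''. The main obstacle is purely the bookkeeping required to argue the policy equivalence rigorously, i.e.\ walking through each branch of Algorithm \ref{alg2} under the three simplifying assumptions and verifying that every scheduling decision reduces to the corresponding LCFS-W (or ``M/M/1/2*'') decision; once this verification is in place, no new age analysis is needed and the corollary follows as a direct specialization of Theorem \ref{thmNBU1}.
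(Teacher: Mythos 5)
Your proposal is correct and matches the paper's approach: the paper also proves this corollary as a direct specialization of Theorem \ref{thmNBU1}, relying on the observation (stated in the text preceding the corollary) that with in-order arrivals, $r=1$, and $B=1$ the non-prmp-LGFS-R policy coincides with LCFS-W and the ``M/M/1/2*'' discipline. Your three-step verification of the policy equivalence is just a more explicit spelling-out of the same identification.
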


\begin{proof}
This corollary is a direct result of Theorem \ref{thmNBU1}.
\end{proof}

\subsubsection{Simulation Results}
We now provide simulation results to illustrate the age performance of different policies when the service times are NBU. The inter-generation times are \emph{i.i.d.} Erlang-2 distribution with mean $1/\lambda$. The time difference ($a_i-s_i$) between packet generation and arrival  is zero. The maximum replication degree $r$ is either 1 or 4.

\begin{figure}[h]
\centering
\includegraphics[scale=0.45]{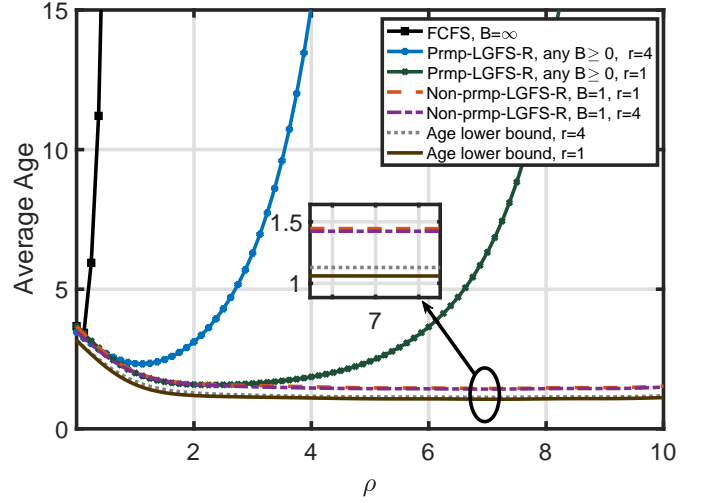}
\caption{Average age versus traffic intensity $\rho$ for an update system with $m=4$ servers, queue size $B$, maximum replication degree $r$, and \textit{i.i.d} NBU service times.}
\label{avg_age4}
\end{figure}
Figure \ref{avg_age4} plots the average age versus $\rho$ for an information-update system with $m=4$ servers. The packet service times are the sum of a constant .25 and a value drawn from an exponential distribution with mean .25. Hence, the mean service time is $1/\mu=.5$. The ``Age lower bound'' curves are generated by using $\frac{\int_0^{T}\Delta_{\text{non-prmp-LGFS-R}}^{\text{LB}}(t)dt}{T}$ when $r$ is 1 and 4, and $B=1$ which, according to Lemma \ref{lowerbound_lemma}, are lower bounds of the optimum average age. We can observe that the gap between the ``Age lower bound'' curves and the average age of the non-prmp-LGFS-R policy when $r=1$ and $r=4$ is no larger than $E[X] = 1/\mu = .5$, which agrees with Theorem \ref{thmNBU1}. This is a surprising result since it was shown in \cite{sun2016delay, shah2016redundant, joshi2017efficient} that replication policies are far from the optimum delay and throughput performance for NBU service time distributions. Moreover, we can observe that the average age of the prmp-LGFS-R policies blows up when the traffic intensity is high.
This is because the packet service times do not have the memoryless property in this case. Hence, when a packet is preempted, the service time of a new packet is probably longer than the remaining service time of the preempted packet. Because the arrival rate is high, packet preemption happens frequently, which leads to infrequent packet delivery and increases the age, as observed in \cite{CostaCodreanuEphremides_TIT}.

\begin{figure}[h]
\centering
\includegraphics[scale=0.45]{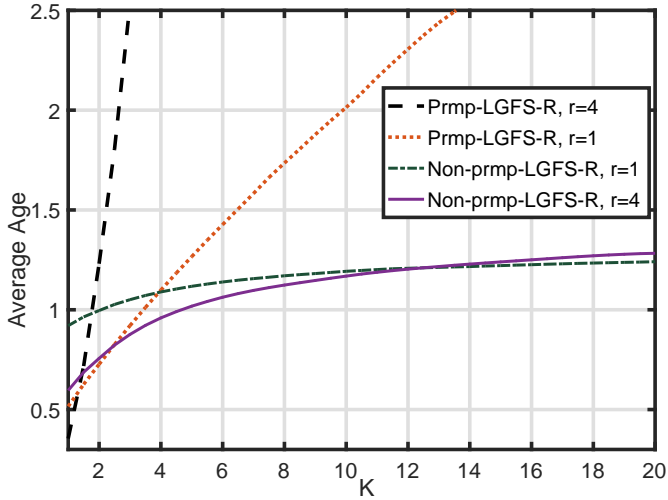}
\caption{Average age under gamma service time distributions with different shape parameter $K$, where $m=4$ servers, queue size $B=\infty$, and maximum replication degree $r$.}
\label{avg_peak_age3}
\end{figure}
Figure \ref{avg_peak_age3} plots the average age under gamma service time distributions with different shape parameter $K$, where $m=4$, $B=\infty$, and the traffic intensity $\rho=\lambda/m\mu=1.8$. The mean of the gamma service time distributions is normalized to $1/\mu =1$. Note that the average age of the FCFS policy in this case is extremely high and hence is not plotted in this figure. One can notice that packet replication and preemption affect the age performance of the plotted policies. In particular, we found that packet replication improves the age performance of the non-prmp-LGFS-R policy when the shape parameter $K\leq 12.5$, where the non-prmp-LGFS-R policy for $r=4$ outperforms the case of $r=1$. This is because the variance (variability) of the normalized gamma distribution is high for small values of $K$. Thus, packet replication can exploit the diversity provided by the four servers in this case. For the same reason, we can observe that packet replication improves the age performance of the preemptive policies when $K=1$, where the prmp-LGFS-R policy for $r=4$ achieves the best age performance among all plotted policies. Another reason behind the latter observation is that a gamma distribution with shape parameter $K=1$ is an exponential distribution and hence is memoryless. Thus, packet preemption improves the age performance in this case and age-optimality can be achieved by the prmp-LGFS-R policy when $r=m$ as stated in Theorem \ref{thm1} and Corollary \ref{cor1'}. On the other hand, as the shape parameter $K$ increases, the variance (variability) of the normalized gamma distribution decreases. This, in turn, reduces the benefit gained from the diversity provided by four servers and hence worsens the age performance of the policies that use packet replication. Moreover, as can be seen in the figure, preemption further worsens the age performance as the shape parameter $K$ increases, and the average age of the prmp-LGFS-R policies blows up in this case. This is because of the reduction in the variability of the packet service time when the shape parameter $K$ increases as well as the loss of the memoryless property when $K\neq 1$. Thus, preemption is not useful in this case.

\subsection{Discussion}\label{Disc}
In this subsection, we discuss our results and compare it with prior works. 
\subsubsection{Preemption vs. Non-Preemption}
The effect of the preemption on the age performance depends basically on the distribution of the packet service time. More specifically, when the packet service times are exponentially distributed, preemptive policies (i.e., prmp-LGFS-R) can achieve age-optimality (Theorem \ref{thm1}). This is because the remaining service time of a preempted packet has the same distribution as the service time of a new packet. For example, in Fig. \ref{avg_peak_age3}, preemptive policies provide the best age performance when $K=1$ (gamma distribution with shape parameter $K=1$ is an exponential distribution). It is important to notice that preemptive policies can achieve age-optimality regardless of the value of $\rho$, even if the system is unstable when $\rho>1$ ($\rho=1.8$ in Fig. \ref{avg_peak_age3}). Thus, we suggest using preemption when the packet service times are exponentially distributed. However, when the packet service times are NBU, we suggest to not use preemption. This is because the service times are no longer memoryless. Hence, when a packet is preempted, the service time of a new packet is probably longer than the remaining service time of the preempted packet. As shown in Fig. \ref{avg_peak_age3}, the age of the preemptive LGFS policy grows to infinity at high traffic intensity for gamma distributed service times with $K>1$. Thus, we suggest using non-preemptive policies (i.e., non-prmp-LGFS-R) instead when the packet service times are NBU. 

Similar observations have been made in previous studies \cite{RYatesTIT16,Gamma_dist}. For exponential service time distribution, Yates and Kaul showed in Theorem 3(a) of \cite{RYatesTIT16} that the average age of the preemptive LCFS policy is a decreasing function of the traffic intensity $\rho$ in M/M/1 queues as $\rho$ grows to infinite. This agrees with our study, in which we proved that the preemptive LCFS policy is age-optimal for exponential service times and general system parameters. For NBU service time distributions, our study agrees with \cite{Gamma_dist}. In particular, in \cite[Numerical Results]{Gamma_dist}, the authors showed that the non-preemptive LCFS policy can achieve better average age than the preemptive LCFS policy. In this paper, we further show that the non-prmp-LGFS-R policy is within a small constant gap from the optimum age performance for all NBU service time distributions, which include gamma distribution as one example.

In general, our study was carried out for system settings that are more general than \cite{RYatesTIT16} and \cite{Gamma_dist}.

\subsubsection{Replication vs. Non-Replication}
The replication technique has gained significant attention in recent years to reduce the delay in queueing systems \cite{Chen_queue_coding,yin_data_retrieve,sun2016delay}. However, it was shown in \cite{sun2016delay, shah2016redundant, joshi2017efficient} that replication policies are far from the optimum delay and throughput performance for NBU service time distributions. A simple explanation of this result is as follows: Let $X_1, \ldots, X_m$ be \emph{i.i.d}. NBU random variables with mean $\mathbb{E}[X_l]=\mathbb{E}[X]<\infty$. From the properties of the NBU distributions, we can obtain \cite{shaked2007stochastic}
\begin{equation}\label{NBUproperty_delay}
\frac{1}{\mathbb{E}[\min_{l=1,\ldots,m}X_l]}\leq \frac{m}{\mathbb{E}[X]}.
\end{equation}
Now, if $X_l$ represents the packet service time of server $l$, then the left-hand side of \eqref{NBUproperty_delay} represents the service rate when each packet is replicated to all servers; and the right-hand side of \eqref{NBUproperty_delay} represents the service rate when there is no replication. This gives insight why packet replication can worsen the delay and throughput performance when the service times are NBU. 

Somewhat to our surprise, we found that the non-prmp-LGFS-R policy is near-optimal in minimizing the age, even for NBU service time distributions. The intuition behind this result is that the age is affected by only the freshest packet, instead of all the packets in the queue. In other words, to reduce the age, we need to deliver the freshest packet as soon as possible. Obviously, we have
\begin{equation}
\mathbb{E}[\min_{l=1,\ldots,m}X_l]\leq\mathbb{E}[X].
\end{equation}
Thus, packet replication can help to reduce the age by exploiting the diversity provided by multiple servers. As shown in Fig. \ref{avg_peak_age3}, we can observe that packet replication can improve the age performance. In particular, the age performance of the non-prmp-LGFS-R policy with $r=4$ is better than that of the non-prmp-LGFS-R policy with $r=1$ when $K\leq 12.5$.

\section{Throughput-Delay Analysis}\label{thro-anal}
Recent studies on information-update systems have shown that the age-of-information can be reduced by intelligently dropping stale packets. However, packet dropping may not be appropriate in many applications such as (but not limited to):
\begin{itemize}
\item \textbf{News feeds:} In addition to the latest breaking news, the older news may be relevant to the user as well (e.g., to provide context or outline a different story that the user may have missed, etc.). 
\item \textbf{Social updates:} Users may need to be up to date with the freshest events and social posts. Nonetheless, they may also be interested in the previous posts. Thus, social applications need to update users with latest posts and previous ones as well. 
\item \textbf{Stock quotes:} Although the latest price in the market is
very important for the traders, they may also use the history of the price change to predict the short-term price movement and attempt to profit from this. Thus, both the latest prices and historical price data  are important in this case.
\item \textbf{Autonomous driving or sensor information:} In such applications, while it is important to receive the latest information, historical information may also be relevant to exploit trends. For example, historical data on location information can predict the trajectory, velocity, and acceleration of the automobile. Similarly, certain types of historical sensed data may be useful to predict forest fires, earthquakes, Tsunamis, etc.
\end{itemize}
 In these applications, users are interested in not just the latest updates, but also past information. Therefore, all packets may need to be successfully delivered. This motivates us to study whether it is possible to simultaneously optimize multiple performance metrics, such as age, throughput, and delay. In the sequel, we investigate the throughput and delay performance of the proposed policies. We first consider the exponential service time distribution. Then, we generalize the service time distribution to the NBU distributions. We need the following definitions:
\begin{definition}\textbf{Throughput-optimality:} A policy is said to be \textbf{throughput-optimal}, if it maximizes the expected number of delivered packets among all policies. 
\end{definition}
The average delay under policy $\pi$ is defined as
\begin{equation}\label{average_delay}
D_{\text{avg}}(\pi)=\frac{1}{n}\sum_{i=1}^n[c_i(\pi)-a_i],
\end{equation}
where the delay of packet $i$ under policy $\pi$ is $c_i(\pi)-a_i$.\footnote{The $\limsup$ operator is enforced on the right hand side of \eqref{average_delay} if $n\rightarrow\infty$.}
\begin{definition}\textbf{Delay-optimality:} A policy is said to be \textbf{delay-optimal}, if it minimizes the expected average delay among all policies. 
\end{definition}
Note that to maximize the throughput, we need to maximize the total number of distinct delivered packets. Moreover, to minimize the expected average delay, we need to minimize the total number of distinct packets in the system along the time. Based on these two key ideas, we prove our results in the next subsections. 
\subsection{Exponential Service Time Distribution}
We study the throughput and delay performance of the prmp-LGFS-R policy when the service times are \emph{i.i.d.} exponentially distributed. The delay and throughput performance of the prmp-LGFS-R policy are characterized as follows:
\begin{theorem}\label{thm2th}
Suppose that the packet service times are exponentially distributed, and \emph{i.i.d.} across servers and the packets assigned to the same server, then for all $\mathcal{I}$ such that $B=\infty$, the prmp-LGFS-R policy is throughput-optimal and delay-optimal among all policies in $\Pi_m$.
\end{theorem}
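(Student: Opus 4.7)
The plan is a sample-path coupling argument exploiting the memoryless property of exponential service times, in the same spirit as the coupling used to prove Theorem \ref{thm1}. Both claims will be reduced to a pointwise comparison of the cumulative number of distinct packets delivered by time $t$.

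First I would put all policies on a common probability space by assigning each server $l\in\{1,\dots,m\}$ an independent Poisson clock of rate $\mu$. Because the minimum of independent exponentials is exponential and service times are memoryless, any policy can be realized in law by the rule: at each tick of clock $l$, if server $l$ is currently busy, the packet on $l$ completes and any sibling replicas are cancelled. Under this coupling, preemption and re-assignment never discard service work, and the problem reduces to a deterministic comparison on fixed realizations of arrivals and ticks.

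For throughput, let $N_\pi(t)$ count distinct deliveries by time $t$; since $B=\infty$ no packet is dropped, so $N_\pi(t)$ grows by one at every clock tick that hits a busy server whose packet has not yet been completed by a sibling replica. I would run a forward induction on event epochs, parallel to the age proof: at every epoch, if a competing $\pi$ leaves a server idle while the prmp-LGFS-R rule would keep it busy, or distributes its replicas less efficiently than prmp-LGFS-R, a local exchange shows that the resulting trajectory weakly dominates on $N(\cdot)$ at all future times, yielding $\{N_{\text{prmp-LGFS-R}}(t)\}\geq_\text{st}\{N_\pi(t)\}$. For delay, observe that the number of packets in the system equals $Q_\pi(t)=A(t)-N_\pi(t)$, where the arrival counter $A(t)$ is policy-independent, so the throughput ordering gives $Q_{\text{prmp-LGFS-R}}(t)\leq_\text{st} Q_\pi(t)$ pointwise in $t$; integrating and applying Little's law, which relates time-averaged queue length to expected per-packet delay through the common arrival rate, completes the delay part.

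The main obstacle is the exchange step in the throughput argument in the presence of replication: a tick on a server whose packet is being worked on by other servers yields only one new delivery, so keeping a server busy does not translate tick-for-tick into distinct deliveries. I would handle this by tracking, at every epoch, the multiset of (packet, server-count) pairs currently in service and showing that prmp-LGFS-R's assignment dominates any competitor's under a majorization-type ordering that in turn implies domination of the cumulative delivery process along any coupled clock realization.
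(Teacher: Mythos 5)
Your overall architecture (put all policies on a common probability space via the memoryless property, induct forward over arrival/departure epochs, and reduce delay to the number-in-system process) matches the spirit of the paper's proof, but the step that would actually carry your argument is missing: you defer the treatment of replication to an unspecified ``majorization-type ordering'' over (packet, server-count) multisets and explicitly flag it as the main obstacle. That deferred step is precisely the content a proof must supply, and it is also aiming at the wrong target. The paper's proof tracks only the pair consisting of the number of distinct packets in the system, $N_\pi(t)$, and the number of distinct packets delivered, $\gamma_\pi(t)$. It couples the packet delivery instants of prmp-LGFS-R and an arbitrary work-conserving policy to be identical (valid by memorylessness), and then observes that every departure event decrements $N$ by one and increments $\gamma$ by one under \emph{any} policy, no matter which packet departs or how its replicas were spread over servers, while every arrival increments $N$ by one under any policy because $B=\infty$ means no packet is ever dropped. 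Forward induction then gives that $(N_\pi(t),\gamma_\pi(t))$ evolves \emph{identically} (equality in distribution) under prmp-LGFS-R and under every work-conserving policy; the LGFS priority rule and the particular replica placement are simply irrelevant to throughput and delay. So no exchange or majorization argument comparing replica configurations is needed, and the result you are trying to prove is not a dominance of LGFS-R's assignment over a competitor's but an equality among all work-conserving policies, with non-work-conserving policies handled separately by the remark that idling only postpones deliveries. Your sketch stalls exactly where the paper shows there is nothing to prove.

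Two further concrete issues. First, your delay step invokes Little's law, but the generation/arrival times here form an arbitrary sequence with no stationarity or ergodicity assumptions, so Little's law is not available; what is needed is the sample-path identity that the sum of packet delays $\sum_i\bigl(c_i(\pi)-a_i\bigr)$ equals the time integral of the number of packets in the system, so that identical $(N,\gamma)$ processes immediately give identical average delay as defined in \eqref{average_delay}. Second, the pathwise dominance $\{N_{\text{prmp-LGFS-R}}(t)\}\geq_{\text{st}}\{N_\pi(t)\}$ of the delivery counter that your Poisson-clock/exchange plan targets is delicate precisely because prmp-LGFS-R is constrained to replication degree $r$ while competitors in $\Pi_m$ may replicate up to degree $m$ and can therefore keep more servers busy when the queue is empty; the paper's equality-based coupling of delivery epochs for work-conserving policies bypasses any comparison of busy-server counts or replica spreads, which is exactly why its induction closes in two short lemmas (one per event type) while your plan leaves the hard case open.
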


\begin{proof}
We provide a proof sketch of Theorem \ref{thm2th}. We use the coupling and forward induction to prove it. We first consider the comparison between the prmp-LGFS-R policy and an arbitrary work-conserving policy $\pi$. We couple the packet departure processes at each server such that they are identical under both policies. Then, we use the forward induction over the packet arrival and departure events to show that the total number of distinct packets in the system (excluding packet replicas) and the total number of distinct delivered packets are the same under both policies. By this, we show that the prmp-LGFS-R policy has the same throughput and mean-delay performance as any work-conserving policy (indeed, all work-conserving policies have the same throughput and delay performance). Finally, since the packet service times are  \emph{i.i.d.} across servers and the packets assigned to the same server, service idling only postpones the delivery of packets. Therefore, both throughput and delay under non-work-conserving policies will be worse. For more details, see Appendix~\ref{Appendix_C}.
\end{proof}
It is worth pointing out that when the packet service times are \emph{i.i.d} exponentially distributed, packet replication does not affect the throughput and delay performance of the replicative policies. The reasons for this observation can be summarized as follows. Because the packet service times are \emph{i.i.d.} across the servers and the CCDF $\bar{F}$ is continuous, the  probability  for  any  two  servers  to  complete  their  packets  at  the  same  time  is  zero.  Therefore,  in the replicative policies, if  one  copy  of  a  replicated  packet  is  completed  on  a  server,  the  remaining  replicated  copies of this packet are still being processed on the other servers; these replicated packet copies are cancelled immediately and a new packet is replicated on these servers. Due to the memoryless property of the exponential distribution, the service times of the new packet copies and the remaining service times of the cancelled packets have the same distribution. Thus, packet replication does not affect the throughput and delay performance of the replicative policies.

\subsection{NBU Service Time Distributions}
Now, we consider a class of NBU service time distributions. We study the throughput and delay performance of the non-prmp-LGFS-R policy when there is no replication.
The delay and throughput performance of the non-prmp-LGFS-R policy are characterized as follows:
\begin{theorem}\label{thmNBU1th}
Suppose that the packet service times are NBU, and \emph{i.i.d.} across servers and the packets assigned to the same server, then for all $\mathcal{I}$ such that $B=\infty$ and $r=1$, the non-prmp-LGFS-R policy is throughput-optimal and delay-optimal among all non-preemptive policies in $\Pi_{1}$.
\end{theorem}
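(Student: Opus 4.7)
The plan is to mirror the coupling and forward induction strategy sketched for Theorem \ref{thm2th}, replacing the memoryless coupling (which drove the exponential argument) with one suited to NBU service times. The argument naturally splits into two steps: (i) show that non-prmp-LGFS-R matches every non-preemptive work-conserving policy in $\Pi_1$ in both throughput and average delay; (ii) show that non-work-conserving non-preemptive policies can only do worse. Since non-prmp-LGFS-R is itself work-conserving by construction (it assigns a packet to every idle server whenever the queue is non-empty), the two steps together yield the theorem.

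For step (i), let $Y_k$ denote the duration of the $k$-th service started in the system across all servers, and couple $\{Y_k\}$ to be the same i.i.d.\ sequence under non-prmp-LGFS-R and under an arbitrary non-preemptive work-conserving policy $\pi$. This coupling is legitimate because $r=1$ (each packet is served at exactly one server), and the service times are i.i.d.\ across servers and across packets at the same server and independent of the packet identity. I would then prove by forward induction over arrival and completion events that the sequence of service-start times $s_1 \le s_2 \le \cdots$ depends only on $\{a_i\}$ and $\{Y_k\}$, not on the particular work-conserving non-preemptive policy: at an arrival, either an idle server picks the packet up (both policies do so) or the packet joins the queue; at a completion, if the queue is non-empty any work-conserving policy immediately starts a new service of coupled duration $Y_{k+1}$. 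Different policies may assign a given $Y_k$ to different packets, but the multiset of completion times $\{s_k + Y_k\}$ is identical. Because $B=\infty$ guarantees every packet is eventually served, the count of completions by any time $T$ and the sum $\sum_i c_i(\pi)$ are invariants across the class. Hence non-prmp-LGFS-R attains the same throughput and the same average delay $\sum_i (c_i - a_i)/n$ as any non-preemptive work-conserving policy.

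For step (ii), I would take an arbitrary non-preemptive policy $\pi'\in\Pi_1$ and construct a non-preemptive work-conserving policy $\tilde\pi$ that ``fills in'' every idle-while-work-waits interval of $\pi'$. Coupling the $k$-th service-start across the two policies to use the same duration $Y_k$ (legitimate because service times are i.i.d.\ and independent of start time), a forward induction shows that the $k$-th service under $\tilde\pi$ starts no later than under $\pi'$, so each completion time under $\tilde\pi$ is no later than the corresponding one under $\pi'$. The NBU inequality $\bar F(t+\tau)\le\bar F(t)\bar F(\tau)$ enters here as the property guaranteeing that replacing an idle gap followed by a fresh service by an immediate fresh service yields a stochastically no-later completion, and that this dominance is robust under the cascading induction over subsequent events.

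The main obstacle is step (ii): a single idling decision at one server can perturb the joint evolution of (busy servers, remaining service times, queue contents) far into the future, so the induction must track per-server residual service times and per-server next-idle-times, and establish pathwise dominance of completion times under a carefully chosen coupling. This is precisely the place where the proof of Theorem \ref{thm2th} leans on memorylessness, and where the NBU property must substitute; the rest of the argument is a bookkeeping exercise. Once this pathwise comparison is set up, combining (i) and (ii) yields that non-prmp-LGFS-R is throughput-optimal and delay-optimal among all non-preemptive policies in $\Pi_1$.
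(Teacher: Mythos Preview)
Your plan is correct and mirrors the paper's intended argument (the paper in fact omits the proof, noting only that it is similar to that of Theorem~\ref{thm2th}); replacing the memoryless completion-time coupling by a coupling of service durations indexed by service-start order is exactly the right adaptation for the non-preemptive, $r=1$ setting.

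One correction, however: the NBU hypothesis plays no role in this theorem. In Theorem~\ref{thm2th}, memorylessness is used only in step~(i), to legitimize coupling the completion epochs; it is not invoked in the ``idling hurts'' step. Your service-index coupling in step~(i) requires only that service times be i.i.d.\ and that policies be non-preemptive (so that each service consumes exactly one fresh draw). Step~(ii) is then a pathwise comparison under that \emph{same} coupling: an idle gap followed by a fresh service of duration $Y_k$ completes deterministically later than an immediate fresh service of the same duration $Y_k$, so the forward induction on the ordered service-start epochs goes through with no distributional assumption beyond i.i.d. The restriction to non-preemptive $\Pi_1$ is precisely what makes NBU unnecessary here; NBU would enter only if you had to compare against policies that may abandon a partially completed service and restart, which non-preemption rules out. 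Consequently, step~(ii) is not the obstacle you anticipate---under your own coupling it is a clean sample-path monotonicity argument, and the ``cascading'' difficulty you flag does not arise.
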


We omit the proof of Theorem \ref{thmNBU1th}, because it is similar to that of Theorem \ref{thm2th}.



\section{Conclusions}\label{Concl}
In this paper, we studied the age-of-information optimization in multi-server queues. Packet replication was considered in this model, where the maximum replication degree is constrained. We considered general system settings including arbitrary arrival processes where the incoming update packets may arrive \emph{out of order} of their generation times. We developed scheduling policies that can achieve age-optimality for any maximum replication degree when the packet service times are exponentially distributed. This optimality result not only holds for the age process, but also for any \emph{non-decreasing functional} of the age process. Interestingly, the proposed policies can also achieve throughput and delay optimality. In addition, we investigated the class of NBU packet service time distributions and showed that LGFS policies with replication are near age-optimal for any maximum replication degree. 



%
%
%
\appendices
\section{Proof of Theorem \ref{thm1}}\label{Appendix_A}
We need to define the system state of any policy $\pi$:
\begin{definition}  Define $U_{\pi}(t)$ as the largest generation time of the packets at the destination at time $t$ under policy $\pi$. Let $\alpha_{i,\pi}(t)$ be the generation time of the packet that is being processed by server $i$ at time $t$ under policy $\pi$, where we set $\alpha_{i,\pi}(t)=U_{\pi}(t)$ if server $i$ is idle. Then, at any time $t$, the system state of policy $\pi$ is specified by  $\mathbf{V}_{\pi}(t)=(U_{\pi}(t), \alpha_{[1],\pi}(t),$ $\ldots,\alpha_{[m],\pi}(t)) $. Note that if there is a replication, we may have $ \alpha_{[i],\pi}(t)= \alpha_{[i+1],\pi}(t)$ for some $i$'s. Without loss of generality, if $h$ servers are sending packets with generation times less than $U_\pi(t)$ (i.e., $\alpha_{[m],\pi}(t) \leq \alpha_{[m-1],\pi}(t)\leq \ldots\leq \alpha_{[m-h+1],\pi}(t) \leq U_\pi(t)$) or $h$ servers are idle, then we set $\alpha_{[m],\pi}(t) =\ldots =\alpha_{[m-h+1],\pi}(t) = U_\pi(t)$. Hence, 
\begin{equation}\label{sys_state_cond1}
U_\pi (t)\leq \alpha_{[m],\pi}(t)\leq\ldots\leq\alpha_{[1],\pi}(t). 
\end{equation}
Let $\{\mathbf{V}_{\pi}(t), t\in[0,\infty)\}$ be the state process of policy $\pi$, which is assumed to be right-continuous. For notational simplicity, let policy $P$ represent the prmp-LGFS-R policy. Throughout the proof, we assume that  $\mathbf{V}_P(0^-)=\mathbf{V}_{\pi}(0^-)$ for all $\pi\in\Pi_r$.
\end{definition}

The key step in the proof of Theorem \ref{thm1} is the following lemma, where we compare policy $P$ with any work-conserving policy $\pi$.

 \begin{lemma}\label{lem2}
Suppose that $\mathbf{V}_P(0^-)=\mathbf{V}_{\pi}(0^-)$ for all work conserving policies $\pi$, then for all $\mathcal{I}$
\begin{equation}\label{law9}
\begin{split}
 [\{\mathbf{V}_P(t),  t\in[0,\infty)\}\vert\mathcal{I}]\!\geq_{\text{st}}\! [\{\mathbf{V}_{\pi}(t), t\in[0,\infty)\}\vert\mathcal{I}].
 \end{split}
\end{equation}
\end{lemma}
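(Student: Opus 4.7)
The plan is to reduce the process-level stochastic ordering to an almost-sure, sample-path ordering via Proposition \ref{Theorem_6.B.30}, and then prove the latter by a forward induction on events using a coupling that exploits the memoryless property of the exponential service distribution. Concretely, I would construct, on a common probability space, independent Poisson processes $\{N_i(t)\}_{i=1}^{m}$ of rate $\mu$ whose points serve as the \emph{potential completion epochs} for each server. Because the exponential distribution is memoryless, we may let every work-conserving policy use $N_i$ to generate the actual completion times at server $i$ (the packet currently occupying server $i$ at a point of $N_i$ is declared completed, regardless of when it was assigned or whether it was just preempted in). The packet generation times $\{s_j\}$ and arrival times $\{a_j\}$ are shared across policies as well. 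With this coupling, events happen at identical epochs under $P$ and under any work-conserving $\pi$, and the stochastic ordering \eqref{law9} will follow from a pointwise ordering of $\mathbf{V}_P(t)$ and $\mathbf{V}_\pi(t)$.

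Next I would carry out an induction on the ordered event epochs $0=t_0<t_1<t_2<\cdots$, where each $t_k$ is either a packet arrival time or a point of some $N_i$. The induction hypothesis is that, componentwise,
\begin{equation*}
U_P(t_k)\geq U_\pi(t_k), \qquad \alpha_{[j],P}(t_k)\geq \alpha_{[j],\pi}(t_k),\ j=1,\ldots,m,
\end{equation*}
which by \eqref{sys_state_cond1} is exactly the required vector dominance. Two cases must be checked. (i) Arrival of packet $\ell$ with generation time $s_\ell$: neither $U_P$ nor $U_\pi$ changes; on the $\alpha$-coordinates, under $P$ the prmp-LGFS-R rule (Algorithm~\ref{alg1}) inserts $s_\ell$ into the sorted in-service multiset in place of the smallest entry whenever $s_\ell$ exceeds it, so the sorted vector only grows, while under $\pi$ any change to $\alpha_{[j],\pi}$ is bounded above by the sorted multiset $\{s_\ell\}\cup\{\alpha_{[j],\pi}(t_k^-)\}$, which is dominated by the corresponding multiset under $P$ by the hypothesis. (ii) A point of $N_i$ with server $i$ busy: the packet at server $i$ completes under both policies, so $U_P$ and $U_\pi$ jump by $\alpha_{i,P}(t_k^-)$ and $\alpha_{i,\pi}(t_k^-)$ respectively, both preserving $U_P\geq U_\pi$; then $P$ reassigns the freshest queued packet (if any) with replication, which under the coupled histories is at least as fresh as whatever $\pi$ installs on the freed server. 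If server $i$ is busy under one policy and idle under the other, the convention $\alpha_{i,\pi}(t)=U_\pi(t)$ and $U_\pi$'s update handle the bookkeeping.

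The main technical obstacle is the departure event combined with replication: when a packet completes under $P$, the $r$ freed servers are refilled by the freshest queued packet, which can reshuffle the sorted vector $(\alpha_{[1],P},\ldots,\alpha_{[m],P})$ in a non-obvious way, while under $\pi$ the replicas of a different (staler) packet may be freed and refilled with something arbitrary. I would handle this by an auxiliary rearrangement lemma: if two multisets $A\geq B$ in the componentwise-sorted sense and we form $A'=(A\setminus\{\min A\})\cup\{x\}$ and $B'=(B\setminus\{y\})\cup\{z\}$ with $x\geq z$ (which is guaranteed because $P$ always picks the freshest available packet in the system, a quantity that by the induction hypothesis dominates anything $\pi$ could pick), then the sorted order $A'\geq B'$ is preserved. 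Iterating this across the at most $r$ freed slots closes the induction step. Once the sample-path dominance $\mathbf{V}_P(t)\geq \mathbf{V}_\pi(t)$ for all $t\geq 0$ is established for every work-conserving $\pi$, Proposition~\ref{Theorem_6.B.30} yields \eqref{law9} for work-conserving $\pi$, and the standard argument that service idling only postpones the delivery of packets in i.i.d.\ exponential systems extends the comparison to all $\pi\in\Pi_r$.
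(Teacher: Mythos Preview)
Your high-level plan (coupling plus forward induction, invoking Proposition~\ref{Theorem_6.B.30}) matches the paper, but your specific coupling has a genuine gap. You couple by \emph{physical server}: the tick of $N_i$ completes whatever packet occupies server~$i$ under both $P$ and $\pi$. However, your induction hypothesis only controls the \emph{sorted} vector $(\alpha_{[1]},\ldots,\alpha_{[m]})$, not the per-server values $\alpha_{i,P},\alpha_{i,\pi}$. So the claim ``$U_P$ and $U_\pi$ jump by $\alpha_{i,P}(t_k^-)$ and $\alpha_{i,\pi}(t_k^-)$ respectively, both preserving $U_P\geq U_\pi$'' is unjustified and in fact false. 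Concretely, take $m=2$, $r=1$, two packets with generation times $1$ and $2$; under $P$ server~$1$ holds the packet with time $1$ and server~$2$ holds the packet with time $2$, while $\pi$ (also work-conserving) makes the opposite assignment. The sorted states coincide, yet when $N_1$ ticks you get $U_P'=1<2=U_\pi'$, breaking the induction. Your multiset ``rearrangement lemma'' cannot repair this, because the failure is already at the $U$-coordinate before any refilling occurs.

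The paper avoids this by coupling \emph{by rank}: when the packet with generation time $\alpha_{[l],P}$ is delivered under $P$, the packet with generation time $\alpha_{[l],\pi}$ is delivered under $\pi$ (this is a valid coupling by memorylessness, since each in-service packet completes at rate $\mu$ regardless of identity). Then $U_P'=\alpha_{[l],P}\geq\alpha_{[l],\pi}=U_\pi'$ follows directly from the hypothesis. The paper also replaces your multiset bookkeeping with a much cleaner structural observation: because $P$ always keeps the freshest available packets in service with the prescribed replication pattern, the sorted vector $(\alpha_{[1],P},\ldots,\alpha_{[m],P})$ is completely determined by $U_P$ and the policy-invariant set $S$ of arrived packets, namely $\alpha_{[i],P}=\max\{s_{[j]},U_P\}$ for the appropriate $j$; any other policy satisfies $\alpha_{[i],\pi}\leq\max\{s_{[j]},U_\pi\}$. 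Hence $U_P\geq U_\pi$ alone already forces $\alpha_{[i],P}\geq\alpha_{[i],\pi}$ for all $i$, so both the arrival and the departure steps reduce to checking the single inequality $U_P'\geq U_\pi'$.
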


 We use coupling and forward induction to prove Lemma \ref{lem2}.
For any work-conserving policy $\pi$, suppose that stochastic processes $\widetilde{\mathbf{V}}_{P}(t)$ and $\widetilde{\mathbf{V}}_{\pi}(t)$ have the same stochastic laws as $\mathbf{V}_{P}(t)$  and $\mathbf{V}_{\pi}(t)$. 
The state processes $\widetilde{\mathbf{V}}_{P}(t)$ and $\widetilde{\mathbf{V}}_{\pi}(t)$
are coupled in the following manner: If the packet with generation time $\widetilde{\alpha}_{[i],P}(t)$ is delivered at time $t$ as $\widetilde{\mathbf{V}}_{P}(t)$ evolves, then the packet with generation time $\widetilde{\alpha}_{[i],\pi}(t)$ is delivered at time $t$  as $\widetilde{\mathbf{V}}_{\pi}(t)$ evolves. 
Such a coupling is valid because the service times are exponentially distributed and thus memoryless. Moreover, policy $P$ and policy $\pi$ have identical packet generation times $(s_1, s_2, \ldots)$ and packet arrival times $(a_1, a_2, \ldots)$. According to Proposition \ref{Theorem_6.B.30}, if we can show 
\begin{equation}\label{main_eq}
\begin{split}
\mathbb{P}[\widetilde{\mathbf{V}}_{P}(t)\geq\widetilde{\mathbf{V}}_{\pi}(t), t\in[0,\infty)\vert\mathcal{I}]=1,
\end{split}
\end{equation}
then \eqref{law9} is proven.
To ease the notational burden, we will omit the tildes on the coupled versions in this proof and just use $\mathbf{V}_P(t)$ and $\mathbf{V}_{\pi}(t)$. Next, we use the following lemmas to prove \eqref{main_eq}:
\begin{lemma}\label{lem3'}
At any time $t$, suppose that the system state of policy $P$ is $\{U_P,  \alpha_{[1],P},\ldots, \alpha_{[m],P}\}$, and meanwhile the system state of policy $\pi$ is $\{U_{\pi}, \alpha_{[1],\pi},\ldots,\alpha_{[m],\pi}\}$. If
\begin{equation}\label{hyp1'}
U_P \geq U_{\pi},
\end{equation}
then,
\begin{equation}\label{law6'}
 \alpha_{[i],P} \geq \alpha_{[i],\pi}, \quad \forall i=1,\ldots,m.
\end{equation}
\end{lemma}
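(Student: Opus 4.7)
The plan is to reduce the componentwise inequality $\alpha_{[i],P}\geq\alpha_{[i],\pi}$ for all $i$ to a comparison of upper-level-set counts. For each threshold $y\in\mathbb{R}$, define $n_P(y)=|\{i:\alpha_{[i],P}\geq y\}|$ and $n_\pi(y)=|\{i:\alpha_{[i],\pi}\geq y\}|$. Because both $m$-vectors are sorted in decreasing order, $\alpha_{[i],P}\geq\alpha_{[i],\pi}$ for every $i$ is equivalent to $n_P(y)\geq n_\pi(y)$ for every $y$, so I fix an arbitrary $y$ and split cases on how it compares with $U_P$ and $U_\pi$.

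The convention embedded in \eqref{sys_state_cond1} --- that idle slots and slots serving stale packets are reset to $U_{\cdot}$ --- disposes of the easy ranges. If $y\leq U_\pi$ then both counts equal $m$; if $U_\pi<y\leq U_P$ then $n_P(y)=m\geq n_\pi(y)$ trivially. The only interesting range is $y>U_P\geq U_\pi$. In this range no packet with generation time $\geq y$ has been delivered under either policy (otherwise the corresponding $U$ would be $\geq y$), so every such packet that has arrived by time $t$ is still present (in service or in the queue) under both policies. Let $D(y,t)$ denote this common count of distinct packets; only these packets can contribute to $n_P(y)$ or $n_\pi(y)$.

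The heart of the argument is a structural bound on each count. On the $\pi$ side the max-replication-degree constraint yields the cheap bound $n_\pi(y)\leq d_\pi(y)\cdot r\leq D(y,t)\cdot r$, where $d_\pi(y)$ is the number of distinct in-service packets with generation time $\geq y$ under $\pi$. On the $P$ side, the LGFS-with-replication rule keeps the freshest in-system packets on the servers with maximum replication, so writing $k=\lfloor m/r\rfloor$: if $D(y,t)\geq k+1$ the top $k{+}1$ distinct packets under $P$ all have generation time $\geq y$ and jointly fill every slot (top $k$ with $r$ replicas each and the $(k{+}1)$-st absorbing the residual $m-kr$), giving $n_P(y)=m$; if $D(y,t)=n\leq k$, the top $n$ distinct packets under $P$ each carry $r$ replicas, giving $n_P(y)=nr$. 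Combining, $n_P(y)\geq\min\{m,D(y,t)\cdot r\}\geq n_\pi(y)$ in every sub-case, which closes the proof.

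The main obstacle is certifying that policy $P$ actually attains the claimed slot allocation at every time $t$, because this depends on how Algorithm~\ref{alg1} reshuffles replicas at each arrival, preemption, and departure event. I would handle this by a short invariant-maintenance argument on the algorithm: after every event the post-event state arranges the top $\min\{D_{\mathrm{tot}},k{+}1\}$ freshest in-system distinct packets on the servers in the described pattern, where $D_{\mathrm{tot}}$ is the total number of distinct in-system packets. A harmless subtlety is buffer-dropping under finite $B$: the packets contributing to $n_P(y)$ are always among the freshest in the system and are therefore in service rather than queued, so any queue-level dropping rule is irrelevant to the counts used in the argument.
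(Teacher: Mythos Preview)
Your proof is correct and rests on the same three ingredients as the paper's: the policy-invariant set of arrived packets, the structural property that prmp-LGFS-R places the freshest packets on servers at full replication, and the $r$-replica cap on any $\pi\in\Pi_r$. The difference is purely organizational. The paper writes an explicit index-by-index formula: with $k=\lfloor m/r\rfloor$ and $s_{[j]}$ the $j$-th largest generation time among packets that have arrived by time $t$, it asserts $\alpha_{[i],P}=\max\{s_{[j]},U_P\}$ and $\alpha_{[i],\pi}\le\max\{s_{[j]},U_\pi\}$ for $i\in\{(j{-}1)r{+}1,\ldots,jr\}$, then invokes $U_P\ge U_\pi$ directly. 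Your level-set formulation $n_P(y)\ge\min\{m,D(y,t)\,r\}\ge n_\pi(y)$ is the dual (threshold-counting) statement of the same comparison. The paper's route is shorter because it simply asserts the structural formula for $\alpha_{[i],P}$ without proving the invariant; your version is more explicit about why finite-buffer dropping cannot spoil the count (any packet with generation time $\ge y$ that is ever displaced to the queue forces all $m$ servers to be occupied by $\ge y$ packets, and no such packet can depart before time $t$ without contradicting $y>U_P$), which is a point the paper relegates to a footnote.
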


\begin{proof}
Let $S$ denote the set of packets that have arrived to the system at the considered time $t$. It is important to note that the set $S$ is invariant of the scheduling policy. If $S$ is empty, then since $\mathbf{V}_P(0^-)=\mathbf{V}_{\pi}(0^-)$, Lemma \ref{lem3'} follows directly. Thus, we assume that $S$ is not empty during the proof. We use $s_{[i]}$ to denote the $i$-th largest generation time of the packets in $S$. Define $k=\lfloor\frac{m}{r}\rfloor$. From the definition of the system state, condition \eqref{sys_state_cond1}, and the definition of policy $P$, we have
\begin{equation}\label{lnew1}
\begin{split}
 \alpha_{[i],P}=&\max\{s_{[j]},U_P\},\\&\forall i=(j-1)r+1,\ldots,jr,~\forall j=1,\ldots,k,\\
 \alpha_{[i],P}=&\max\{s_{[k+1]},U_P\},~\forall i=kr+1,\ldots,m.
 \end{split}
\end{equation}
Since policy $\pi$ is an arbitrary policy, the servers under policy $\pi$ may not process the packets with the largest generation times in the set $S$ or policy $\pi$ may replicate packets with lower generation times more than those that have larger generation times in the set $S$. Hence, we have
\begin{equation}\label{lnew2}
\begin{split}
 \alpha_{[i],\pi}\leq &\max\{s_{[j]},U_\pi\},\\&\forall i=(j-1)r+1,\ldots,jr,~\forall j=1,\ldots,k,\\
 \alpha_{[i],\pi}\leq &\max\{s_{[k+1]},U_\pi\},~\forall i=kr+1,\ldots,m.
 \end{split}
\end{equation}
where the maximization here follows from the definition of the system state. Since the set $S$ is invariant of the scheduling policy and $U_P \geq U_{\pi}$, this with \eqref{lnew1} and \eqref{lnew2} imply
\begin{equation}
 \alpha_{[i],P} \geq \alpha_{[i],\pi}, \quad \forall i=1,\ldots,m,
\end{equation}
which completes the proof.
\end{proof}

\begin{lemma}\label{lem3}
Suppose that under policy $P$, $\{U_P',  \alpha_{[1],P}',\ldots, \alpha_{[m],P}'\}$ is obtained by delivering a packet with generation time $\alpha_{[l],P}$ to the destination in the system whose state is $\{U_P,  \alpha_{[1],P},\ldots, \alpha_{[m],P}\}$. Further, suppose that under policy $\pi$, $\{U_{\pi}', \alpha_{[1],\pi}',\ldots,\alpha_{[m],\pi}'\}$ is obtained  by delivering a packet with generation time $\alpha_{[l],\pi}$ to the destination in the system whose state is $\{U_{\pi}, \alpha_{[1],\pi},\ldots,\alpha_{[m],\pi}\}$. If
\begin{equation}\label{hyp1}
\alpha_{[i],P} \geq \alpha_{[i],\pi}, \quad \forall  i=1,\ldots,m,
\end{equation}
then,
\begin{equation}\label{law6}
U_P' \geq U_{\pi}',  \alpha_{[i],P}' \geq \alpha_{[i],\pi}', \quad \forall i=1,\ldots,m.
\end{equation}
\end{lemma}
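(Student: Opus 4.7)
The plan is to prove the two conclusions of Lemma~\ref{lem3} by directly analyzing the effect of the delivery event on the system state, using the state-ordering constraint and the state convention implicit in \eqref{sys_state_cond1}. First I will establish $U_P' \geq U_\pi'$, and then argue that the sorted $\alpha$-vector itself is invariant under the delivery event, so that the hypothesis transfers unchanged.

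For the $U$-part, the ordering in \eqref{sys_state_cond1} gives $U_P \leq \alpha_{[l],P}$ and $U_\pi \leq \alpha_{[l],\pi}$. When the packet with generation time $\alpha_{[l],P}$ is delivered to the destination under $P$, the freshest received generation time becomes $U_P' = \max\{U_P, \alpha_{[l],P}\} = \alpha_{[l],P}$, and analogously $U_\pi' = \alpha_{[l],\pi}$. The hypothesis $\alpha_{[l],P} \geq \alpha_{[l],\pi}$ then yields $U_P' \geq U_\pi'$ immediately.

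For the $\alpha$-part, the key observation is that the delivery event by itself --- prior to any subsequent reassignment from the queue --- leaves the sorted $\alpha$-vector unchanged. The delivered packet together with its cancelled replicas occupies a block of consecutive sorted positions whose entries all equal $\alpha_{[l],P}$. After the delivery takes effect those servers become idle, and by the state convention every idle server's sorted value is reset to the current $U$, which is $U_P' = \alpha_{[l],P}$. Every entry in the affected block thus still equals $\alpha_{[l],P}$, and all the other entries are trivially unaffected. The identical argument applies under $\pi$, so $\alpha_{[i],P}' = \alpha_{[i],P}$ and $\alpha_{[i],\pi}' = \alpha_{[i],\pi}$ for every $i$, and the hypothesis then yields $\alpha_{[i],P}' \geq \alpha_{[i],\pi}'$ componentwise.

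The main obstacle --- and the detail that must be handled carefully --- is the role of the state convention in \eqref{sys_state_cond1} that pins an idle server's (or an out-of-date server's) sorted value to $U$. Without this convention, a naive removal of the entry $\alpha_{[l],P}$ from the sorted vector would shift lower-ranked entries upward and could destroy the desired componentwise ordering. The convention is precisely what makes the sorted vector invariant under the delivery step. Any subsequent refill from the queue performed by the prmp-LGFS-R algorithm is not part of this lemma; the restoration of the full domination after such a reassignment is handled separately, via the static Lemma~\ref{lem3'} together with the LGFS property of $P$, applied at the new time under the now-established inequality $U_P' \geq U_\pi'$.
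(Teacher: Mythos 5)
Your argument for $U_P' \geq U_{\pi}'$ is exactly the paper's first step, and your closing sentence---invoking Lemma~\ref{lem3'} at the new state under the established inequality $U_P' \geq U_{\pi}'$---is in fact the entirety of the paper's second step: the paper proves \eqref{law6} simply by noting $U_P' = \alpha_{[l],P} \geq \alpha_{[l],\pi} = U_{\pi}'$ and then applying Lemma~\ref{lem3'} to the post-event states, which automatically covers whatever reassignments occur at the delivery instant (the refill under $P$ as well as arbitrary reassignments under $\pi$). So in substance your proof coincides with the paper's; the detour through the ``bare delivery'' state is unnecessary, and the one claim in it that is not quite right is the invariance of the sorted vector. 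Under the convention in \eqref{sys_state_cond1}, every server whose packet has generation time below the new value $U_P' = \alpha_{[l],P}$ has its recorded entry raised to $U_P'$ at the delivery (for instance, a still-in-service packet with generation time strictly between $U_P$ and $\alpha_{[l],P}$), so $\alpha_{[i],P}' = \alpha_{[i],P}$ fails in general. The error is benign---entries only move up, to $\max\{\alpha_{[i],P},U_P'\}$ under $P$ and $\max\{\alpha_{[i],\pi},U_{\pi}'\}$ under $\pi$, so componentwise domination is preserved given \eqref{hyp1}---but the cleaner route, and the one the paper takes, is to say nothing about an intermediate state and read \eqref{law6} as a statement about the full post-event state, which Lemma~\ref{lem3'} yields immediately once $U_P' \geq U_{\pi}'$ is known. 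Note also that, as the lemma is used in the induction proving Lemma~\ref{lem2} (the state process is right-continuous), its conclusion must cover the post-reassignment state; declaring the refill ``not part of this lemma'' would leave that induction step incomplete were it not for your deferred appeal to Lemma~\ref{lem3'}, which is precisely the paper's own argument.
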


\begin{proof}
Since the packet with generation time $\alpha_{[l],P}$ is delivered under policy $P$, the packet with generation time $\alpha_{[l],\pi}$ is delivered under policy $\pi$, and $\alpha_{[l],P}\geq \alpha_{[l],\pi}$, we get
\begin{equation}\label{lem3_1}
\begin{split}
 U_P'= \alpha_{[l],P}\geq \alpha_{[l],\pi}=U_{\pi}'.
 \end{split}
\end{equation}
This, together with Lemma \ref{lem3'}, implies
\begin{equation}\label{lem3_6}
\begin{split}
&\alpha_{[i],P}'\geq \alpha_{[i],\pi}', \quad  i=1,\ldots,m.
 \end{split}
\end{equation}
Hence, (\ref{law6}) holds for any queue size $B\geq 0$, which completes the proof.
\end{proof}

\begin{lemma}\label{lem4}
Suppose that under policy $P$, $\{U_P', \alpha_{[1],P}',\ldots,\alpha_{[m],P}'\}$ is obtained by adding a packet to the system whose state is $\{U_P, \alpha_{[1],P},\ldots,\alpha_{[m],P}\}$. Further, suppose that under policy $\pi$, $\{U_{\pi}', \alpha_{[1],\pi}',\ldots,\alpha_{[m],\pi}'\}$ is obtained by adding a packet to the system whose state is $\{U_{\pi}, \alpha_{[1],\pi},\ldots,\alpha_{[m],\pi}\}$. If
\begin{equation}\label{hyp2}
U_P\geq U_{\pi},
\end{equation}
then
\begin{equation}\label{law2}
U_P' \geq U_{\pi}', \alpha_{[i],P}' \geq \alpha_{[i],\pi}', \quad \forall i=1,\ldots,m.
\end{equation}
\end{lemma}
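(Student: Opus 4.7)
The plan is to exploit the fact that Lemma \ref{lem3'} is a ``static'' statement: at any time, so long as $U_P \geq U_\pi$, one automatically gets the ordering $\alpha_{[i],P}\geq \alpha_{[i],\pi}$ for every $i$, because policy $P$ (prmp-LGFS-R) always devotes its servers to the freshest packets available (with as many replicas as the replication degree $r$ allows), while $\pi$ can only do worse in this respect. So I will reduce Lemma \ref{lem4} to Lemma \ref{lem3'} by showing that an arrival event does not disturb the hypothesis $U_P \geq U_\pi$.

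The first step is to observe that a packet arrival event only affects the set of packets residing in the queue/servers; it cannot change the freshest packet that has been delivered to the destination. Consequently, $U_P' = U_P$ and $U_\pi'=U_\pi$. Combined with the hypothesis $U_P\geq U_\pi$, this gives immediately
\begin{equation*}
U_P'\geq U_\pi',
\end{equation*}
which is the first half of the desired conclusion \eqref{law2}.

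The second step is to compare the server states after the arrival. Note that the set $S'$ of packets that have arrived to the system by the time just after the considered arrival event is the same under both $P$ and $\pi$, since arrival times $(a_1,a_2,\ldots)$ and generation times $(s_1,s_2,\ldots)$ are sample-path identical across policies. Now apply Lemma \ref{lem3'} to the post-arrival states: the hypothesis $U_P'\geq U_\pi'$ has just been established, so Lemma \ref{lem3'} delivers
\begin{equation*}
\alpha_{[i],P}'\geq \alpha_{[i],\pi}',\quad \forall i=1,\ldots,m,
\end{equation*}
which is exactly the second half of \eqref{law2}.

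I expect no real obstacle here, since all of the hard combinatorial work about replication and the LGFS priority rule has already been absorbed into Lemma \ref{lem3'}; the present lemma is essentially a bookkeeping step that verifies the invariant ``$U_P\geq U_\pi$'' is preserved across arrival events. The only subtlety worth stating carefully is that the arrival may trigger a preemption under $P$ (if the new packet is fresher than the one being served on the slowest-priority replica), but this only potentially \emph{increases} the $\alpha_{[i],P}$ values, so it cannot break the ordering obtained from Lemma \ref{lem3'}.
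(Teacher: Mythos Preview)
Your proposal is correct and follows essentially the same approach as the paper: observe that an arrival cannot change $U$ (so $U_P'=U_P\geq U_\pi=U_\pi'$), and then invoke Lemma~\ref{lem3'} at the post-arrival instant to obtain the server-state ordering. The paper's proof is just these two steps, without the additional commentary on preemption.
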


\begin{proof}
Since there is no packet delivery, we have
\begin{equation}\label{law4}
\begin{split}
 U_P'=U_P\geq U_{\pi}=U_{\pi}'.
 \end{split}
\end{equation}
This, together with Lemma \ref{lem3'}, implies
\begin{equation}\label{lem3_6}
\begin{split}
&\alpha_{[i],P}'\geq \alpha_{[i],\pi}', \quad  i=1,\ldots,m.
 \end{split}
\end{equation} 
Hence, (\ref{law2}) holds for any queue size $B\geq 0$, which completes the proof.
\end{proof}

\begin{proof}[ Proof of Lemma \ref{lem2}]
For any sample path, we have that $U_P (0^-) = U_{\pi}(0^-)$ and $\alpha_{[i],P}(0^-) = \alpha_{[i],\pi}(0^-)$ for $i=1,\ldots,m$. According to the coupling between the system state processes $\{\mathbf{V}_P(t),  t\in[0,\infty)\}$ and $\{\mathbf{V}_\pi(t),  t\in[0,\infty)\}$, as well as Lemma \ref{lem3} and \ref{lem4}, we get
\begin{equation}
\begin{split}
[U_P(t)\vert\mathcal{I}] \geq [U_{\pi}(t)\vert\mathcal{I}], [\alpha_{[i],P}(t)\vert\mathcal{I}] \geq [\alpha_{[i],\pi}(t)\vert\mathcal{I}],\nonumber
\end{split}
\end{equation}
holds for all $t\in[0,\infty)$ and $i=1,\ldots,m$. Hence, \eqref{main_eq} follows which implies \eqref{law9} by Proposition \ref{Theorem_6.B.30}.
This completes the proof.
\end{proof}

\begin{proof}[Proof of Theorem \ref{thm1}]
As a result of Lemma \ref{lem2}, we have
\begin{equation*}
\begin{split}
[\{U_{P}(t),  t\in[0,\infty)\}\vert\mathcal{I}]\geq_{\text{st}} [\{U_{\pi}(t), t\in[0,\infty)\}\vert\mathcal{I}],
 \end{split}
\end{equation*}
holds for all work-conserving policies $\pi$, which implies
\begin{equation}
\begin{split}
[\{\Delta_{P}(t), t\in[0,\infty)\}\vert\mathcal{I}]\!\!\leq_{\text{st}} \!\![\{\Delta_{\pi}(t), t\in[0,\infty)\}\vert\mathcal{I}],
 \end{split}
\end{equation}
holds for all work-conserving policies $\pi$.

For non-work-conserving policies, since the packet service times are \emph{i.i.d.} exponentially distributed, service idling only increases the waiting time of the packet in the system. Therefore, the age under non-work-conserving policies will be greater. As a result, we have 
\begin{equation}
\begin{split}
[\{\Delta_{P}(t),  t\in\!\![0,\infty)\}\vert\mathcal{I}]\leq_{\text{st}} [\{\Delta_{\pi}(t), t\in\!\![0,\infty)\}\vert\mathcal{I}], ~\forall \pi\in\Pi_{r}.\nonumber
 \end{split}
\end{equation}

Finally, \eqref{thm1eq2} follows directly from \eqref{thm1eq1} using the properties of stochastic ordering \cite{shaked2007stochastic}. This completes the proof.
\end{proof}

\section{Proof of Lemma \ref{lowerbound_lemma}}\label{Appendix_B}
The proof of Lemma \ref{lowerbound_lemma} is motivated by the proof idea of \cite[Lemma 1]{sun2016delay}. For notation simplicity, let policy $P$ represent the non-prmp-LGFS-R policy.  We need to define the following parameters:  

Define $\Gamma_{i}$ and $D_{i}$ as
\begin{equation}\label{Def_na1_1}
\begin{split}
\Gamma_{i}=\min\{v_{j}: s_j\geq s_i\},
\end{split}
\end{equation}
\begin{equation}\label{Def_na1_2}
D_{i}=\min\{c_{j}: s_j\geq s_i\}.
\end{equation}
where $\Gamma_{i}$ and $D_{i}$ are the smallest assignment time and completion time, respectively, of all packets that have generation times greater than that of packet $i$.
\begin{figure}
\centering \includegraphics[scale=0.35]{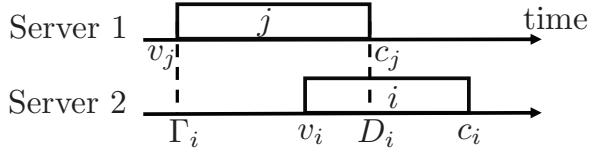}
\centering
\captionsetup{justification=justified, font={small,onehalfspacing}}
\caption{ An illustration of $v_i$, $c_i$, $\Gamma_i$, and $D_i$. There are 2 servers, and $s_j>s_i$. There is no packet with generation time greater than $s_i$ that is assigned to any of the servers before time $v_j$. Packet $j$ is assigned to Server 1 at time $v_j$ and delivered to the destination at time $c_j$; while packet $i$ is assigned to Server 2 at time $v_i$ and delivered to the destination at time $c_i$. The service starting time and completion time of packet $j$ are earlier than those of packet $i$. Thus, we have $\Gamma_i=v_j$ and $D_i=c_j$.}\label{Fig:parameter}
\vspace{-0.5cm}
\end{figure}
An illustration of these parameters is provided in Fig. \ref{Fig:parameter}. Suppose that there are $n$ update packets, where $n$ is an  arbitrary  positive  integer,  no  matter  finite  or  infinite. Define the vectors $\mathbf{\Gamma}=(\Gamma_{1}, \ldots, \Gamma_{n})$, and $\mathbf{D}=(D_{1}, \ldots, D_{n})$. All these quantities are functions of the scheduling policy $\pi$. 

Notice that we can deduce from \eqref{age} that the age process $\{\Delta_{\pi}(t), t\in [0,\infty)\}$ under any policy $\pi$ is an increasing function of $\mathbf{D}(\pi)$. Moreover, we can deduce from \eqref{lowerbound} that the process $\{\Delta_{P}^{\text{LB}}(t), t\in [0,\infty)\}$ is an increasing function of $\mathbf{\Gamma}(P)$. According to Proposition \ref{Theorem_6.B.16.(a)}, if we can show
\begin{equation}\label{eq1pflema1thm3}
[\mathbf{\Gamma}(P)\vert\mathcal{I}]\leq_{\text{st}} [\mathbf{D}(\pi)\vert\mathcal{I}],
\end{equation}
holds for all $\pi\in\Pi_{m}$, then \eqref{hypo_na2} is proven. Hence, \eqref{eq1pflema1thm3} is what we need to show. We pick an arbitrary policy $\pi\in\Pi_{m}$ and prove \eqref{eq1pflema1thm3} using Proposition \ref{Theorem_6.B.3} into two steps. 

\emph{Step 1}: Consider packet 1. Define $i^*=\argmin_i a_i$, where $s_{i^*}\geq s_1$. Since all servers are idle by time $a_{i^*}$ and policy $P$ is work-conserving policy, packet $i^*	$ will be assigned to a server under policy $P$ once it arrives. Thus, from \eqref{Def_na1_1}, we obtain
\begin{equation}\label{eq1nbu}
[\Gamma_1(P)\vert\mathcal{I}]=[v_{i^*}(P)\vert\mathcal{I}]=a_{i^*}.
\end{equation}
Under policy $\pi$, the completion times of all packets must be no smaller than $a_{i^*}$. Hence, we have
\begin{equation}
[c_{i}(\pi)\vert\mathcal{I}]\geq a_{i^*},~\forall i\geq 1.
\end{equation}
This with \eqref{Def_na1_2} imply
\begin{equation}\label{eq2nbu}
[D_{1}(\pi)\vert\mathcal{I}]\geq a_{i^*}.
\end{equation}
Combining \eqref{eq1nbu} and \eqref{eq2nbu}, we get 
\begin{equation}\label{NBU_proven1}
\begin{split}
[\Gamma_1(P)\vert\mathcal{I}]\leq [D_{1}(\pi)\vert\mathcal{I}].
\end{split}
\end{equation}

\emph{Step 2}: Consider a packet $j$, where $2\leq j \leq n$. We suppose that there is no packet with generation time greater than $s_j$ that has been delivered before packet $j$ under policy $\pi$.
We need to prove that 
\begin{equation}\label{NBU_to_proof}
\begin{split}
[\Gamma_{j}(P)\vert\mathcal{I}, \Gamma_{1}(P)=\gamma_1, \ldots, \Gamma_{j-1}(P)=\gamma_{j-1}]\\ 
\leq_{\text{st}}[D_{j}(\pi)\vert\mathcal{I}, D_{1}(\pi)=d_1, \ldots, D_{j-1}(\pi)=d_{j-1}]\\
\text{whenever}\quad \gamma_i\leq d_i, i=1, 2, \ldots, j-1. 
\end{split}
\end{equation}
For notational simplicity, define $\Gamma^{j-1}\triangleq \{\Gamma_{1}(P)=\gamma_1, \ldots, \Gamma_{j-1}(P)=\gamma_{j-1}\}$ and $D^{j-1}\triangleq \{D_{1}(\pi)=d_1, \ldots, D_{j-1}(\pi)=d_{j-1}\}$. We will show that there is at least one server under policy $P$ that can serve a new packet at a time that is stochastically smaller than the completion time of packet $j$ under policy $\pi$. At this time, there are two possible cases under policy $P$. One of them is that the idle server processes a packet with generation time greater than $s_j$. The other one is that the idle server processes a packet with generation time less than $s_j$ or there is no packet to be processed. We will show that \eqref{NBU_to_proof} holds in either case.
\begin{figure}
\includegraphics[scale=0.27]{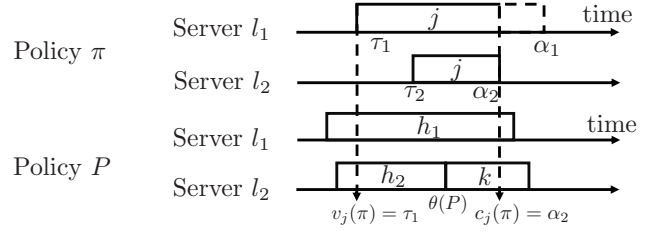}
\centering
\captionsetup{justification=justified, font={small,onehalfspacing}}
\caption{Illustration of packet assignments under policy $\pi$ and policy $P$. In policy $\pi$, two copies of packet $j$ are replicated on the server $l_1$ and server $l_2$ at time $\tau_1$ and $\tau_2$, where $v_j(\pi)=\min\{\tau_1,\tau_2\}=\tau_1$. Server $l_2$ completes one copy of packet $j$ at time $c_j(\pi)=\alpha_2$, server $l_1$ cancels its redundant copy of packet $j$ at time $c_j(\pi)$. Hence, the service duration of packet $j$ is $[v_j(\pi), c_j(\pi)]$ in policy $\pi$. In policy $P$, at least one of the servers $l_1$ and $l_2$ becomes idle before time $c_j(\pi)$. In this example, server $l_2$ becomes idle at time $\theta(P)<c_j(\pi)$ and a fresh packet $k$ with $s_k\geq s_j$ starts its service on server $l_2$ at time $\theta(P)$.}\label{Fig:Illustration1}
\vspace{-0.3cm}
\end{figure}

As illustrated in Fig. \ref{Fig:Illustration1}, suppose that $u$ copies of packet $j$ are replicated on the servers $l_1,l_2,\ldots,l_u$ at the time epochs $\tau_1, \tau_2,\ldots,\tau_u$ in policy $\pi$, where $v_{j}(\pi)=\min_{w=1,\ldots,u}\tau_w$.\footnote{If $u= 1$, then either there is no replication or policy $\pi$ decides not to replicate packet $j$.} In addition, suppose that server $l_w$ will complete serving its copy of packet $j$ at time $\alpha_w$ if there is no cancellation. Then, one of these $u$ servers will complete one copy of packet $j$ at time $c_j(\pi)=\min_{w=1,\ldots,u}\alpha_w$, which is the earliest among these $u$ servers. Hence, packet $j$ starts service at time $v_{j}(\pi)$ and completes service at time $c_j(\pi)$ in policy $\pi$. In policy $P$, let $h_w$ represent the index of the last packet that has been assigned to server $l_w$ before time $\tau_w$. Suppose that under policy $P$, server $l_w$ has spent $\chi_{l_w}$ $(\chi_{l_w}\geq 0)$ seconds on serving packet $h_w$ before time $\tau_w$. Let $R_{l_w}$ denote the remaining service time of server $l_w$ for serving packet $h_w$ after time $\tau_w$ in policy $P$. Let $X^{\pi}_{l_w}=\alpha_w-\tau_w$  denote the service time of one copy of packet $j$ in server $l_w$ under policy $\pi$ and $X^{P}_{l_w}=\chi_{l_w}+R_{l_w}$ denote the service time of packet $h_w$ in server $l_w$ under policy $P$. 
The CCDF of $R_{l_w}$ is given by
 \begin{equation}\label{remaining_service}
\begin{split}
\mathbb{P}[R_{l_w}> s]=\mathbb{P}[X^{P}_{l_w}-\chi_{l_w}>s\vert X^{P}_{l_w}>\chi_{l_w}].
\end{split}
\end{equation}
Because the packet service times are NBU, we can obtain that for all $s,\chi_{l_w}\geq 0$
 \begin{equation}\label{NBUproperty1}
\begin{split}
&\mathbb{P}[X^{P}_{l_w}-\chi_{l_w}>s\vert X^{P}_{l_w}>\chi_{l_w}]=\\&\mathbb{P}[X^\pi_{l_w}-\chi_{l_w}>s\vert X^\pi_{l_w}>\chi_{l_w}]\leq \mathbb{P}[X^\pi_{l_w}>s].
\end{split}
\end{equation}
By combining \eqref{remaining_service} and \eqref{NBUproperty1}, we obtain
\begin{equation}\label{R-X}
\begin{split}
R_{l_w}\leq_{\text{st}}X^\pi_{l_w}.
\end{split}
\end{equation}
Because the packet service times are independent across the servers, by Lemma 13 of \cite{sun2016delay}, $R_{l_1}, \ldots, R_{l_u}$ are mutually independent. By Proposition \ref{Theorem_6.B.16.(b)} and \eqref{R-X}, we can obtain
\begin{equation}\label{each_server_completion}
\min_{w=1,\ldots,u}\tau_w+R_{l_w}\leq_{\text{st}}\min_{w=1,\ldots,u}\tau_w+X^\pi_{l_w}=\min_{w=1,\ldots,u}\alpha_w.
\end{equation}
From \eqref{each_server_completion} we can deduce that at least one of the servers $l_1,\ldots,l_u$, say server $l_z$, becomes available to serve a new packet under policy $P$ at a time that is stochastically smaller than the time $c_j(\pi)=\min_{w=1,\ldots,u}\alpha_w$. Let $\theta(P)$ denote the time that server $l_z$ becomes available to serve a new packet in policy $P$. According to \eqref{each_server_completion}, we have 
 \begin{equation}
\begin{split}
[\theta(P)\vert \mathcal{I}, \Gamma^{j-1}]\leq_{\text{st}}[c_{j}(\pi)\vert \mathcal{I}, D^{j-1}]\\ \text{whenever}\quad \gamma_i\leq d_i, i=1, 2, \ldots, j-1.
\end{split}
\end{equation}
 At time $\theta(P)$, we have two possible cases under policy $P$:
 \begin{figure*}[!tbp]
 \centering
 \subfigure[Case 1: Packet $k$ is assigned to server $l_z$ after the completion of packet $h_z$.]{
  \includegraphics[scale=0.28]{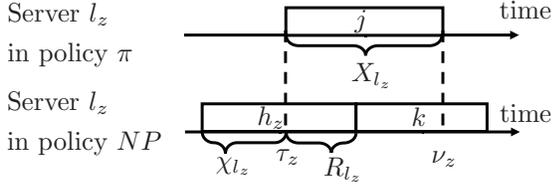}
   \label{a}
   }
 \subfigure[Case 2: Packet $k$ is assigned to server $l'$ before the completion of packet $h_z$.]{
  \includegraphics[scale=0.28]{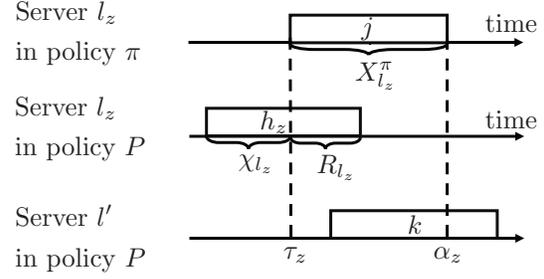}
   \label{b}
   }
   \caption{The possible cases to occur after the completion of packet $h_z$.}
   \label{fig:samp_path_NBU}
\end{figure*}

Case 1: A fresh packet $k$ is assigned at time $\theta(P)$ to server $l_z$ under policy $P$ such that $s_k\geq s_j$, as shown in Fig. \ref{a}. Hence, we obtain
\begin{equation}\label{NBUeq2}
\begin{split}
[v_{k}(P)\vert \mathcal{I}, \Gamma^{j-1}]=[\theta(P)\vert \mathcal{I}, \Gamma^{j-1}]\leq_{\text{st}}[c_{j}(\pi)\vert  \mathcal{I},D^{j-1}]\\ \text{whenever}\quad \gamma_i\leq d_i, i=1, 2, \ldots, j-1.
\end{split}
\end{equation}
Since $s_k\geq s_j$, \eqref{Def_na1_1} implies
\begin{equation}\label{NBUeq2'}
[\Gamma_{j}(P)\vert\mathcal{I}, \Gamma^{j-1}]\leq [v_{k}(P)\vert \mathcal{I}, \Gamma^{j-1}]
\end{equation}
Since there is no packet with generation time greater than $s_j$ that has been delivered before packet $j$ under policy $\pi$, \eqref{Def_na1_2} implies
\begin{equation}\label{NBUeq2''}
[D_{j}(\pi)\vert\mathcal{I}, D^{j-1}]= [c_{j}(\pi)\vert \mathcal{I}, D^{j-1}]
\end{equation}
 By combining \eqref{NBUeq2}, \eqref{NBUeq2'}, and \eqref{NBUeq2''}, \eqref{NBU_to_proof} follows.

Case 2: A packet with generation time smaller than $s_j$ is assigned to server $l_z$ or there is no packet assignment to server $l_z$ at time $\theta(P)$ under policy $P$. Since policy $P$ is a work-conserving policy, policy $P$ serves the packet with the largest generation time first, and the packet generation times $(s_1, s_2, \ldots)$ and arrival times $(a_1,a_2, \ldots)$ are invariant of the scheduling policy, a packet $k$ with $s_k\geq s_j$ must have been assigned to another server, call it server $l'$, before time $\theta(P)$, as shown in Fig. \ref{b}. Hence, we obtain
\begin{equation}\label{NBUeq3}
\begin{split}
[v_{k}(P)\vert \mathcal{I}, \Gamma^{j-1}]\leq [\theta(P)\vert  \mathcal{I},\Gamma^{j-1}]\leq_{\text{st}}[c_{j}(\pi)\vert \mathcal{I}, D^{j-1}]\\ \text{whenever}\quad \gamma_i\leq d_i, i=1, 2, \ldots, j-1.
\end{split}
\end{equation}
Similar to Case 1, we can use \eqref{Def_na1_1}, \eqref{Def_na1_2}, and \eqref{NBUeq3} to show that \eqref{NBU_to_proof} follows in this case.


It is important to note that if there is a packet $y$ with $s_y>s_j$ and $c_y(\pi)<c_j(\pi)$ (this may occur if packet $y$ preempts the service of packet $j$ under policy $\pi$ or packet $y$ arrives to the system before packet $j$), then we replace packet $j$ by packet $y$ in the arguments and equations from \eqref{NBU_to_proof} to \eqref{NBUeq3} to obtain
 \begin{equation}\label{packet_p1}
\begin{split}
[\Gamma_y(P)\vert \mathcal{I}, \Gamma^{j-1}]\leq_{\text{st}}[D_{y}(\pi)\vert \mathcal{I}, D^{j-1}]\\ \text{whenever}\quad \gamma_i\leq d_i, i=1, 2, \ldots, j-1.
\end{split}
\end{equation}
Observing that $s_y>s_j$, \eqref{Def_na1_1} implies
 \begin{equation}\label{packet_p3}
\begin{split}
[\Gamma_j(P)\vert \mathcal{I}, \Gamma^{j-1}]\leq[\Gamma_y(P)\vert \mathcal{I}, \Gamma^{j-1}].
\end{split}
\end{equation}
Since $c_y(\pi)<c_j(\pi)$ and $s_y>s_j$, \eqref{Def_na1_2} implies
\begin{equation}\label{packet_p2}
\begin{split}
[D_{j}(\pi)\vert \mathcal{I}, D^{j-1}]=[D_{y}(\pi)\vert \mathcal{I}, D^{j-1}].
\end{split}
\end{equation}
By combining \eqref{packet_p1}, \eqref{packet_p3}, and \eqref{packet_p2}, we can prove  \eqref{NBU_to_proof} in this case too. Now, substituting \eqref{NBU_proven1} and \eqref{NBU_to_proof} into Proposition \ref{Theorem_6.B.3}, \eqref{eq1pflema1thm3} is proven. This completes the proof.

\section{Proof of Theorem \ref{thmNBU1}}\label{Appendix_D}
For notation simplicity, let policy $P$ represent the non-prmp-LGFS-R policy. 
\begin{proof}[Proof of Theorem \ref{thmNBU1}.(a)]
We prove Theorem \ref{thmNBU1}.(a) into two steps:

\emph{Step 1}: We will show that the average gap between $\Delta_{P}^{\text{LB}}$ and $\Delta_{P}$ is upper bounded by $\mathbb{E}[X]$. Recall the definitions of $\Gamma_i$ and $D_i$ from \eqref{Def_na1_1} and \eqref{Def_na1_2}, respectively. Define $d_i(P)=D_i(P)-\Gamma_i(P)$. We know that there is a packet $k$ with $s_k\geq s_i$ that starts service at time $\Gamma_i(P)$ under policy $P$. Without loss of generality, suppose that a server $l$ is processing a copy of packet $k$. Because of replications, packet $k$ completes service under policy $P$ as soon as one of its replica completes service. Hence, packet $k$ is delivered at time $c_k(P)$ that is no later than $\Gamma_i(P)+X_l$ under policy $P$. This implies that $c_k(P)-\Gamma_i(P)\leq X_l$. From \eqref{Def_na1_2}, we can deduce that $D_i(P)-\Gamma_i(P)\leq c_k(P)-\Gamma_i(P)\leq X_l$. From this, we can obtain
\begin{equation}\label{dl_eq}
\mathbb{E}[d_i]\leq \mathbb{E}[X],~\forall i.
\end{equation}
We now proceed to characterize the gap between $\Delta_{P}^{\text{LB}}$ and $\Delta_{P}$. We use $\{G(t), t\in [0,\infty)\}$ to denote the gap process between $\Delta_{P}^{\text{LB}}$ and $\Delta_{P}$. The average gap is given by
\begin{equation}\label{gapeq1'}
[\bar{G}\vert\mathcal{I}]=\limsup_{T\rightarrow\infty}\frac{\mathbb{E}[\int_0^{T} G(t)dt]}{T}.
\end{equation}
Let $\tau_{i}$ denote the inter-generation time between packet $i$ and packet $i-1$ (i.e., $\tau_{i}=s_{i}-s_{i-1}$), where $\mathbf{\tau}=\{\tau_i, i\geq 1\}$. Note that, since the packet service times are independent of the packet generation process, we have $d_i$'s are independent of $\mathbf{\tau}$. Define $N(T)=\max\{i : s_i\leq T\}$ as the number of generated packets by time $T$. Note that $[0,s_{N(T)}]\subseteq [0,T]$, where the length of the interval $[0,s_{N(T)}]$ is $\sum_{i=1}^{N(T)}\tau_i$. Thus, we have
\begin{equation}\label{upper1}
\sum_{i=1}^{N(T)}\tau_i\leq T.
\end{equation}
The area defined by the integral in \eqref{gapeq1'} can be decomposed into a sum of disjoint geometric parts. Observing Fig.  \ref{Fig:Gap1}, the area can be approximated to the concatenation of the parallelograms $G_1, G_2, \ldots$ ($G_i$'s are highlighted in Fig. \ref{Fig:Gap1}). Note that the parallelogram $G_i$ results after the generation of packet $i$ (i.e., the gap that is corresponding to the packet $i$ occurs after its generation). Since the observing time $T$ is chosen arbitrary, when $T\geq s_{i}$, the total area of the parallelogram $G_i$ is accounted in the summation $\sum_{i=1}^{N(T)}G_i$, while it may not be accounted in the integral $\int_0^{T} G(t)dt$. This implies that 
\begin{equation}\label{upper2}
\sum_{i=1}^{N(T)}G_i\geq\int_0^{T} G(t)dt.
\end{equation}
Combining \eqref{upper1} and \eqref{upper2}, we get
\begin{equation}\label{gapeq1}
\frac{\int_0^{T} G(t)dt}{T}\leq\frac{\sum_{i=1}^{N(T)}G_i}{\sum_{i=1}^{N(T)}\tau_i}.
\end{equation} 
\begin{figure}
\includegraphics[scale=0.35]{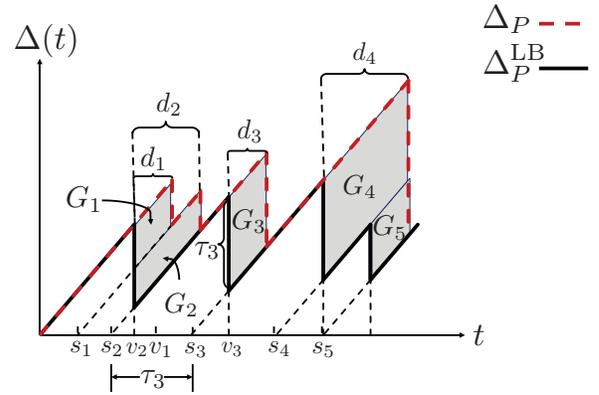}
\centering
\caption{The evolution of $\Delta_{P}^{\text{LB}}$ and $\Delta_{P}$ in a queue with 4 servers and $r=2$.}\label{Fig:Gap1}
\vspace{-0.3cm}
\end{figure}
Then, take conditional expectation given $\mathbf{\tau}$ and $N(T)$ on both sides of \eqref{gapeq1}, we obtain
\begin{equation}\label{gapeq3}
\begin{split}
&\frac{\mathbb{E}[\int_0^{T} G(t)dt\vert\mathbf{\tau}, N(T)]}{T}\leq \\&\frac{\mathbb{E}[\sum_{i=1}^{N(T)}G_i\vert\mathbf{\tau}, N(T)]}{\sum_{i=1}^{N(T)}\tau_i}=\frac{\sum_{i=1}^{N(T)}\mathbb{E}[G_i\vert\mathbf{\tau}, N(T)]}{\sum_{i=1}^{N(T)}\tau_i},
\end{split}
\end{equation}
where the second equality follows from the linearity of the expectation. From Fig. \ref{Fig:Gap1}, $G_i$ can be calculated as
\begin{equation}\label{gapeq2'}
G_i=\tau_{i}d_i.
\end{equation}
Substituting by \eqref{gapeq2'} into \eqref{gapeq3}, yields
\begin{equation}\label{gapeq4}
\begin{split}
&\frac{\mathbb{E}[\int_0^{T} G(t)dt\vert\mathbf{\tau}, N(T)]}{T}\leq\\&\frac{\sum_{i=1}^{N(T)}\mathbb{E}[\tau_{i}d_i\vert\mathbf{\tau}, N(T)]}{\sum_{i=1}^{N(T)}\tau_i}=\frac{\sum_{i=1}^{N(T)}\tau_{i}\mathbb{E}[d_i\vert\mathbf{\tau}, N(T)]}{\sum_{i=1}^{N(T)}\tau_i}.
\end{split}
\end{equation}
Note that $d_i$'s are independent of $\mathbf{\tau}$. Thus, we have $\mathbb{E}[d_i\vert\mathbf{\tau}, N(T)]=\mathbb{E}[d_i]\leq\mathbb{E}[X]$ for all $i$. Substituting this into \eqref{gapeq4}, yields
\begin{equation}\label{gapeq5}
\frac{\mathbb{E}[\int_0^{T} G(t)dt\vert\mathbf{\tau}, N(T)]}{T}\leq\frac{\sum_{i=1}^{N(T)}\tau_{i}\mathbb{E}[X]}{\sum_{i=1}^{N(T)}\tau_i}=\mathbb{E}[X],
\end{equation}
by the law of iterated expectations, we have
\begin{equation}\label{gap-char'}
\frac{\mathbb{E}[\int_0^{T} G(t)dt]}{T}\leq\mathbb{E}[X].
\end{equation}
Taking the $\limsup$ of both side of \eqref{gap-char'} when $T\rightarrow\infty$, yields
\begin{equation}\label{gap-char}
\limsup_{T\rightarrow\infty}\frac{\mathbb{E}[\int_0^{T} G(t)dt]}{T}\leq\mathbb{E}[X].
\end{equation}
Equation \eqref{gap-char} tells us that the average gap between $\Delta_{P}^{\text{LB}}$ and $\Delta_{P}$ is no larger than $\mathbb{E}[X]$.

\emph{Step 2}: We prove \eqref{NBUhypoeq1}. Since $\Delta_{P}^{\text{LB}}$ is a lower bound of the age process of policy $P$ and the average gap between $\Delta_{P}^{\text{LB}}$ and $\Delta_{P}$ is no larger than $\mathbb{E}[X]$, we obtain 
\begin{align}\label{lowerbound-relation1}
[\bar{\Delta}_{P}^{\text{LB}}\vert\mathcal{I}]\leq [\bar{\Delta}_{P}\vert\mathcal{I}]\leq [\bar{\Delta}_{P}^{\text{LB}}\vert\mathcal{I}] +\mathbb{E}[X],
\end{align}
where $\bar{\Delta}_{P}^{\text{LB}}=\limsup_{T\rightarrow\infty}\frac{\mathbb{E}[\int_0^{T} \Delta_{P}^{\text{LB}}(t)dt]}{T}$. From Lemma \ref{lowerbound_lemma}, we have for all $\mathcal{I}$ satisfying $B\geq 1$, and $\pi\in\Pi_{m}$
\begin{equation}
[\{\Delta_{P}^{\text{LB}}(t), t\in[0,\infty)\}\vert\mathcal{I}]\leq_{\text{st}}[\{\Delta_\pi(t), t\in [0,\infty)\}\vert\mathcal{I}],
\end{equation}
which implies that 
\begin{equation}
[\bar{\Delta}_{P}^{\text{LB}}\vert\mathcal{I}]\leq[\bar{\Delta}_{\pi}\vert\mathcal{I}],
\end{equation}
holds for all $\pi\in\Pi_{m}$. As a result, we get
\begin{equation}\label{thmnbu_1R}
[\bar{\Delta}_{P}^{\text{LB}}\vert\mathcal{I}]\leq\min_{\pi\in\Pi_m}[\bar{\Delta}_{\pi}\vert\mathcal{I}].
\end{equation}
Since policy non-prmp-LGFS-R is a feasible policy, we get
\begin{equation}\label{thmnbu_1RR}
\min_{\pi\in\Pi_m} [\bar{\Delta}_{\pi}\vert\mathcal{I}]\leq [\bar{\Delta}_{P}\vert\mathcal{I}].
\end{equation}
Combining  \eqref{lowerbound-relation1}, \eqref{thmnbu_1R}, and \eqref{thmnbu_1RR}, we get
\begin{equation}
\min_{\pi\in\Pi_m} [\bar{\Delta}_{\pi}\vert\mathcal{I}]\leq [\bar{\Delta}_{P}\vert\mathcal{I}]\leq \min_{\pi\in\Pi_m} [\bar{\Delta}_{\pi}\vert\mathcal{I}]+\mathbb{E}[X],
\end{equation} 
which completes the proof. 
\end{proof}

\begin{proof}[Proof of Theorem \ref{thmNBU1}.(b)]
The proof of part (b) is similar to that of part (a). Define $d_i(P)=D_i(P)-\Gamma_i(P)$. We know that there is a packet $k$ with $s_k\geq s_i$ that starts service at time $\Gamma_i(P)$ under policy $P$. Since $m=ar$ for a positive integer $a$, packet $k$ is processed by $r$ servers in policy $P$. Let $\mathcal{S}_{k}\subseteq \{1,\ldots,m\}$ be the set of servers that process packet $k$ under policy $P$, which satisfies $\vert\mathcal{S}_{k}\vert=r$.  Because of replications, packet $k$ completes service under policy $P$ as soon as one of its replica is completes service. Hence, packet $k$ is delivered at time $c_k(P)=\Gamma_{i}(P) + \min_{l\in\mathcal{S}_{k}}X_{l}$ under policy $P$. This implies that $c_k(P)-\Gamma_{i}(P)= \min_{l\in\mathcal{S}_{k}}X_{l}$. From \eqref{Def_na1_2}, We can deduce that $D_i(P)-\Gamma_i(P)\leq c_k(P)-\Gamma_{i}(P)= \min_{l\in\mathcal{S}_{k}}X_{l}$. From this, we can obtain
\begin{equation}
\mathbb{E}[d_i]\leq\mathbb{E}[\min_{l=1,\ldots,r}X_{l}],\forall i.
\end{equation}
Similar to part a, we use $\{G(t), t\in [0,\infty)\}$ to denote the gap process between $\Delta_{P}^{\text{LB}}$ and $\Delta_{P}$. The average gap is given by
\begin{equation}
[\bar{G}\vert\mathcal{I}]=\limsup_{T\rightarrow\infty}\frac{\mathbb{E}[\int_0^{T} G(t)dt]}{T}.
\end{equation}
Following the same steps as in the proof of part (a), we can show that
\begin{equation}\label{gap-charR}
\limsup_{T\rightarrow\infty}\frac{\mathbb{E}[\int_0^{T} G(t)dt]}{T}\leq\mathbb{E}[\min_{l=1,\ldots,r}X_{l}].
\end{equation}
Equation \eqref{gap-charR} tells us that the average gap between $\Delta_{P}^{\text{LB}}$ and $\Delta_{P}$ is no larger than $\mathbb{E}[\min_{l=1,\ldots,r}X_{l}]$. This and the fact that  $\Delta_{P}^{\text{LB}}$ is a lower bound of the age process of policy $P$, imply
\begin{align}\label{lowerbound-relation1b}
[\bar{\Delta}_{P}^{\text{LB}}\vert\mathcal{I}]\leq [\bar{\Delta}_{P}\vert\mathcal{I}]\leq [\bar{\Delta}_{P}^{\text{LB}}\vert\mathcal{I}] +\mathbb{E}[\min_{l=1,\ldots,r}X_{l}].
\end{align} 
Similar to part (a), we can use \eqref{lowerbound-relation1b} with Lemma \ref{lowerbound_lemma} to show that
\begin{equation}
\min_{\pi\in\Pi_m} [\bar{\Delta}_{\pi}\vert\mathcal{I}]\leq [\bar{\Delta}_{P}\vert\mathcal{I}]\leq \min_{\pi\in\Pi_m} [\bar{\Delta}_{\pi}\vert\mathcal{I}]+\mathbb{E}[\min_{l=1,\ldots,r}X_{l}],
\end{equation} 
which completes the proof. 
\end{proof}

\section{Proof of Theorem \ref{thm2th}}\label{Appendix_C}
We follow the same proof technique of Theorem \ref{thm1}. We start by comparing policy $P$ (prmp-LGFS-R policy) with an arbitrary work-conserving policy $\pi$. For this, we need to define the system state of any policy $\pi$:

\begin{definition}  At any time $t$, the system state of policy $\pi$ is specified by  $H_{\pi}(t)=( N_\pi(t), \gamma_\pi(t))$, where $N_\pi(t)$ is the total number of distinct packets in the system at time $t$ (excluding packet replicas). Define $\gamma_\pi(t)$ as the total number of distinct packets that are delivered to the destination at time $t$. Let $\{H_{\pi}(t), t\in[0,\infty)\}$ be the state process of policy $\pi$, which is assumed to be right-continuous. 
\end{definition}

To prove Theorem \ref{thm2th}, we will need the following lemma.
\begin{lemma}\label{lem5}
For any work-conserving policy $\pi$, if $H_{P}(0^-)=H_{\pi}(0^-)$ and $B=\infty$, then $[\{{H}_{P}(t),  t\in[0,\infty)\}\vert\mathcal{I}]$ and $[\{H_{\pi}(t), t\in[0,\infty)\}\vert\mathcal{I}]$ are of the same distribution.
\end{lemma}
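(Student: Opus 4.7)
The plan is to couple the two state processes $\{H_P(t), t\in[0,\infty)\}$ and $\{H_\pi(t), t\in[0,\infty)\}$ and show they are almost surely equal under the coupling, from which equality in distribution follows. Since the packet generation times $(s_1,s_2,\ldots)$ and arrival times $(a_1,a_2,\ldots)$ are invariant of the scheduling policy, the arrival events happen at the same instants in both processes; in particular, each arrival increments $N$ by $1$ under both $P$ and $\pi$ (the infinite buffer $B=\infty$ prevents any drops). The core of the coupling therefore lies in aligning the service completion epochs across the two policies, and this is where memorylessness of the exponential distribution is exploited.

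The key observation is that under \emph{any} work-conserving policy $\sigma\in\Pi_m$, whenever $N_\sigma(t)>0$ all $m$ servers are busy, either serving distinct packets or serving replicas of fewer than $m$ distinct packets (subject to replication degree $r$). If $k$ distinct packets are in service with allocation $(r_1,\ldots,r_k)$ satisfying $\sum_i r_i = m$, then by the properties of i.i.d. exponential service times, the completion time of the $i$-th distinct packet is the minimum of $r_i$ independent $\mathrm{Exp}(\mu)$ random variables, hence $\mathrm{Exp}(r_i\mu)$; therefore the time to the next \emph{distinct} packet completion is
\begin{equation*}
\min_{i=1,\ldots,k}\mathrm{Exp}(r_i\mu) \;=_{\text{st}}\; \mathrm{Exp}\!\left(\sum_{i=1}^k r_i\mu\right) \;=\;\mathrm{Exp}(m\mu).
\end{equation*}
The total rate of completing a distinct packet is thus $m\mu$, independently of how the work-conserving policy assigns and replicates packets.

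Using this, I would couple the two processes by generating a single Poisson process $\{N^*(t)\}$ of rate $m\mu$ on a common probability space, and declaring that at each event time $t$ of $N^*$ with $N_P(t^-)>0$ (equivalently $N_\pi(t^-)>0$, which I will verify inductively), both $H_P$ and $H_\pi$ register a distinct-packet completion at time $t$. By memorylessness this construction yields coupled processes $\widetilde H_P$ and $\widetilde H_\pi$ with the same laws as $H_P$ and $H_\pi$. A forward induction on the ordered sequence of arrival and $N^*$-event times then gives $\widetilde H_P(t)=\widetilde H_\pi(t)$ for all $t\geq 0$: starting from the common initial state, every arrival increments both copies of $N$ by $1$, and every coupled completion event decrements both copies of $N$ by $1$ and increments both copies of $\gamma$ by $1$; in particular the two processes remain synchronized in which time intervals have $N>0$, so the coupling is well-defined throughout.

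The main subtlety I anticipate is verifying that the rate-$m\mu$ coupling is legitimate under \emph{arbitrary} (even non-causal) work-conserving policies in $\Pi_m$, including those that freely reassign replicas after each arrival or completion. The memoryless property is what rescues this: any reassignment of residual service work among the $m$ servers can be re-expressed as starting fresh $\mathrm{Exp}(\mu)$ service times, so the aggregate distinct-completion rate $m\mu$ is preserved across every reassignment. Once this invariance is argued carefully, the coupling plus forward induction yields the equality in distribution claimed by the lemma.
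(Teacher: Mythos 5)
Your high-level route is the same as the paper's: couple the two systems so that arrivals (which are policy-invariant) and distinct-packet completions occur at the same instants, then run a forward induction over these events, using $B=\infty$ so that every arrival increments both counts and memorylessness so that completions can be synchronized; your induction steps are exactly the paper's Lemmas~\ref{lem6} and~\ref{lem7}. The step that fails is the explicit construction of the coupling through a single rate-$m\mu$ Poisson clock. It rests on the claim that under any work-conserving policy all $m$ servers are busy whenever $N_\sigma(t)>0$, so that the aggregate distinct-completion rate is always $m\mu$. That claim does not follow from the paper's definition of work-conserving, which only forbids idling while packets are \emph{waiting in the queue}: once every packet in the system is in service, servers may idle, and the replication cap can even force them to --- if $k$ distinct packets are in service and $kr<m$, at most $kr$ servers can be busy. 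In particular, policy $P$ itself (prmp-LGFS-R with $r<m$) keeps only $r$ servers busy when a single packet is in the system and the queue is empty, so its distinct-delivery rate in that state is $r\mu$, not $m\mu$; a process forced to register completions at the epochs of a rate-$m\mu$ Poisson process therefore does not have the law of $H_P$, and the appeal to a common clock collapses precisely in the regime where the two policies may occupy different numbers of servers.

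What is missing is either an argument that the two coupled systems always keep the same number of busy servers, or a per-server coupling of the kind the paper sketches: by memorylessness each busy server completes whatever it currently carries at rate $\mu$, and by continuity of the service distribution each such completion delivers exactly one distinct packet, which is the content of Lemma~\ref{lem6}. Your aggregation step $\min_{i}\mathrm{Exp}(r_i\mu)=_{\text{st}}\mathrm{Exp}(m\mu)$ is correct and suffices whenever the queue is nonempty, since work-conservation then does force all $m$ servers busy under both policies; it is the empty-queue, partially-loaded states that your coupling does not cover, and that is exactly where the lemma is delicate.
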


 Suppose that $\{\widetilde{H}_P(t), t\in[0,\infty)\}$ and $\{\widetilde{H}_{\pi}(t), t\in[0,\infty)\}$ are stochastic processes having the same stochastic laws as $\{H_P(t), t\in[0,\infty)\}$ and $\{H_{\pi}(t), t\in[0,\infty)\}$. Now, we couple  the packet delivery times during the evolution of $\widetilde{H}_P(t)$ to be identical with the packet delivery times during the evolution of $\widetilde{H}_{\pi}(t)$. Such a coupling is valid because the service times are exponentially distributed, and hence, memoryless. 
 
To ease the notational burden, we will omit the tildes henceforth on the coupled versions and just use $\{H_P(t)\}$ and $\{H_{\pi}(t)\}$. The following two lemmas are needed to prove Lemma \ref{lem5}:

\begin{lemma}\label{lem6}
 Suppose that under policy $P$, $\{N_P', \gamma_P'\}$ is obtained by delivering a packet to the destination in the system whose state is $\{N_P, \gamma_P\}$. Further, suppose that under policy $\pi$, $\{N_\pi', \gamma_\pi'\}$ is obtained  by delivering a packet to the destination in the system whose state is $\{N_\pi, \gamma_\pi\}$. If
\begin{equation*}
N_P = N_{\pi},  \gamma_{P} = \gamma_{\pi},
\end{equation*}
then
\begin{equation}\label{law6_2}
N_P' = N_{\pi}',  \gamma_{P}' = \gamma_{\pi}'.
\end{equation}
\end{lemma}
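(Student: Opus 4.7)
The plan is to argue this lemma directly from the definition of the state components $N$ and $\gamma$ together with the coupling of delivery events. The statement is essentially a one-step bookkeeping fact: a single delivery event updates each component by a fixed amount (independent of the policy), so equality of the pre-transition states propagates to equality of the post-transition states.

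First, I would recall that $N_\pi(t)$ counts \emph{distinct} packets in the system (excluding replicas) and $\gamma_\pi(t)$ counts distinct packets delivered so far. When the delivery event occurs at the considered time, a single distinct packet $p$ is removed from the system (any remaining replicas of $p$ under a replicative policy are cancelled instantaneously and do not contribute an additional decrement to $N$, since they were never counted as distinct packets in the first place). Hence, under either policy $P$ or $\pi$ the delivery produces
\begin{equation*}
N_P' = N_P - 1, \qquad N_\pi' = N_\pi - 1,
\end{equation*}
and, symmetrically, one additional distinct packet reaches the destination, so
\begin{equation*}
\gamma_P' = \gamma_P + 1, \qquad \gamma_\pi' = \gamma_\pi + 1.
\end{equation*}

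Combining these updates with the hypothesis $N_P = N_\pi$ and $\gamma_P = \gamma_\pi$ immediately yields $N_P' = N_\pi'$ and $\gamma_P' = \gamma_\pi'$, which is \eqref{law6_2}. The coupling of delivery times across the two policies (justified earlier in the proof by the memoryless property of the exponential service distribution) is what makes it legitimate to compare a delivery event in system $P$ against a simultaneous delivery event in system $\pi$; without it, one could not line up the two transitions in the first place.

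There is no genuine obstacle here beyond being careful about the convention that replicas of a packet are not counted separately in $N$, so that cancelling the other replicas when one completes service does not produce extra decrements. Once that convention is stated, the lemma follows by direct substitution and will serve as the induction step for packet-departure events in the forward-induction proof of Lemma~\ref{lem5}; a companion lemma will handle the arrival step, after which Lemma~\ref{lem5} follows by coupling as in the proof of Lemma~\ref{lem2}.
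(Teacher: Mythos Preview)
Your proposal is correct and essentially identical to the paper's proof: both argue that a single delivery decrements $N$ by one and increments $\gamma$ by one under either policy, then invoke the hypothesis. The only minor addition in the paper is an explicit remark that, because the service times are \emph{i.i.d.} with a continuous CCDF, two servers complete simultaneously with probability zero, which rules out the edge case of two distinct packets being delivered at the same instant; you handle the replica side of this implicitly via the ``distinct packets'' convention, which suffices.
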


\begin{proof}
Because the packet service times are \emph{i.i.d.} and the CCDF $\bar{F}$ is continuous, the probability for any two servers to complete their packets at the same time is zero. Therefore, in policy $P$, if one copy of a replicated packet is completed on a server, the remaining replicated copies of this packet are still being processed on the other servers; these replicated packet copies are cancelled immediately and a new packet is replicated on these servers. Since there is a packet delivery, we have
\begin{equation*}
\begin{split}
&N_P'=N_P-1=N_\pi-1=N_\pi',\\
&\gamma_P'=\gamma_P+1=\gamma_\pi+1=\gamma_\pi'.
\end{split}
\end{equation*}
Hence, (\ref{law6_2}) holds, which complete the proof.
\end{proof}

\begin{lemma}\label{lem7}
Suppose that under policy $P$, $\{N_P', \gamma_P'\}$ is obtained by adding a new packet to the system whose state is $\{N_P, \gamma_P\}$. Further, suppose that under policy $\pi$, $\{N_\pi', \gamma_\pi'\}$ is obtained by adding a new packet to the system whose state is $\{N_\pi, \gamma_\pi\}$. If
\begin{equation*}
N_P = N_{\pi},  \gamma_{P} = \gamma_{\pi},
\end{equation*}
then
\begin{equation}\label{law7_2}
N_P' = N_{\pi}',  \gamma_{P}' = \gamma_{\pi}'.
\end{equation}
\end{lemma}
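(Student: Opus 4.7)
The plan is short: the statement is essentially immediate once one observes that the update event in question is a packet arrival (not a service completion) and that the assumption $B=\infty$ ensures every arriving packet is admitted under either policy without triggering any drop, replacement, or delivery. First, I would note that a pure arrival event does not deliver any packet to the destination under either policy $P$ or policy $\pi$, so the delivered-packet counts are unchanged: $\gamma_P' = \gamma_P$ and $\gamma_\pi' = \gamma_\pi$. Combined with the hypothesis $\gamma_P = \gamma_\pi$ this gives $\gamma_P' = \gamma_\pi'$ at once.

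Next, I would argue that the arrival adds exactly one new distinct packet (not a replica, but a newly generated packet) to the system under either policy, so that $N_P' = N_P + 1$ and $N_\pi' = N_\pi + 1$, whence the hypothesis $N_P = N_\pi$ yields $N_P' = N_\pi'$. Under policy $P$ (prmp-LGFS-R) the new arrival may preempt an ongoing service if it is fresher, but since $B=\infty$ the preempted packet is stored back into the queue and continues to be counted among the distinct packets present; likewise, preemption never removes a replica by way of completion, only through cancellation of copies whose service is deferred. The same is true for an arbitrary work-conserving $\pi$: with an unbounded buffer, an arrival can neither displace nor discard a packet already in the system.

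I do not anticipate any real obstacle here; the only subtlety worth explicit mention is confirming that under policy $P$ a preempted packet continues to be counted in $N_P$ (rather than being discarded), which follows from the description of prmp-LGFS-R in Algorithm~\ref{alg1} together with $B=\infty$. Combining with Lemma~\ref{lem6}, a straightforward forward induction over the alternating sequence of arrival and departure events will then establish Lemma~\ref{lem5}, after which Theorem~\ref{thm2th} follows because non-work-conserving policies only postpone deliveries relative to work-conserving ones under i.i.d.\ exponential service times.
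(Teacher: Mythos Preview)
Your proposal is correct and matches the paper's own proof essentially line for line: the paper observes that $B=\infty$ means no packet is dropped so $N_P'=N_P+1=N_\pi+1=N_\pi'$, and that no delivery occurs at an arrival so $\gamma_P'=\gamma_P=\gamma_\pi=\gamma_\pi'$. Your additional remarks about preemption not discarding packets and the downstream induction toward Lemma~\ref{lem5} and Theorem~\ref{thm2th} are accurate and in keeping with the paper's overall argument.
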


\begin{proof}
Because $B=\infty$, no packet is dropped in policy $P$ and policy $\pi$.
Since there is a new added packet to the system, we have
\begin{equation*}
\begin{split}
N_P'=N_P+1=N_\pi+1=N_\pi'.
\end{split}
\end{equation*}
Also, there is no packet delivery, hence 
\begin{equation*}
\begin{split}
\gamma_P'=\gamma_P=\gamma_\pi=\gamma_\pi'.
\end{split}
\end{equation*}
Thus, (\ref{law7_2}) holds, which complete the proof.
\end{proof}

\begin{proof}[ Proof of Lemma \ref{lem5}] 
For any sample path, we have that $N_P (0^-) = N_{\pi}(0^-)$ and $\gamma_{P}(0^-) = \gamma_{\pi}(0^-)$. According to the coupling between the system state processes $\{H_P(t),  t\in[0,\infty)\}$ and $\{H_\pi(t),  t\in[0,\infty)\}$, as well as Lemma \ref{lem6} and \ref{lem7}, we get
\begin{equation}
\begin{split}
[N_P(t)\vert\mathcal{I}] = [N_{\pi}(t)\vert\mathcal{I}], [\gamma_{P}(t)\vert\mathcal{I}] = [\gamma_{\pi}(t)\vert\mathcal{I}], \nonumber
\end{split}
\end{equation}
holds for all $t\in[0,\infty)$. This implies that $[\{{H}_P(t),  t\in[0,\infty)\}\vert\mathcal{I}]$ and $[\{H_{\pi}(t), t\in[0,\infty)\}\vert\mathcal{I}]$ are of the same distribution, which completes the proof.
%
%
\end{proof}

\begin{proof}[Proof of Theorem \ref{thm2th}]
As a result of Lemma \ref{lem5}, $[\{\gamma_{P}(t),  t\in[0,\infty)\}\vert\mathcal{I}]$ and $[\{\gamma_\pi(t), t\in[0,\infty)\}\vert\mathcal{I}]$ are of the same distribution. This implies that policy $P$ and policy $\pi$ have the same throughput performance. Also, from Lemma \ref{lem5}, we have that $[\{N_{P}(t),  t\in[0,\infty)\}\vert\mathcal{I}]$ and $[\{N_\pi(t), t\in[0,\infty)\}\vert\mathcal{I}]$ are of the same distribution. Hence, policy $P$ and policy $\pi$ have the same delay performance. These imply that policy $P$ has the same throughput and delay performance as any work-conserving policy. 

Finally, since the service times are \emph{i.i.d.}, service idling only increases the waiting time of the packet in the system. Therefore, the throughput and delay performance under non-work-conserving policies will be worse. As a result, the prmp-LGFS-R policy is throughput-optimal and delay-optimal among all policies in $\Pi_m$ (indeed, all work-conserving policies with infinite buffer size $B=\infty$ have the same throughput and delay performance, and hence, they are throughput-optimal and delay-optimal).
\end{proof}
\bibliographystyle{IEEEbib}
\bibliography{MyLib}

\end{document}